%% file: main.tex
\documentclass[11pt]{article}

\usepackage{hdb_macros}
\usepackage{fullpage}
\usepackage{thm-restate}

\usepackage[multiuser,inline,nomargin]{fixme}
\fxusetheme{color}
\FXRegisterAuthor{h}{eh}{\color{blue}Huck}
\FXRegisterAuthor{k}{ek}{\color{red}Kyle}

\newcommand{\basis}{B}
\renewcommand{\O}{\ensuremath{\mathsf{O}}}

\newcommand{\SDLIP}{\problem{SDLIP}}

\newcommand{\SearchSDPCE}{\problem{Search}\text{-}\problem{SDPCE}}
\newcommand{\LIP}{\problem{LIP}}

\newcommand{\ZLIP}{\Z\problem{LIP}}
\newcommand{\SPD}{S^{>0}}

\newcommand{\PCE}{\problem{PCE}}
\newcommand{\GI}{\problem{GI}}

\renewcommand{\C}{\mathcal{C}}
\newcommand{\cP}{\mathcal{P}}
\newcommand{\permequiv}{\stackrel{P}{\approx}}
\DeclareMathOperator{\wt}{wt}

\DeclareMathOperator{\clen}{len}

\DeclareMathOperator{\supp}{supp}

\renewcommand{\NL}{N_L}
\renewcommand{\NC}{N_C}
\newcommand{\cZ}{\mathcal{Z}}
\DeclareMathOperator{\adj}{adj}
\renewcommand{\len}{\mathrm{len}}

\begin{document}

\title{On (Non-)Isomorphism of Self-Dual Lattices and Codes}
\author{Huck Bennett\thanks{University of Colorado Boulder. \email{huck.bennett@colorado.edu}. Supported in part by NSF Award No. 2432132.} \and Kyle Fridberg\thanks{Cornell University. \email{kof4@cornell.edu}. Supported by an NSF Graduate
Research Fellowship under Grant No. DGE–2139899}}
\date{\today}

\maketitle

\input{abstract}

\newpage

\input{introduction}

\input{prelims}
\input{SDLIP}

\input{SDPCE}

\bibliography{unimod-iso}
\bibliographystyle{alpha}

\end{document}

%% file: abstract.tex
\begin{abstract}
A recent line of work motivated by cryptographic applications has studied the complexity of the \emph{Lattice Isomorphism Problem} ($\LIP$).
In this work, we study $\LIP$ on \emph{self-dual} lattices $\lat \subset \R^n$, which appear naturally in many applications.
Our main results are a $2^{n/2 + o(n)}$-time randomized algorithm for $\LIP$ and a $\coNP$ protocol for $\LIP$ on a broad class of self-dual lattices.
These results extend recent work on $\ZLIP$, the problem of deciding whether a lattice is isomorphic to $\Z^n$.
In particular, the former result extends the $2^{n/2 + o(n)}$-time algorithms for $\ZLIP$ of Bennett, Ganju, Peetathawachai, and Stephens-Davidowitz (Eurocrypt, 2023) and of Ducas (Des. Codes Cryptogr., 2024). The latter result extends the $\ZLIP \in \coNP$ result of Hunkenschr\"{o}der (Math. Prog. Series A, 2024).

Our results leverage two key structural properties of self-dual lattices $\lat \subset \R^n$: (1) every such lattice $\lat$ is isomorphic to $\lat_0 \oplus \Z^r$ for some self-dual lattice $\lat_0$ with $\lambda_1(\lat_0)^2 \geq 2$, and (2) every such lattice $\lat$ has \emph{characteristic vectors}, i.e., there exist vectors $\vec{w} \in \lat$ such that for every $\vec{v} \in \lat$, $\iprod{\vec{v}, \vec{w}} \equiv \iprod{\vec{v}, \vec{v}} \Mod{2}$.
Our results use a line of work by Elkies and Gaulter on lattices with long shortest characteristic vectors, and can be strengthened assuming a positive answer to a related question of Elkies (Math. Res. Lett., 1995).

We also study Permutation Code Equivalence ($\PCE$) on self-dual codes, and we observe that similar structural properties imply a polynomial-time algorithm for $\PCE$ on certain such codes.
This gives a natural class of codes with large hull for which $\PCE$ is easy.
\end{abstract}

%% file: introduction.tex
\section{Introduction}

A lattice $\lat$ is a discrete, additive subgroup of $\R^n$.
The decisional \emph{Lattice Isomorphism Problem} (LIP) asks, roughly speaking, whether two lattices $\lat, \lat' \subset \R^n$ are rotations of one another.
$\LIP$ has received substantial attention from an algorithmic and complexity-theoretic perspective (see, e.g.,~\cite{szydloHypercubicLatticeReduction2003,havivLatticeIsomorphismProblem2014,lenstraLatticesSymmetry2017,chandrasekaranDecidingOrthogonalityConstructionA2017}).
Moreover, a thriving line of recent work initiated by~\cite{conf/eurocrypt/DucasW22,conf/eurocrypt/BennettGPS23} has developed cryptosystems that are inspired by $\LIP$ and whose security relies on the hardness of $\LIP$.
This work has been quite fruitful already. In particular, it has led to the development of advanced cryptographic primitives based on LIP (e.g.,~\cite{branco25FHELIP,leporati2025beyond,BLS26-IBE-FHE-crypto}) and to the practical digital signature algorithm HAWK~\cite{conf/asiacrypt/DucasPPW22}.\footnote{HAWK is the sole lattice-based cryptosystem to appear as a second-round candidate in the ``Additional Digital Signature Schemes'' standardization process run by the National Institute of Standards and Technology (NIST)~\cite{nist_pqc_round2_additional_signatures_2025}.}
Many of these works concern $\LIP$ on certain nice classes of lattices such as module lattices in~\cite{conf/asiacrypt/DucasPPW22} and the integer lattice $\Z^n$ in~\cite{conf/eurocrypt/DucasW22,conf/eurocrypt/BennettGPS23}.

The main goal of this work is to better understand the complexity of $\LIP$ on \emph{self-dual lattices}, i.e., lattices $\lat \subset \R^n$ such that $\lat = \lat^*$, where $\lat^* := \set{\vec{x} \in \R^n: \forall \vec{y} \in \lat, \iprod{\vec{x}, \vec{y}} \in \Z}$ is the dual lattice of $\lat$.\footnote{Self-dual lattices are also called \emph{unimodular lattices} since Gram matrices of their bases are necessarily unimodular matrices. In this work, we choose to call these lattices self-dual since we also work with self-dual codes. Throughout this work we assume by default that lattices are full-rank, i.e., we assume that if $\lat \subset \R^n$ then $\lspan(\lat) = \R^n$.}
Self-dual lattices are heavily studied in number theory \cite{Conway-Sloane-SPLAG}, and they appear in topology (e.g.,~\cite{Freedman-four-manifolds-82, Donaldson1983}) and sphere packing.
In particular, optimal sphere packings in $8$ and $24$ dimensions are achieved by the famous self-dual lattices $E_8$ and $\Lambda_{24}$, respectively~\cite{Viazovska2017-SpherePacking-E8,Cohn2017-SpherePacking-Lambda24}.
Moreover, the list of LIP challenges given in~\cite[Table 1]{conf/eurocrypt/DucasW22} includes several self-dual lattices.
The integer lattice $\Z^n$ is a self-dual lattice, and so $\ZLIP$, the problem of deciding whether a given input lattice is isomorphic to $\Z^n$, is a very special case of $\LIP$ on self-dual lattices.

The fastest known algorithm for (general) $\LIP$ runs in $n^{O(n)}$ time~\cite{havivLatticeIsomorphismProblem2014}.
Faster algorithms are known on so-called Construction-A lattices~\cite{chandrasekaranDecidingOrthogonalityConstructionA2017,conf/pkc/DucasG23} and on rank-$2$ module lattices over certain number fields~\cite{conf/eurocrypt/MureauPPW24,conf/eurocrypt/AllombertPW25}.
Furthermore, the fastest known algorithms for $\ZLIP$ run in $2^{n/2 + o(n)}$ time~\cite{conf/eurocrypt/BennettGPS23,journals/dcc/Ducas24}, and to the best of our knowledge very little is known about $\LIP$ on broader classes of self-dual lattices than rotations of $\Z^n$.

We also consider the complexity of $\LIP$ more broadly.
Intuitively, it is straightforward to certify that two lattices $\lat, \lat'$ \emph{are} isomorphic by providing an orthogonal transformation $O$ such that $O(\lat) = \lat'$ as a certificate. This essentially shows that $\LIP \in \NP$.\footnote{In fact, we will formalize $\LIP$ in terms of quadratic forms rather than lattices in the sequel, in which case the witness is a bit different.}
However, it is a priori unclear how to certify that two lattices are not isomorphic, and it is an open question whether $\LIP \in \coNP$. 

In fact, showing that $\LIP$ is in $\coNP$ would imply that the Graph Isomorphism Problem ($\GI$) is in $\coNP$ by a known polynomial-time reduction from $\GI$ to $\LIP$~\cite{journals/moc/SikiricSV09}, thus resolving another open problem. (This reduction also shows that $\LIP$ is $\GI$-hard, i.e., that $\LIP$ has a polynomial-time algorithm only if $\GI$ does.)
Additionally,~\cite{havivLatticeIsomorphismProblem2014} showed that $\LIP$ is in $\SZK$ and therefore that it is in $\coAM$, i.e., that there is an Arthur-Merlin protocol for certifying that two lattices are not isomorphic. This implies that $\LIP$ is unlikely to be $\NP$-complete since then the polynomial hierarchy would collapse.
Recently, Hunkenschr\"{o}der~\cite{Hunkenschroder-ZLIP-coNP} used a result of Elkies~\cite{Elkies-char-Zn} to give a simple, elegant proof that $\ZLIP \in \coNP$. To the best of our knowledge, nothing else is known about the complexity of $\LIP$ on self-dual lattices.

We also study the analog of $\LIP$ on self-dual codes.
A (binary, linear) \emph{code} $\C$ is a linear subspace of $\F_2^n$.
As with lattices, the question of whether two codes are ``essentially the same''---formalized in terms of variants of the \emph{Code Equivalence Problem}---has garnered substantial attention in the cryptographic literature, including as the basis of the LESS digital signature scheme~\cite{conf/africacrypt/BiasseMPS20,conf/pqcrypto/BarenghiBPS21}.\footnote{Like HAWK, LESS is a second-round candidate in NIST's ``Additional Digital Signature Schemes'' standardization process~\cite{nist_pqc_round2_additional_signatures_2025}. LESS uses codes over large finite fields $\F_q$. Specifically, the parameter sets submitted to~\cite{nist_pqc_round2_additional_signatures_2025} take $q = 127$. Its security also relies on the hardness of the \emph{Linear Code Equivalence Problem}, which consider a more general notion of equivalence allowing for both permuting and scaling coordinates.}
In particular, the \emph{Permutation} Code Equivalence Problem ($\PCE$) asks whether two codes $\C, \C' \subseteq \F_2^n$ are the same up to a permutation of coordinates. 
A code $\C \subset \F_2^n$ is \emph{self-dual} if $\C = \C^{\perp}$, where $\C^{\perp} := \set{\vec{x} \in \F_2^n : \forall \vec{y} \in \C, \iprod{\vec{x}, \vec{y}} = 0}$ is the dual code of $\C$.\footnote{Here and in general we emphasize that all arithmetic over codes in $\F_2$ is mod $2$.}

\subsection{Reduced Lattices and Characteristic Vectors}
\label{sec:intro-char-vecs}

We write $\lat \cong \lat'$ if lattices $\lat, \lat'$ are isomorphic (i.e., if there exists an orthogonal transformation $O$ such that $O(\lat) = \lat'$). We call a self-dual lattice $\lat_0$ a \emph{reduced lattice} if its squared minimum distance is at least $2$ (i.e., if $\lambda_1(\lat_0)^2 \geq 2$).
To describe our results, we first introduce and discuss two key properties of self-dual lattices, which are as follows.

\begin{enumerate}
\item \label{item:intro-reduced} Every self-dual lattice $\lat$ is isomorphic to $\lat_0 \oplus \Z^r$ for a unique (up to isomorphism) reduced lattice $\lat_0$ and some $r \geq 0$.\footnote{We note that either $\lat_0$ or $\Z^r$ may be trivial. I.e., it may be the case that $\lat = \lat_0$ or $\lat = \Z^r$.} 
We define $R_0(\lat) := \rank(\lat_0)$, and we call a reduced lattice $\lat_0$ such that $\lat \cong \lat_0 \oplus \Z^r$ a \emph{reduction} of $\lat$.

\item \label{item:intro-characteristic-vec} Every self-dual lattice $\lat$ has \emph{characteristic vectors}, i.e., there exist vectors $\vec{w} \in \lat$ such that for every $\vec{v} \in \lat$, $\iprod{\vec{v}, \vec{w}} \equiv \iprod{\vec{v}, \vec{v}} \Mod{2}$. Moreover, $\vec{w}'$ is a characteristic vector of $\lat$ if and only if $\vec{w}' \in 2 \lat + \vec{w}$ for some fixed characteristic vector $\vec{w}$. Following the notation of Gaulter~\cite{Gaulter-S2}, we define $s(\lat) := \min \set{\norm{\vec{w}'}^2 : \vec{w}' \in 2 \lat + \vec{w}}$ to be the minimum squared Euclidean norm of a characteristic vector of $\lat$.
\end{enumerate}

The coset $\lat + \half \vec{w}$ of $\lat$ consisting of all $\half$-scaled characteristic vectors of $\lat$ is called the \emph{shadow} of $\lat$ in the literature 
(see, e.g.,~\cite[p. xxxiv]{Conway-Sloane-SPLAG} and \cite{Elkies-long-shadows}). We note that the parameters $R_0$ and $s$ are both invariant under isomorphism.
Analogous properties also hold for self-dual codes $\C \subset \F_2^n$, although in the introduction we focus on lattices.
(See \cref{sec:prelims} for formal statements and corresponding references or proofs for all of the facts claimed here.)

We will crucially use a line of work, initiated by Elkies~\cite{Elkies-char-Zn} and continuing with works by Elkies and Gaulter~\cite{Elkies-long-shadows,Gaulter-Lattices-Without-Short,Gaulter-S2}, on characterizing which self-dual lattices $\lat$ have large $s(\lat)$. 
One can show that if $\lat \subset \R^n$ is a self-dual lattice then $s(\lat) = n - 8K$ for some $K \in \Z_{\geq 0}$ (see \cref{fact:char-vec-properties}, \cref{item:char-vec-congruences}), and in fact the work of Elkies and Gaulter is framed in terms of characterizing the self-dual lattices $\lat \subset \R^n$ with $s(\lat) = n - 8K$ for small values of $K \geq 0$. 
(\cite{Elkies-long-shadows} also studies the analogous question for codes.) 

Starting with $K = 0$,~\cite{Elkies-char-Zn} showed that $s(\lat) = n$ if and only if $\lat \cong \Z^n$.
This led to Hunkenschr\"{o}der's~\cite{Hunkenschroder-ZLIP-coNP} $\coNP$ protocol for $\ZLIP$.
He noted that to certify that a self-dual lattice $\lat \subset \R^n$ is not isomorphic to $\Z^n$, it suffices to provide a characteristic vector $\vec{w}$ of $\lat$ with $\norm{\vec{w}}^2 < n$ as a witness. (Although we are continuing to present our overview in terms of lattices,~\cite{Hunkenschroder-ZLIP-coNP} like us formally works with quadratic forms $Q = \basis^T \basis$ equal to the Gram matrix of a basis $\basis$ of $\lat$. Accordingly, the $\coNP$ protocol in~\cite{Hunkenschroder-ZLIP-coNP} in fact provides the \emph{coefficient vector} $\vec{z}$ of such a characteristic vector $\vec{w} = \basis \vec{z}$. It checks that $\basis \vec{z}$ is a characteristic vector of $\lat$ and that $\vec{z}^T Q \vec{z} < n$.)

 Following~\cite{Elkies-long-shadows}, we define the quantity $\NL(K) := \sup \set{R_0(\lat) : \lat = \lat^*, s(\lat) \geq \rank(\lat) - 8K}$.\footnote{We note that if $\lat \cong \lat_0 \oplus \Z^r$ then $s(\lat) = s(\lat_0) + r$ 
(see \cref{fct:properties-s}, \cref{item:properties-s-direct-sum}), and so equivalently one can take the supremum over reduced lattices $\lat_0$ as opposed to arbitrary self-dual lattices $\lat$.}
So, by definition, if $s(\lat) \geq n - 8K$ then $R_0(\lat) \leq \NL(K)$.

In~\cite{Elkies-long-shadows}, Elkies showed that $\NL(1) = 23$ and asked whether $\NL(K)$ is finite for every $K$. %
(In fact,~\cite{Elkies-long-shadows} showed that, up to isomorphism, there are precisely $14$ reduced lattices $\lat_0 \subset \R^n$ with $s(\lat_0) = n - 8$. We include a list of these lattices in \cref{tab:n-8}.)
Gaulter~\cite{Gaulter-Lattices-Without-Short} subsequently gave a positive answer to this question for $K = 2$ and $K = 3$ and even gave explicit upper bounds on $\NL(2)$ and $\NL(3)$. Gaulter also gave an improved upper bound on $\NL(2)$ in~\cite{Gaulter-S2}.
These bounds are important for our work, and we summarize them in \cref{thm:Nkbounds}.
To the best of our knowledge, nothing is known in general for $K \geq 4$, although there has been substantial progress in bounding the rank of reduced lattices $\lat_0$ that simultaneously have large $\lambda_1(\lat_0)$ and large $s(\lat_0)$ \cite{gaulter1998characteristic, NebeVenkov2003, Gaborit2007, NebeSchindelar2007}.

\subsection{Results and Technical Overview}
\label{sec:intro-our-results}

We now state our results. To do this formally for lattices, we use quadratic forms. A \emph{quadratic form} $Q$ corresponding to a lattice $\lat \subset \R^n$ is the Gram matrix $Q := \basis^T \basis$ of a basis $\basis$ of $\lat$.\footnote{Formally, a quadratic form is the scalar-valued function $\vec{z} \mapsto \vec{z}^T Q \vec{z}$. For simplicity, we directly refer to the matrix $Q$ as a quadratic form.}
We write $\lat \cong \lat'$ if two lattices $\lat, \lat' \subset \R^n$ are isomorphic (i.e., if there exists an orthogonal transformation $O$ such that $O(\lat) = \lat'$).

\paragraph{An algorithm for $\LIP$ on self-dual lattices.}

Our first result is an algorithm for search $\LIP$ on self-dual lattices $\lat, \lat'$ (i.e., when $\lat \cong \lat'$, and the goal is to \emph{find} an isomorphism between the lattices) with relatively low-rank reductions $\lat_0, \lat_0'$.

\begin{restatable}{theorem}{lowRlatalg} \label{thm:intro-low-R0-lattice-alg}
There is an algorithm for search $\LIP$ on quadratic forms $Q, Q'$ corresponding to self-dual lattices $\lat, \lat' \subset \R^n$ that runs in $2^{n/2 + o(n)} + n_0^{O(n_0)}$ time, where $n_0 := R_0(\lat) = R_0(\lat')$.
In particular, if $n_0 = o(n/\log n)$ then the algorithm runs in $2^{n/2 + o(n)}$ time.
\end{restatable}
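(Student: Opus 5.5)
Our approach is to strip off the $\Z^r$ summands of both lattices, which leaves two reduced lattices of rank $n_0$. We then solve $\LIP$ on these reduced parts by brute force, which costs $n_0^{O(n_0)}$, and recombine.

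\textbf{Step 1: find the unit vectors.} The key observation is that in a self-dual lattice $\lat \cong \lat_0 \oplus \Z^r$ with $\lambda_1(\lat_0)^2 \geq 2$, the vectors of squared norm $1$ are exactly the $\pm \vec{e}_i$ of the $\Z^r$ summand. To see this, write a norm-$1$ vector as $\vec{u} = \vec{x} + \vec{y}$ with $\vec{x} \in \lat_0$ and $\vec{y} \in \Z^r$. Then $\norm{\vec{x}}^2 + \norm{\vec{y}}^2 = 1$ with both terms integers, and $\norm{\vec{x}}^2 \in \{0\} \cup [2,\infty)$, so $\vec{x} = 0$ and $\vec{y}$ is a unit vector.

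We enumerate all vectors of squared norm $1$ in $\lat$ (given by $Q$). Via a single-exponential shortest/short vector enumeration, e.g.\ the $2^{n/2+o(n)}$-time discrete Gaussian sampling approach used for $\ZLIP$ in~\cite{conf/eurocrypt/BennettGPS23,journals/dcc/Ducas24}, this takes $2^{n/2+o(n)}$ time. Here the count of such vectors is at most $2n$. Since norm-$1$ vectors $\pm\vec{e}_i$ are orthogonal or antipodal, the output yields $r$ pairwise orthogonal unit vectors spanning the $\Z^r$ part.

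The main obstacle is guaranteeing the enumeration runs in $2^{n/2+o(n)}$ rather than $2^{n+o(n)}$ time. I would invoke the prior $\ZLIP$ analysis: there, finding all shortest vectors of a lattice with $\lambda_1 = 1$ and unimodular (self-dual) structure was done in $2^{n/2+o(n)}$ time by sampling from the discrete Gaussian at a suitable parameter. The relevant bound uses only self-duality, via the smoothing parameter and the Gaussian mass of $\lat$ equalling that of $\lat^*$. So I would check that argument applies verbatim to any self-dual lattice.

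\textbf{Step 2: split off $\Z^r$ in coordinates.} Because $\lat$ is self-dual and the unit vectors span a unimodular sublattice $\Z^r$, the lattice splits as $\lat = \Z^r \oplus \lat_0$, where $\lat_0 = \lat \cap (\Z^r)^\perp$. Indeed, for $\vec{v} \in \lat$, $\vec{v} - \sum_i \iprod{\vec{v}, \vec{e}_i}\vec{e}_i \in \lat$ since the inner products are integers. We can compute a basis of $\lat_0$ and its Gram matrix $Q_0$ in polynomial time in coefficient space, using integer linear algebra (Hermite normal form / kernel of the map $\vec{z} \mapsto (\vec{e}_i^T Q \vec{z})_i$). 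The same is done for $Q'$, giving $Q_0'$. Uniqueness of the reduction (the fact cited in \cref{sec:intro-char-vecs}) gives $\lat_0 \cong \lat_0'$, both of rank $n_0$.

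\textbf{Step 3: solve the reduced part and recombine.} We run the $n_0^{O(n_0)}$-time $\LIP$ algorithm of~\cite{havivLatticeIsomorphismProblem2014} on $Q_0, Q_0'$ to obtain a unimodular $U_0$ with $U_0^T Q_0 U_0 = Q_0'$. Any bijection between the two orthonormal systems of unit vectors gives an isomorphism of the $\Z^r$ parts. We then combine the two isomorphisms block-diagonally and conjugate by the basis changes from Step 2 to obtain $U \in GL_n(\Z)$ with $U^T Q U = Q'$. Summing the costs gives $2^{n/2+o(n)} + n_0^{O(n_0)} \cdot \mathrm{poly}(n)$. When $n_0 = o(n/\log n)$ we have $n_0^{O(n_0)} = 2^{o(n)}$, which gives the final claim.
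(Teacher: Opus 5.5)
Your architecture is the same as the paper's: split off the $\Z^r$ part by finding unit vectors with the $2^{n/2+o(n)}$ algorithm of \cite{conf/eurocrypt/BennettGPS23}, run Haviv--Regev \cite{havivLatticeIsomorphismProblem2014} on the rank-$n_0$ reductions, and recombine block-diagonally. Your Steps 2 and 3 are fine.

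\textbf{The gap is in Step 1.} This is the only non-routine step, and you flag it as the main obstacle but do not resolve it. You need the $2^{n/2+o(n)}$-time $\SVP$ algorithm of \cite{conf/eurocrypt/BennettGPS23} to work on an \emph{arbitrary} self-dual lattice. You defer this to checking that their argument applies ``verbatim,'' and you guess it rests on self-duality via the equality of the Gaussian masses of $\lat$ and $\lat^*$. That is not the hypothesis the cited result uses. It is stated for \emph{semi-stable} lattices, meaning every sublattice $\lat' \subseteq \lat$ has $\det(\lat') \geq 1$, with $\lambda_1(\lat) = o(\sqrt{n}/\log n)$. Your proposal establishes neither condition.

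The missing observation is short. If $\basis'$ is a basis of a sublattice $\lat'$ of an integral lattice, then $\det(\lat')^2 = \det((\basis')^T \basis')$ is a positive integer, so $\det(\lat') \geq 1$. Hence every integral lattice, and in particular every self-dual one, is semi-stable. Moreover, whenever $r \geq 1$ we have $\lambda_1(\lat) = 1$, so the promise on $\lambda_1$ holds. This is exactly the content of the paper's \cref{lem:lip-integral-lat-alg}, and without it Step 1 is an unproved claim.

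\textbf{A smaller issue.} The cited algorithm returns \emph{one} shortest vector, not the list of all norm-$1$ vectors, so ``enumerate all vectors of squared norm $1$'' needs its own argument. The clean fix uses your Step 2 observation. Once a unit vector $\vec{u}$ is found, integrality gives $\lat = \Z\vec{u} \oplus (\lat \cap \vec{u}^{\perp})$, and the complement is again self-dual, hence semi-stable. You then recurse at most $n$ times, stopping as soon as the returned vector does not have norm $1$. You must check the output rather than trust it, because once no unit vectors remain, $\lambda_1$ can be too large for the promise. This is the paper's \cref{thm:computing-reduced-lattice}; with these two repairs your proof coincides with the paper's.
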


We emphasize again that the fastest known algorithm for $\LIP$ on general lattices takes $n^{O(n)}$ time~\cite{havivLatticeIsomorphismProblem2014} (see also \cref{thm:HR-LIP}), and that \cref{thm:intro-low-R0-lattice-alg} with $n_0 = o(n/\log n)$ matches the running time of the algorithms for $\ZLIP$ in~\cite{conf/eurocrypt/BennettGPS23,journals/dcc/Ducas24}, but applies to a much larger class of self-dual lattices (note that $R_0(\lat) = 0$ for $\lat \cong \Z^n$).

The algorithm works by reducing $\LIP$ on $\lat, \lat'$ to $\LIP$ on their reductions $\lat_0, \lat_0'$, and then using the algorithm in~\cite{havivLatticeIsomorphismProblem2014} to solve $\LIP$ on $\lat_0, \lat_0'$.\footnote{We note the unfortunate clash in terminology between a reduction in the standard algorithmic sense and a reduction $\lat_0$ of a self-dual lattice $\lat$.}
The reduction works by finding all unit-length vectors in $\lat \cong \lat_0 \oplus \Z^r, \lat' \cong \lat_0' \oplus \Z^r$---i.e., by finding the sublattices of $\lat, \lat'$ isomorphic to $\Z^r$---and then projecting orthogonally to obtain $\lat_0, \lat_0'$. 
To implement this reduction, we show that the algorithm from~\cite{conf/eurocrypt/BennettGPS23} for finding unit-length vectors in rotations of $\Z^n$ also works for finding unit-length vectors in arbitrary self-dual lattices (if any exist).

By definition of $\NL(K)$, we get that if $\NL(K) = o(n/\log n)$ for $K = K(n)$, then \cref{thm:intro-low-R0-lattice-alg} solves $\LIP$ on lattices $\lat, \lat' \subset \R^n$ with $s(\lat), s(\lat') \geq n - 8K$ in $2^{n/2 + o(n)}$ time.
In particular, by the work of Elkies and Gaulter, \cref{thm:intro-low-R0-lattice-alg} applies when $s(\lat), s(\lat') \geq n - 8K$ for $K = 3$ unconditionally. It applies for arbitrary constant $K$ assuming a positive answer to Elkies's question about $\NL(K)$ being bounded.
Finally, we note that although in this work we focus on self-dual lattices, \cref{thm:intro-low-R0-lattice-alg} actually applies to \emph{integral lattices}---lattices $\lat$ that satisfy the condition $\lat \subseteq \lat^*$ but not necessarily the stronger condition $\lat = \lat^*$ required to be self-dual.

\paragraph{An $\NP \cap \coNP$ protocol for $\LIP$ on self-dual lattices.}

Our next result is a $\coNP$ protocol for $\LIP$ on certain self-dual lattices $\lat, \lat'$. 
That is, we give a protocol for certifying that $\lat$ and $\lat'$ are \emph{not} isomorphic when at least one of $s(\lat)$ and $s(\lat')$ is sufficiently large.
We formalize this by introducing and studying a decisional promise problem, $\SDLIP_K$ (``self-dual $\LIP$''), parameterized by an integer $K \in \Z_{\geq 0}$. See \cref{def:SDLIP}.

An instance of $\SDLIP_K$ consists of quadratic forms corresponding to two self-dual lattices $\lat, \lat' \subset \R^n$. It is a YES instance if $s(\lat) \geq n - 8K$ and $\lat \cong \lat'$ (these conditions additionally imply that $s(\lat') \geq n - 8K$), and it is a NO instance if $\lat \not\cong \lat'$ and either $s(\lat) \geq n - 8K$ or $s(\lat') \geq n - 8K$ (or both).
(Notice that the requirement that $s(\lat) \geq n - 8K$ in the YES case makes showing that $\SDLIP_K \in \NP$ non-trivial, and we prove this in addition to $\SDLIP_K \in \coNP$ for certain $K$.)

We motivate the definition of $\SDLIP_K$ by noting several useful points about it. First, $\SDLIP_K$ trivially reduces to $\SDLIP_{K+1}$ for every $K \geq 0$.\footnote{Moreover, $\SDLIP_\infty$ is simply the problem of deciding whether two self-dual lattices (represented by quadratic forms) are isomorphic.} Second, the problem of deciding whether a single input lattice $\lat$ is isomorphic to a fixed self-dual lattice $\lat'$ with $s(\lat') \geq n - 8K$ trivially reduces to $\SDLIP_K$.\footnote{More properly, one can consider a fixed family $\set{\lat_n'}$ of self-dual lattices with $\lat_n' \subset \R^n$ satisfying $s(\lat_n') \geq n - 8K$.} In particular, $\ZLIP$ reduces to $\SDLIP_0$ by fixing $\lat' := \Z^n$.
Third, YES instances $\lat, \lat'$ of $\SDLIP_K$ for $K$ such that $\NL(K) \leq n_0$ have $R_0(\lat) = R_0(\lat') \leq n_0$, and so $\SDLIP_K$ for such $K$ reduces to search $\LIP$ on lattices with low $R_0$. It is therefore solvable in $2^{n/2} + n_0^{O(n_0)}$ time using the algorithm in \cref{thm:intro-low-R0-lattice-alg}. 

Specifically, we show the following.

\begin{restatable}{theorem}{latNPcoNP} \label{thm:intro-coNP-proof}
For any $K = K(n)$ such that $\NL(K) \leq O(\log n/\log \log n)$, $\SDLIP_K$ is in $\NP \cap \coNP$.
\end{restatable}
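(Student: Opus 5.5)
The plan is to handle NP and coNP separately. Both parts rest on the same two tools: the decomposition $\lat \cong \lat_0 \oplus \Z^r$ with $\rank(\lat_0) = R_0(\lat) \le \NL(K) =: n_0 = O(\log n/\log\log n)$, and the fact that any lattice of rank $n_0$ has an isomorphism certificate of size $n_0^{O(n_0)} = \poly(n)$ that can be checked, or even searched for, in that time.

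\textbf{NP.} In a YES instance, $\lat \cong \lat'$. The honest witness is a unimodular $U \in \GL_n(\Z)$ with $U^T Q U = Q'$, which certifies isomorphism. The remaining work is to certify that $s(\lat) \ge n - 8K$, which is a lower bound on the length of \emph{every} characteristic vector, so it cannot be certified by exhibiting one vector. Instead I would certify the structure. The prover gives a basis change exhibiting $Q$ as block diagonal $Q_0 \oplus I_r$ with $Q_0$ of rank $n_0 \le \NL(K)$. By \cref{fct:properties-s}, $s(\lat) = s(\lat_0) + r$. The verifier then computes $s(\lat_0)$ exactly. Since $\lat_0$ has rank $n_0$, it can enumerate the shortest vectors in the coset $2\lat_0 + \vec{w}_0$ in time $2^{O(n_0)}$ or at worst $n_0^{O(n_0)}$. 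Either bound is $\poly(n)$ because $n_0 \log n_0 = O(\log n)$. The verifier finally checks that $s(\lat_0) + r \ge n - 8K$. Note that $Q_0$ need not be reduced for this check to be valid; the only requirement is that its rank be small. The prover can always supply such a decomposition for a YES instance, because $R_0(\lat) \le \NL(K)$.

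\textbf{coNP.} Assume without loss of generality that $s(\lat) \ge n - 8K$; the prover indicates which lattice this holds for. In a NO instance there are two cases.

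\emph{Case (i): $s(\lat') < n - 8K$, or $R_0(\lat') \ne R_0(\lat)$.} The prover exhibits a characteristic vector of $\lat'$, via its coefficient vector $\vec z$, with $\vec z^T Q' \vec z < n - 8K$. The verifier checks the characteristic property on basis vectors modulo 2, and checks $s(\lat) \ge n - 8K$ as in the NP part. This is Hunkenschröder's argument, generalized.

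\emph{Case (ii): $s(\lat') \ge n - 8K$.} Then both lattices decompose with small reductions, and $\lat \not\cong \lat'$ holds iff $\lat_0 \not\cong \lat_0'$ or $r \ne r'$, by uniqueness of the reduction. The prover gives the block decompositions of both forms with $\rank \le n_0$. I would have the prover pick the decomposition with $Q_0$ actually reduced, so that the verifier can check $\lambda_1(\lat_0)^2 \ge 2$ by enumeration in $\poly(n)$ time. The verifier then decides isomorphism of the two rank-$n_0$ reduced forms directly, using the Haviv--Regev algorithm (\cref{thm:HR-LIP}) in $n_0^{O(n_0)} = \poly(n)$ time, and accepts if the ranks or the reductions differ. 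By uniqueness of reductions this is sound and complete.

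The main obstacle is making case (ii) work with honest certificates. The verifier must be able to confirm that the supplied $Q_0$ is truly the reduction. That requires checking $\lambda_1^2 \ge 2$, which is a universal statement, so it must be done by the verifier's own bounded enumeration rather than taken from the witness. It also requires keeping every enumeration within $n_0^{O(n_0)}$, which is exactly where the hypothesis $\NL(K) = O(\log n/\log\log n)$ is used.
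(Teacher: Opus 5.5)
Your proof is correct. Its $\coNP$ half is essentially the paper's, while its $\NP$ half takes a different route.

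\textbf{The $\coNP$ half.} Your case (ii) is the paper's Case 1: block decompositions $Q_0 \oplus I_r$ of both forms, a $4^{n_0+o(n_0)}$-time $\SVP$ check that $\lat_0, \lat_0'$ are reduced, then Haviv--Regev on the rank-$O(\log n/\log\log n)$ reductions. Your case (i) is the paper's Case 2, with one difference. The paper compares $\vec{z}^T Q' \vec{z}$ against $s(\lat_0)+r$ for a certified \emph{reduction}. You compare against the threshold $n-8K$, and you rightly observe that any small-rank unimodular block $Q_0$, reduced or not, certifies $s(\lat) = s(\lat_0)+r$. The clause ``or $R_0(\lat') \neq R_0(\lat)$'' belongs to case (ii), not (i): if $s(\lat') \geq n-8K$ there is no short characteristic vector to exhibit. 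This is harmless, since (ii) accepts on $r \neq r'$ anyway.

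\textbf{The $\NP$ half.} The paper's witness is a reduction decomposition of each form plus an isomorphism $U_0$ between the reductions, and its verifier never checks $s(\lat) \geq n-8K$. You give a direct isomorphism $U$ and additionally certify the promise via a small-rank decomposition. Both are valid, but the obstacle you identify is not actually there. A verifier for a promise problem need not check the promise, and NO instances are non-isomorphic. So your witness $U$ with $U^T Q U = Q'$, which has polynomial bit length by \cref{lem:norm-bounds}, suffices by itself, for every $K$.

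\textbf{Where the hypothesis is really used.} For the same reason, in your case (i) a characteristic vector of either lattice with squared norm $< n-8K$ is already a complete $\coNP$ witness, because YES instances have $s(\lat) = s(\lat') \geq n-8K$. The decompositions, and hence the hypothesis on $\NL(K)$, are genuinely needed only in case (ii).

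\textbf{One omission.} You do not explain why the decomposition witnesses have polynomial bit length. As in the paper, take $Q_0$ with small entries (e.g.\ the Gram matrix of an LLL-reduced basis of $\lat_0$) and apply \cref{lem:norm-bounds}.
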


As a corollary of~\cite{Elkies-char-Zn,Elkies-long-shadows,gaulter1998characteristic,Gaulter-S2}, we then get that $\SDLIP_3 \in \NP \cap \coNP$ unconditionally, and we get that $\SDLIP_K \in \NP \cap \coNP$ for any constant $K$ assuming a positive answer to Elkies's question about $\NL(K)$ being bounded for every $K$.

The protocol works roughly as follows. To certify either that $\lat \cong \lat'$ or $\lat \not \cong \lat'$ in the case where $s(\lat), s(\lat') \geq n - 8K$, the protocol uses a witness consisting of orthogonal transformations $O, O'$ such that $O(\lat) = \lat_0 \oplus \Z^r$ and $O'(\lat') = \lat_0' \oplus \Z^{r'}$ for some reduced, low-rank lattices $\lat_0, \lat_0'$ and some $r, r' \geq 0$. Because $\lat_0, \lat_0'$ each have rank at most $O(\log n/\log \log n)$, the verifier can check whether they are in fact reduced lattices using the $\SVP$ algorithm from~\cite{conf/stoc/AggarwalDRS15} (to check that $\lambda_1(\lat_0)^2 \geq 2$ and $\lambda_1(\lat_0')^2 \geq 2$), and can then decide whether $\lat_0 \cong \lat_0'$ using the $\LIP$ algorithm from~\cite{havivLatticeIsomorphismProblem2014} in $O(\log n/\log \log n)^{O(\log n/\log \log n)} = \poly(n)$ time.
To certify $\lat \not \cong \lat'$ when $s(\lat) < n - 8K$ but $s(\lat') \geq n - 8K$, the protocol uses a short characteristic vector $\vec{w}$ for $\lat$ together with an orthogonal transformation $O'$ such that $O'(\lat') = \lat_0' \oplus \Z^{r'}$ for a reduced lattice $\lat_0'$ as before. After verifying that $\lat_0'$ is in fact a reduced lattice, the verifier can efficiently compute $s(\lat_0')$ and therefore $s(\lat') = s(\lat_0') + r'$ (see \cref{cor:comp-shortest-char-vec,fct:properties-s}). From this, the verifier can check that $s(\lat) < s(\lat')$, and if so conclude that $\lat \not \cong \lat'$.

\paragraph{An algorithm for $\PCE$ on self-dual codes.}

We then turn to studying Permutational Code Equivalence ($\PCE$), the code analog of $\LIP$, on self-dual codes $\C \subset \F_2^n$.
Specifically, we observe that an analog of \cref{thm:intro-low-R0-lattice-alg} holds for $\PCE$.

We define a \emph{reduced code} to be a self-dual code $\C_0$ with minimum distance at least $4$, and we define $\cZ := \set{(0, 0)^T, (1, 1)^T} \subset \F_2^2$ to be the code generated by the vector $(1, 1)^T$.
The $2$-fold repetition code $\cZ^r = \cZ \oplus \cdots \oplus \cZ \subset \F_2^{2r}$ is the analog of the lattice $\Z^r$ for codes.
By analogy with \cref{item:intro-reduced} for self-dual lattices, every self-dual code $\C \subset \F_2^n$ is permutationally equivalent to $\C_0 \oplus \cZ^r$ for a unique (up to permutational equivalence) reduced code $\C_0 \subset \F_2^{n_0}$ and some $r \geq 0$. We call such a code $\C_0$ a \emph{reduction} of $\C$, and we define $R_0(\C)$ to be the block length $n_0$ of $\C_0$.
We show the following.

\begin{restatable}{theorem}{lowRcodealg} \label{thm:intro-low-R0-code-alg}
There is an algorithm for search $\PCE$ on self-dual codes $\C, \C' \subset \F_2^n$ that runs in $2^{n_0 + o(n_0)} + \poly(n)$ time, where $n_0 := R_0(\C) = R_0(\C')$. In particular, if $n_0 = O(\log n)$ then the algorithm runs in $\poly(n)$ time.
\end{restatable}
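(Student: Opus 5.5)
The approach mirrors the lattice case: strip off the repetition-code part $\cZ^r$ from both codes, solve $\PCE$ on the reductions by brute force, and reassemble the permutation.

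\emph{Step 1: identify the $\cZ^r$ part.} In a self-dual code $\C$, consider the weight-$2$ codewords. If $\vec{e}_i + \vec{e}_j \in \C$, then coordinates $i$ and $j$ are ``paired'': every codeword $\vec{c} \in \C = \C^\perp$ satisfies $\iprod{\vec{c}, \vec{e}_i + \vec{e}_j} = 0$, so $c_i = c_j$. Hence the relation ``$i \sim j$ iff $\vec{e}_i + \vec{e}_j \in \C$'' is an equivalence relation. Its classes have size at most $2$: a class $\{i,j,k\}$ would force $c_i = c_j = c_k$ for all $\vec{c} \in \C$, hence $\vec{e}_i \in \C^\perp = \C$. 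That is a weight-$1$ codeword, which is impossible since every codeword of a self-dual code has even weight.

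All weight-$2$ codewords can be found in $\poly(n)$ time. For each pair $\{i,j\}$, test membership of $\vec{e}_i+\vec{e}_j$ via a parity-check matrix; alternatively, find pairs of coordinates whose columns in a generator matrix are equal. Let $P$ be the set of paired coordinates, and puncture $\C$ on $P$, i.e.\ project onto the remaining coordinates $[n]\setminus P$. Since each codeword is constant on each pair, $\C = \C_0 \oplus \cZ^r$ after permuting coordinates, where $\C_0$ is the projection. I also need $\C_0$ to be reduced. It is self-dual because it equals the shortening of $\C$ on $P$, and $\C$ splits. Its minimum distance is at least $4$ because it has no weight-$2$ words (any such word would extend by zeros to a weight-$2$ word of $\C$ outside the pairs), and self-dual binary codes have only even weights. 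By the uniqueness of reductions stated in the introduction, $\C_0 \permequiv \C_0'$ if $\C \permequiv \C'$, and both have length $n_0$.

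\emph{Step 2: solve $\PCE$ on the reductions.} Apply a known algorithm for $\PCE$ on codes of length $n_0$. Naive enumeration over all $n_0!$ permutations is too slow for the stated bound. Instead I would use the observation that $\C_0$ is self-dual, so its hull $\C_0 \cap \C_0^\perp = \C_0$ has dimension $n_0/2$. I would invoke a $2^{n_0+o(n_0)}$-type algorithm, e.g.\ enumerating all $2^{n_0/2}$ codewords and using the weight enumerator / support structure to reduce to graph or hypergraph isomorphism. One candidate is to build the hypergraph of all codewords, with $2^{n_0/2}$ edges, and run Babai's quasipolynomial hypergraph isomorphism, in time $\poly(2^{n_0/2})^{\mathrm{polylog}} = 2^{O(n_0)\cdot \mathrm{polylog}}$. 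Since that overshoots, I would rather use a known $2^{n_0+o(n_0)}$ (or $2^{O(n_0)}$) code-equivalence routine, if one is cited in the preliminaries. This step is the main obstacle: obtaining exactly $2^{n_0+o(n_0)}$ requires an appropriate singly-exponential $\PCE$ algorithm on length-$n_0$ codes, for instance Luks-style hypergraph isomorphism in $2^{O(m)}$ time on $m = n_0$ vertices, or a dynamic-programming approach over coordinate subsets.

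\emph{Step 3: reassemble.} Given a permutation $\pi_0$ with $\pi_0(\C_0) = \C_0'$, extend it to a permutation of $[n]$ by mapping the $r$ pairs of $\C$ bijectively onto the $r$ pairs of $\C'$, sending each pair to a pair. The extended permutation maps $\C_0 \oplus \cZ^r$ onto $\C_0' \oplus \cZ^r$ and hence $\C$ onto $\C'$. The total running time is $\poly(n)$ for Steps 1 and 3 plus the cost of Step 2. This gives $\poly(n)$ time whenever $n_0 = O(\log n)$.
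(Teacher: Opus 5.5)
Your proposal is correct and matches the paper's proof: strip off the $\cZ^r$ part using the weight-$2$ codewords (the paper's \cref{lem:computing-reduced-code}), solve search $\PCE$ on the reductions, and output $(P')^{-1}(P_0 \oplus I_{2r})P$. Your Step~2 obstacle is resolved by the subset dynamic-programming routine you mention, namely Babai's $2^{n+o(n)}$-time algorithm cited as \cref{thm:Babai-PCE}; separately, in Step~1 a class $\{i,j,k\}$ is ruled out not because $\vec{e}_i \in \C^\perp$ (which does not follow from $c_i = c_j = c_k$ alone) but because the codeword $\vec{e}_i + \vec{e}_j$ itself violates $c_i = c_k$, equivalently $\iprod{\vec{e}_i + \vec{e}_j, \vec{e}_j + \vec{e}_k} = 1$.
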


Elkies's proof in~\cite{Elkies-long-shadows} that $\C$ is permutationally equivalent to $\C_0 \oplus \cZ^r$ for some reduced code $\C_0$ is constructive, and it implicitly gives a polynomial-time algorithm for computing $\C_0$. The algorithm is essentially to find all weight-$2$ codewords in $\C$ and then to take $\C_0$ to be the punctured code obtained by restricting $\C$ to coordinates not in the support of any weight-$2$ codeword.
That allows us to reduce $\PCE$ on $\C, \C'$ to $\PCE$ on their reductions $\C_0, \C_0'$, which is analogous to the proof of \cref{thm:intro-low-R0-lattice-alg}. (However, unlike the analogous situation for lattices, this reduction to solving $\PCE$ on $\C, \C'$ is efficient.)

 We remark that the $2^{n_0 + o(n_0)}$ term in the running time of \cref{thm:intro-low-R0-code-alg} comes from calling the $\PCE$ algorithm of Babai~\cite{conf/soda/BabaiCGQ11} on the reductions $\C_0, \C_0'$ of $\C, \C'$.\footnote{One can improve this running time to $2^{n_0/2 + o(n_0)}$ using the randomized algorithm (resp., to $2^{n_0/3 + o(n_0)}$ using the quantum algorithm) from~\cite{journals/tit/BennettBBDLLW26}.}
 In fact, one can use~\cite{conf/soda/BabaiCGQ11} to achieve a similar running time for search $\PCE$ on direct sum codes $\C, \C' \cong \C_1 \oplus \cdots \oplus \C_m$ where each code $\C_i$ has block length $O(\log n)$. \cref{thm:intro-low-R0-code-alg} is an interesting special case of this because of its connection to structural properties of self-dual codes.

Again by analogy with \cref{item:intro-characteristic-vec} for self-dual lattices, every self-dual code $\C \in \F_2^n$ has characteristic vectors $\vec{w} \in \F_2^n$ (note that characteristic vectors of $\C$ are \emph{not} contained in $\C$).
Such a vector $\vec{w}$ is such that for every $\vec{v} \in \lat$, $\iprod{\vec{v}, \vec{w}} \equiv \half \wt(\vec{v}) \Mod{2}$, where $\wt(\vec{v})$ denotes the Hamming weight of $\vec{v}$. Similarly, one can define $s(\C)$ to be the minimum Hamming weight of a characteristic vector of $\C$, and one can show that $s(\C) = (n - 8K)/2$ for some $K \in \Z_{\geq 0}$.
Furthermore, one can define $\NC(K) := \sup \set{R_0(\C) : \C = \C^{\perp}, s(\C) \geq (n - 8K)/2}$. Again it follows by definition that if $s(\C) \geq (n - 8K)/2$ then $R_0(\C) \leq \NC(K)$.

As observed in~\cite{Elkies-long-shadows}, $\NC(K) \leq \NL(K)$ for every $K$ (see \cref{fact:Nc<Nl}).
So, by the line of work by Elkies and Gaulter on bounding $\NL(K)$, \cref{thm:intro-low-R0-code-alg} gives a polynomial-time algorithm for $\PCE$ on codes with $s(\C) \geq (n - 8K)/K$ for $K = 3$ unconditionally, and, assuming that $\NL(K)$ is bounded for every $K$, for any constant $K$. (Recall that~\cite{Elkies-long-shadows} asked whether this is true for $\NL(K)$. In fact, it also separately asked whether $\NC(K)$ is bounded for every $K$.)

We conclude by noting a consequence of \cref{thm:intro-low-R0-code-alg}.
The \emph{hull} of a code $\C$ is its intersection with its dual code (i.e., the hull of $\C$ is $\C \cap \C^{\perp}$), and it is well known that $\PCE$ is easy to solve on codes with small hulls both in theory and in practice~\cite{journals/tit/Sendrier00,conf/isit/BardetOS19}.
That is, hard instances of $\PCE$ necessarily consist of codes with large hulls. \cref{thm:intro-low-R0-code-alg} is interesting because it gives a natural class of codes $\C$ (ones for which $s(\C)$ is large) that have maximally large hulls but on which $\PCE$ is still easy.

\subsection{Open Questions}
A primary open question is whether one can find an algorithm for $\LIP$ on \emph{arbitrary} self-dual lattices that is faster than the $n^{O(n)}$-time algorithm for general $\LIP$ in~\cite{havivLatticeIsomorphismProblem2014}. Indeed, even a $2^{100n}$-time algorithm would be interesting.
One approach for this would be to compute shortest characteristic vectors for the input lattices $\lat$ and $\lat'$ (and therefore $s(\lat)$ and $s(\lat')$).
Indeed, one can think of $s(\lat)$ as a refinement of the genus of $\lat$ (the genus is certain efficiently computable data about $\lat$; see \cref{sec:prelims-genus}). And, $s(\lat)$ is computable in $2^{O(n)}$ time using known algorithms~\cite{journals/siamcomp/MicciancioV13,conf/focs/AggarwalDS15} for the Closest Vector Problem ($\CVP$), since it amounts to finding a shortest vector in a coset of $\lat$, which is equivalent to $\CVP$.
However, unfortunately, it is not clear how to use $s(\lat), s(\lat')$ to decide whether $\lat \cong \lat'$. Indeed, there are non-isomorphic odd self-dual lattices $\lat, \lat'$ with $s(\lat) = s(\lat')$ (see the pairs of such lattices of rank $n = 18$ and rank $n = 20$ in \cref{tab:n-8}), and it is easy to see that all even self-dual lattices have the all-zeroes vector $\vec{0}$ as their shortest characteristic vector (so that, e.g., shortest characteristic vectors cannot be used to distinguish between $E_8^{3k}$ and $\Lambda_{24}^k$).

Similarly, it is an interesting open question whether one can find faster algorithms for $\PCE$ on arbitrary self-dual codes than those in~\cite{conf/soda/BabaiCGQ11,journals/tit/BennettBBDLLW26} for arbitrary codes.

\subsection{Acknowledgments}

We thank Sasha Golovnev and Noah Stephens-Davidowitz for helpful discussions. In particular, we thank Noah Stephens-Davidowitz~\cite{PersComm-SDlat-StephensD26} for suggesting and allowing us to include his improvement to the $\coNP$ part of the protocol in \cref{thm:intro-coNP-proof}.
Our original protocol required the condition $\NL(K) = O(1)$ whereas the improved version from~\cite{PersComm-SDlat-StephensD26} relaxes this to $\NL(K) = O(\log n/\log \log n)$. The improved protocol is also more similar to our algorithm in \cref{thm:intro-low-R0-lattice-alg}.

%% file: prelims.tex
\section{Preliminaries}
\label{sec:prelims}

We use $\O_n := \set{O \in \R^{n \times n} : O^T O = I_n}$ to denote the set of $n \times n$ \emph{orthogonal matrices}. In particular, $\O_n$ consists of matrices $O \in \R^{n \times n}$ such that $O^T O = I_n$, hence left-multiplication by $O$ does not change the Euclidean norm of any vector in $\R^n$. $\GL_n(\Z)$ is the set of \emph{unimodular matrices}. That is, every matrix $U \in \GL_n(\Z)$ is an $n \times n$ integer matrix with $\det(U) = \pm 1$. We use $\SPD_n(R)$ to denote the set of symmetric positive definite matrices over a ring $R \subseteq \R$ (we will take $R = \Z$ or $R = \R$).

We use $\vec{e}_i \in \Z^n$ (or $\vec{e}_i \in \F_2^n$) for $i \in [n]$ to denote the $i$th standard normal basis vector.

\subsection{Lattices}
A (full-rank) \emph{lattice} $\lat \subset \R^n$ is the set of all integer linear combinations of some $n$ linearly independent vectors $\vec{b}_1, \ldots, \vec{b}_n \in \R^n$.
That is,
\[
\lat = \lat(\basis) := \set[\Big]{\sum_{i=1}^n a_i \vec{b}_i : a_1, \ldots, a_n \in \Z}
\]
is the lattice generated by the \emph{basis} $B := (\vec{b}_1, \ldots, \vec{b}_n) \in \GL_n(\R)$. 
We define the \emph{direct sum} of two lattices $\lat, \lat'$ to be the lattice obtained by concatenating vectors from $\lat$ and $\lat'$:
\[
\lat \oplus \lat' := \set{(\vec{v}^T, (\vec{v}')^T)^T : \vec{v} \in \lat, \vec{v}' \in \lat'} \ \text{.}
\]
The \emph{Gram matrix} of a basis $B$ is defined as $B^T B$,
and we call $Q := B^T B $ the \emph{quadratic form} associated with the lattice $\lat(B)$.

The \emph{minimum distance} of a lattice $\lat$ is defined as
\[
\lambda_1(\lat) := \min_{\vec{v} \in \lat \setminus \set{\vec{0}}} \norm{\vec{v}} \ \text{.}
\]
The \emph{determinant} of a lattice $\lat$ with basis $\basis$ is equal to $\det(\lat) := \det(B^T B)^{1/2}$.
Two bases $\basis, \basis' \in \GL_n(\R)$ generate the same lattice if and only if $\basis' = \basis U$ for some $U \in \GL_n(\Z)$, and so $\det(\lat)$ is well-defined. Minkowski's Theorem asserts that for any lattice $\lat \subset \R^n$,
\begin{equation} \label{eq:minkowski}
\lambda_1(\lat) \leq \sqrt{n} \cdot \det(\lat)^{1/n} \ \text{.}
\end{equation}
The \emph{dual lattice} $\lat^*$ of a lattice $\lat$ is defined as
\[
\lat^* := \set{ \vec{y} \in \lspan(\lat) : \forall \vec{x} \in \lat, \iprod{\vec{x}, \vec{y}} \in \Z} \ \text{.}
\]
It is straightforward to show that $\lat^*$ is indeed a lattice, $(\lat^*)^* = \lat$, and $\det(\lat^*) = 1/\det(\lat)$.

A lattice $\lat \subset \R^n$ is called \emph{integral} if $\lat \subseteq \lat^*$, i.e., if the inner product of any two vectors in $\lat$ is an integer. 
Lattices that satisfy $\lat = \lat^*$ are called \textit{self-dual}, and one can show that $\lat$ is self-dual if and only if it is integral and $\det(\lat) = 1$.
Self-dual lattices are often referred to as \textit{unimodular lattices} because a lattice $\lat(B)$ generated by a basis $B$ is self-dual if and only if its Gram matrix $B^T B$ is unimodular (see, e.g.,~\cite[Lemma 2]{Hunkenschroder-ZLIP-coNP}).

A self-dual lattice $\lat$ is called \emph{odd} if there exists $\vec{x} \in \lat$ such that $\norm{\vec{x}}^2$ is odd, and it is called \emph{even} otherwise \cite[p. 48]{Conway-Sloane-SPLAG}. Even self-dual lattices only exist in dimensions that are multiples of $8$ \cite[Ch. 7, Corollary 18]{Conway-Sloane-SPLAG}.

\subsection{Lattice Isomorphism and Quadratic Form Equivalence}
Given bases $\basis, \basis' \in  \GL_n(\R)$, the lattices $\lat(\basis)$ and $\lat(\basis')$ are \emph{isomorphic} if there exist an orthogonal matrix $O \in \O_n$ and a unimodular matrix $U \in \GL_n(\Z)$ such that $\basis' = O \basis U$. Equivalently, two lattices $\lat, \lat' \subset \R^n$ are isomorphic if there exists $O \in \O_n$ such that $O(\lat) = \lat'$. We define the equivalence class $[\lat] := \set{O(\lat) : O \in \O_n}$ of lattices isomorphic to $\lat$.

We also follow the notation of \cite{conf/eurocrypt/DucasW22} and denote the $R$-\emph{equivalence class} of a quadratic form $Q \in \SPD_n(\R)$ with respect to a ring $R$ as
\[
[Q]_R = \{U^T Q U: U \in \GL_n(R)\} \ \text{.}
\]
We are mainly interested in the case where $R = \Z$, and in this case we sometimes omit the subscript and simply write $[Q]$. 

\subsubsection{The Correspondence Between Lattices and Quadratic Forms} 
\label{sec:prelims-equiv-lip-qfe}

Two bases $\basis, \basis'$ generate isomorphic lattices if and only if their corresponding quadratic forms $Q := \basis^T \basis, ~Q' := (\basis')^T \basis' \in \SPD_n(\R)$ are $\Z$-equivalent. Indeed, if $O B U = B'$ for $O \in \O_n$ and $U \in \GL_n(\Z)$, then
\[
Q' = (B')^T B' = U^T B^T O^T O B U = U^T B^T B U = U^T Q U \ \text{.}
\]
Moreover, if $Q' = U^T Q U$ for $U \in \GL_n(\Z)$ then it must be the case that $B' = OBU$ for some $O \in \O_n$. Indeed, $O := B' (BU)^{-1}$ satisfies $OBU = B'$, and\footnote{We use the notation $A^{-T}$ for a matrix $A$ to mean its inverse transpose $(A^{-1})^T = (A^{T})^{-1}$.}
\[
O^T O
= (BU)^{-T} (B')^T B' (BU)^{-1} 
= B^{-T} U^{-T} Q' U^{-1} B^{-1}
= B^{-T} Q B^{-1}
= I_n \ \text{.}
\]

Quadratic forms $Q$ corresponding to integral lattices are useful to work with from a computational perspective since they contain all integer entries. So, we formalize $\LIP$ in terms of quadratic forms in the sequel.

We also note that the above discussion shows that the map $[\lat] \mapsto [Q]$ defined by taking $\basis$ to be a basis of $\lat$ and setting $Q := \basis^T \basis$ is a well-defined, bijective map from equivalence classes of lattices to equivalence classes of quadratic forms.
We say that a quadratic form $Q$ (respectively, quadratic form equivalence class $[Q]$) \emph{corresponds to} a lattice $\lat$ with basis $B$ (respectively, lattice equivalence class $[\lat]$) if $Q = B^T B$ (respectively, if $[Q] = [B^T B]$).

\subsubsection{Lattice Genus}
\label{sec:prelims-genus}

Taking $R = \Q$ and $R = \Z_p$, where $\Z_p$ denotes the $p$-adic integers, affords a coarser notion of equivalence.
Specifically, $[Q]_{\Z}= [Q']_{\Z}$ implies that $[Q]_{\Q} = [Q']_{\Q}$ and $[Q]_{\Z_p} = [Q']_{\Z_p}$ for all primes $p$, where $\Z_p$ denotes the $p$-adic integers.
The data $([Q]_\Q, ([Q]_{\Z_p})_p)$, where $p$ ranges over primes dividing $2 \det(Q)$, is called the \emph{genus} of $Q$. 
Moreover, the genus of $Q$ is efficient to compute given a prime factorization of $\det(Q)$; see, e.g., \cite{vanWoerden-genus-asisacrypt,conf/eurocrypt/DucasW22}.

So, quadratic forms $Q, Q'$ in distinct genera cannot be $\Z$-equivalent, and therefore $\LIP$ is essentially only a hard problem on quadratic forms in the same genus. 
There are at most two distinct genera that contain self-dual lattices in each dimension. In particular, the genera of full-rank, self-dual lattices of fixed dimension $n$ are the odd self-dual lattices, and, for $n \equiv 0 \Mod{8}$, the even self-dual lattices~\cite{Conway-Sloane-SPLAG}.

\subsection{Codes and Code Equivalence}

A \emph{binary, linear code} $\C$ (or simply \emph{code}) of length $n$ is a linear subspace of $\F_2^n$, and we call a code $\C \subseteq \F_2^n$ of dimension $k$ an $[n, k]$ code. We write $\clen(\C)$ and $\dim(\C)$ for the length and dimension of a code $\C$, respectively.
As with lattices, one can represent $[n, k]$ codes via a basis (i.e., \emph{generator matrix}) $G \in \F_2^{n \times k}$.\footnote{In this work, we use \emph{column} bases both for codes and lattices.}
Specifically, we define the code $\C(G)$ generated by $G$ to be
\[
    \C(G) := \{G \vec{x} : \vec{x} \in \F_2^k\} \ \text{.}
\]
Furthermore, we define the orthogonal subspace of a code $\C \subset \F_2^n$ to be the \emph{dual code} of $\C$, which we denote by $\C^\perp$. That is, 
\[
    \C^\perp := \set{\vec{y} \in \F_2^n : \forall \vec{c} \in \C, \iprod{\vec{c}, \vec{y}} = 0} 
    \ \text{.}
\]
By the rank-nullity theorem, if $\C$ is an $[n, k]$ code then $\C^{\perp}$ is an $[n, n - k]$ code.

We call a code $\C$ \emph{self-orthogonal} if $\C \subseteq \C^{\perp}$, which is equivalent to the condition that $\iprod{\vec{c}_1, \vec{c}_2} = 0$ for all codewords $\vec{c}_1, \vec{c}_2 \in \C$.
If $\C = \C^{\perp}$, then we call $\C$ \emph{self-dual}.
Note that a self-dual code $\C \subset \F_2^n$ must have dimension $k = n/2$, which in particular implies that self-dual codes only exist for even lengths $n$.

We define the \emph{direct sum} of two codes $\C, \C'$ to be the code
\[
\C \oplus \C' := \set{(\vec{c}^T, (\vec{c}')^T)^T : \vec{c} \in \C, \vec{c}' \in \C'} \ \text{.}
\]
We define the \emph{minimum distance} of a code $\C$ as
\[
d(\C) := \min_{\vec{c} \in \C \setminus \set{\vec{0}}} \wt(\vec{c}) \ \text{,}
\]
where $\wt(\cdot)$ denotes Hamming weight.

\subsubsection{Code Equivalence}
Let $\cP_n$ denote the set of $n \times n$ permutation matrices.
We say that two codes $\C, \C' \subseteq \F_2^n$ are \emph{permutationally equivalent}, denoted by $\C \permequiv \C'$, if there exists $P \in \cP_n$ such that $P (\C) = \set{P\vec{c} : \vec{c} \in \C} = \C'$. Recall that $\C(GU) = \C(G)$ for any generator matrix $G \in \F_2^{n \times k}$ and $U \in \GL_k(\F_2)$. So, we have that two codes $\C, \C'$ with respective generator matrices $G, G' \in  \F_2^{n \times k}$ are permutationally equivalent if and only if there exist a permutation matrix $P \in \cP_n$ and a full-rank matrix $U \in \GL_{k}(\F_2)$ such that $P G U = G'$ (equivalently, if there exists $P \in \cP_n$ such that $P(\C) = \C'$).

\subsubsection{Construction A}

Codes and lattices are closely related objects.
In particular, one can construct lattices from binary, linear codes via \emph{Construction A}.
\begin{definition}[Construction-A lattice] \label{def:consA}
    Given a linear code $\C \subset \F_2^n$, we define $\lat_\C := \C + 2 \Z^n$ as the \emph{Construction-A lattice} obtained from $\C$.
\end{definition}

Here we define arithmetic mixing elements of $\F_2$ and $\Z$ by implicitly lifting elements of $\F_2$ to their canonical representatives over the integers (i.e., to $\bit \subset \Z$).
It is straightforward to show that $\lat_\C \subset \R^n$ is a full-rank lattice.
We note that
    \begin{equation*}
        \lat_\C = \{\vec{y} \in \Z^n : \vec{y} \bmod 2 \in \C\} \ \text{,}
    \end{equation*}
where the $\bmod~2$ operation is applied component-wise to each vector.

The simplest self-dual code is the double repetition code
\begin{equation} \label{eq:double-rep-code}
\mathcal{Z} := \{(0,0)^T,(1,1)^T\} \in \F_2^2 \ \text{,}
\end{equation}
which is the code analogue of the lattice $\Z^2$.
(Here we are essentially following the notation of~\cite{Elkies-long-shadows} for $\cZ$, which is suggestive of its correspondence with $\Z$.)
The two objects are related via Construction A: $(1/\sqrt{2}) \cdot \lat_{\mathcal{Z}} \cong \Z^2$ and in fact $(1/\sqrt{2}) \cdot \lat_{\mathcal{Z}^r} \cong \Z^{2r}$ for any $r \in \Z_+$, where $\cZ^r := \cZ \oplus \cdots \oplus \cZ$.
More generally, applying Construction A to a self-dual code yields a scaled self-dual lattice; see \cref{lem:consA_s(L)}.

\subsection{Characteristic Vectors of Self-Dual Lattices and Codes}

We now introduce characteristic vectors of self-dual lattices and codes, which is the main technical tool that we leverage in this work. 
If $\lat$ is a self-dual lattice, a vector $\vec{w} \in \lat$ is called a \emph{characteristic vector} (also called a \emph{parity vector} in~\cite{Conway-Sloane-SPLAG}) if for all $\vec{v} \in \lat$, 
\[
\iprod{\vec{v}, \vec{w}} \equiv \iprod{\vec{v}, \vec{v}} \Mod 2 \ \text{.}
\]

Let $\wt(\vec{y})$ denote the Hamming weight of a vector $\vec{y} \in \F_2^n$.
 A characteristic vector of a self-dual binary code $\C \subset \F_2^n$ is a vector $\vec{w} \in \F_2^n$ that satisfies
\begin{equation*}
    \iprod{\vec{c}, \vec{w}} \equiv \frac{1}{2} \wt(\vec{c}) \Mod 2
\end{equation*}
for all $\vec{c} \in \C$.
We note that $\wt(\vec{c})$ is necessarily even for every codeword $\vec{c}$ in a self-dual code $\C$, and so $\frac{1}{2} \wt(\vec{c})$ is always an integer.
Unlike characteristic vectors of lattices, the characteristic vectors of a self-dual code are not themselves codewords.

We next present several important properties of characteristic vectors for self-dual lattices and codes.

\begin{fact}[Properties of characteristic vectors.] \label{fact:char-vec-properties}
Let $\lat \subset \R^n, \lat' \subset \R^{n'}$ be self-dual lattices, and let $\C \subset \F_2^n, \C' \subset \F_2^{n'}$ be self-dual codes.
\begin{enumerate}
    \item \label{item:char-vec-exists-shadow}
    The lattice $\lat$ (resp., code $\C$) has a characteristic vector $\vec{w}$ (resp., $\vec{w}')$. Moreover, the elements of the coset $2 \lat + \vec{w}$ for such $\vec{w}$ (resp., $\C + \vec{w}'$ for such $\vec{w}'$) are exactly the characteristic vectors of $\lat$ (resp, $\C$). 

    \item \label{item:char-vec-explicit}
    If $\basis = (\vec{b}_1, \ldots, \vec{b}_n)$ is a basis of $\lat$ with dual basis $\basis^* = (\vec{b}_1^*, \ldots, \vec{b}_n^*)$, then $\vec{w} := \sum_{i=1}^n \norm{\vec{b}_i^*}^2 \vec{b}_i$ is a characteristic vector of $\lat$.

    \item \label{item:char-vec-congruences} Every characteristic vector $\vec{w} \in \lat$ (resp., $\vec{w}'$ of $\C$) satisfies $\norm{\vec{w}}^2 \equiv n \Mod{8}$ (resp., $\wt(\vec{w}') \equiv n/2 \Mod{4}$).
    
    \item \label{item:char-vec-concatenation} The characteristic vectors of the lattice $\lat \oplus \lat'$ (resp., code $\C \oplus \C'$) are exactly the concatenations $(\vec{w}^T, (\vec{w}')^T)^T$ of characteristic vectors $\vec{w}$ for $\lat$ and $\vec{w}'$ for $\lat'$ (resp., $\vec{w}$ for $\C$ and $\vec{w}'$ for $\C'$).
    
\end{enumerate}
\end{fact}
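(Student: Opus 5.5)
We prove the four items in order, treating lattices and codes in parallel.

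\emph{Existence and coset structure (\cref{item:char-vec-exists-shadow}).} For lattices, first check that $\vec{v} \mapsto \iprod{\vec{v},\vec{v}} \bmod 2$ is a group homomorphism $\lat \to \Z/2\Z$. This holds because $\iprod{\vec{u}+\vec{v},\vec{u}+\vec{v}} = \iprod{\vec{u},\vec{u}} + \iprod{\vec{v},\vec{v}} + 2\iprod{\vec{u},\vec{v}}$ and $\lat$ is integral. Such a homomorphism is determined by its values on a basis, and it lifts to a linear functional that is integer-valued on $\lat$. Since $\lat = \lat^*$, that functional is represented by some $\vec{w} \in \lat$. Two characteristic vectors $\vec{w}, \vec{w}'$ satisfy $\iprod{\vec{v}, \vec{w}-\vec{w}'} \in 2\Z$ for all $\vec{v} \in \lat$, i.e.\ $\tfrac12(\vec{w}-\vec{w}') \in \lat^* = \lat$. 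Conversely, adding $2\lat$ to $\vec{w}$ preserves the congruence. Together these give the coset statement.

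For codes the argument is the same. The map $\vec{c} \mapsto \tfrac12 \wt(\vec{c}) \bmod 2$ is additive on $\C$, because $\wt(\vec{c}+\vec{c}') = \wt(\vec{c})+\wt(\vec{c}') - 2|\supp \vec{c} \cap \supp \vec{c}'|$ and the overlap is even since $\C$ is self-orthogonal. Any $\F_2$-linear functional on $\C$ extends to $\F_2^n$ and is therefore given by some $\vec{w}'$. Two solutions differ by an element of $\C^\perp = \C$.

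\emph{Explicit formula and direct sums (\cref{item:char-vec-explicit,item:char-vec-concatenation}).} For \cref{item:char-vec-explicit}, use self-duality to write $\vec{v} = \sum_j a_j \vec{b}_j^*$ with $a_j = \iprod{\vec{v},\vec{b}_j} \in \Z$. Then
\[
\iprod{\vec{v},\vec{v}} = \sum_j a_j^2 \norm{\vec{b}_j^*}^2 + 2\sum_{i<j} a_i a_j \iprod{\vec{b}_i^*,\vec{b}_j^*},
\]
and all these inner products are integers because $\vec{b}_j^* \in \lat$. Using $a_j^2 \equiv a_j \pmod 2$, this is congruent to $\sum_j a_j \norm{\vec{b}_j^*}^2 = \iprod{\vec{v},\vec{w}}$. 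For \cref{item:char-vec-concatenation}, the defining congruence splits additively across the summands. Testing on vectors supported in one summand shows each block must be characteristic for that summand, and the converse is immediate. Codes are handled identically.

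\emph{The mod $8$ congruence (\cref{item:char-vec-congruences}).} This is the main obstacle. First, all characteristic vectors have the same norm mod $8$: $\norm{\vec{w}+2\vec{v}}^2 = \norm{\vec{w}}^2 + 4(\iprod{\vec{v},\vec{w}} + \iprod{\vec{v},\vec{v}})$, and the bracket is even. So $\norm{\vec{w}}^2 \bmod 8$ is an invariant $\sigma(\lat)$. It is additive under $\oplus$ by \cref{item:char-vec-concatenation}, and $\sigma(\Z) = 1$.

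To finish, I would cite the standard fact (van der Blij's lemma, \cite{Conway-Sloane-SPLAG}) that $\norm{\vec{w}}^2 \equiv$ signature $= n \pmod 8$. Alternatively, argue directly. By the Hasse--Minkowski classification of odd indefinite unimodular forms, $\lat \oplus \Z^{m}$ together with a $(-1)$-part is isomorphic to some $I_{p,q}$. Equivalently, apply the Milgram/Gauss sum formula $\sum_{\vec{x} \in \lat/2\lat} e^{\pi i \norm{\vec{x}}^2/2} = 2^{n/2} e^{\pi i \sigma/4}$, which via a theta-function transformation yields $\sigma \equiv n$. I would simply invoke the reference.

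For codes, apply the lattice statement to $\tfrac{1}{\sqrt2}\lat_\C$. Lifting a code characteristic vector $\vec{w}'$ to $\{0,1\}^n$, the vector $\tfrac{1}{\sqrt2}(2\vec{w}' + \vec{1})$ should be characteristic for $\tfrac{1}{\sqrt2}\lat_\C$; this is exactly the content of \cref{lem:consA_s(L)}. Its squared norm is $\tfrac12(n + 8\wt(\vec{w}') - 4\wt(\vec{w}') \cdot 0 + \dots)$; computing coordinatewise, each entry of $2\vec{w}' + \vec{1}$ lies in $\{1,3\}$, so the squared norm equals $\tfrac12(n + 8\wt(\vec{w}'))$. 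Setting this congruent to $n \pmod 8$ gives $4\wt(\vec{w}') \equiv n/2 \cdot 1 \pmod 8$, which after rescaling yields $\wt(\vec{w}') \equiv n/2 \pmod 4$. The constants here must be double-checked against that lemma.
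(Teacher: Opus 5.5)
\textbf{Summary.} Your arguments for \cref{item:char-vec-exists-shadow,item:char-vec-explicit,item:char-vec-concatenation} and for the lattice half of \cref{item:char-vec-congruences} are correct, but the code half of \cref{item:char-vec-congruences} has a genuine error. The paper simply cites Conway--Sloane (p.~xxxiv) for the first three items and notes that the direct-sum item is immediate. Your invariance-mod-$8$ computation followed by van der Blij's lemma has the same content, in more detail.

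\textbf{The gap.} The vector $\frac{1}{\sqrt2}(2\vec{w}'+\vec{1})$ is not a characteristic vector of $\frac{1}{\sqrt2}\lat_\C$. Test it against $\vec{v} = \sqrt2\,\vec{e}_1 = \frac{1}{\sqrt2}(2\vec{e}_1) \in \frac{1}{\sqrt2}\lat_\C$. Here $\iprod{\vec{v},\vec{v}} = 2$ is even, but the inner product of $\vec{v}$ with $\frac{1}{\sqrt2}(2\vec{w}'+\vec{1})$ is $2w'_1+1$, which is odd. For example, with $\C = \cZ$ and $\vec{w}' = (1,0)^T$ you get $\frac{1}{\sqrt2}(3,1)^T$, whose squared norm is $5 \not\equiv 2 \pmod 8$.

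This is also not what \cref{lem:consA_s(L)} says. Its proof shows that the characteristic vectors of $\frac{1}{\sqrt2}\lat_\C$ are $\sqrt2(\vec{c}'+2\vec{z})$ with $\vec{c}'$ characteristic for $\C$.

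Your closing arithmetic fails independently. The congruence $\frac12(n+8\wt(\vec{w}')) \equiv n \pmod 8$ means $4\wt(\vec{w}') \equiv n/2 \pmod 8$. No rescaling turns this into $\wt(\vec{w}') \equiv n/2 \pmod 4$. It is in fact unsatisfiable whenever $n/2$ is odd (e.g., $\C = \cZ$), since $4\wt(\vec{w}')$ is $0$ or $4$ mod $8$.

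\textbf{The repair.} Lift $\vec{w}'$ to $\{0,1\}^n$ and use $\sqrt2\,\vec{w}' = \frac{1}{\sqrt2}(2\vec{w}')$ instead. It lies in $\frac{1}{\sqrt2}\lat_\C$ because $2\vec{w}' \equiv \vec{0} \pmod 2$. For $\vec{v} = \frac{1}{\sqrt2}(\vec{c}+2\vec{z})$ we have $\iprod{\vec{v},\sqrt2\,\vec{w}'} = \iprod{\vec{c},\vec{w}'} + 2\iprod{\vec{z},\vec{w}'}$ and $\iprod{\vec{v},\vec{v}} \equiv \frac12\wt(\vec{c}) \pmod 2$. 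So $\sqrt2\,\vec{w}'$ is characteristic exactly when $\vec{w}'$ is characteristic for $\C$.

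Its squared norm is $2\wt(\vec{w}')$. The lattice congruence, applied to the rank-$n$ self-dual lattice $\frac{1}{\sqrt2}\lat_\C$, then gives $2\wt(\vec{w}') \equiv n \pmod 8$, i.e., $\wt(\vec{w}') \equiv n/2 \pmod 4$.

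This uses only the computation inside the proof of \cref{lem:consA_s(L)}, not \cref{item:char-vec-congruences} itself, so there is no circularity. The Hasse--Minkowski and Milgram sketches for the lattice case are unnecessary, since both you and the paper ultimately rely on the citation.
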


\cref{item:char-vec-exists-shadow,item:char-vec-explicit,item:char-vec-congruences} are given in \cite[p. xxxiv]{Conway-Sloane-SPLAG}.
\cref{item:char-vec-explicit} is originally due to Gerstein~\cite{Gerstein04}. 
\cref{item:char-vec-concatenation} follows because elements of $\lat \oplus \lat'$ (resp., $\C \oplus \C'$) are concatenations of elements of $\lat, \lat'$ (resp., $\C, \C'$).
The coset $\lat + \half \vec{w}$ of $\lat$ (resp., $\C + \vec{w}'$ of $\C$) in \cref{item:char-vec-exists-shadow} is called the \emph{shadow} of $\lat$ (resp., $\C$).

Following the notation of Gaulter~\cite{Gaulter-S2}, we define $s(\lat)$ to be the squared Euclidean norm of a shortest characteristic vector of a self-dual lattice $\lat$.
\begin{equation} \label{eq:char-vec-squared-length}
s(\lat) := \min \set{\norm{\vec{w}}^2 : \vec{w} \in \lat \text{ and } \forall \vec{v} \in \lat, \iprod{\vec{v}, \vec{w}} \equiv \iprod{\vec{v}, \vec{v}} \Mod 2} \ \text{.}
\end{equation}
We also overload the notation $s(\cdot)$, and analogously define $s(\C)$ to be the minimum Hamming weight of a characteristic vector for a self-dual code $\C$.
\begin{equation} \label{eq:char-vec-codes}
s(\C) := \min \set{\wt(\vec{w}) : \vec{w} \in \F_2^n \text{ and } \forall \vec{c} \in \C, \iprod{\vec{c}, \vec{w}} \equiv \frac{1}{2} \wt(\vec{c}) \Mod 2} \ \text{.}
\end{equation}
It will be clear from context whether we are applying $s$ to a lattice versus a code.
We next observe several useful properties of $s$.
\begin{fact}[Properties of $s$.] \label{fct:properties-s}
  Let $\lat \subset \R^n, \lat' \subset \R^{n'}$ be self-dual lattices, and let $\C \subset \F_2^n, \C' \subset \F_2^{n'}$ be self-dual codes. The following hold:
    \begin{enumerate}
    \item \label{item:properties-s-orthogonal-permutation-matrix} If $O \in \O_n$ and $P \in \cP_n$, $s(O\lat) = s(\lat)$ and $s(P\C) = s(\C)$.
    \item \label{item:properties-s-congruence} $s(\lat) \equiv \rank(\lat) \Mod{8}$ and $s(\C) \equiv \dim(\C) \Mod{4}$.
    \item \label{item:properties-s-direct-sum} $s(\lat \oplus \lat') = s(\lat) + s(\lat')$ and $s(\C \oplus \C') = s(\C) + s(\C')$.
    \item \label{item:properties-s-Zr} For $n \in \Z_+$, $s(\Z^n) = n$ and $s(\cZ^n) = n$.
    \end{enumerate}
\end{fact}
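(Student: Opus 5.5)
The four items are elementary, so I will prove them one at a time, using only the definitions in \eqref{eq:char-vec-squared-length} and \eqref{eq:char-vec-codes} together with \cref{fact:char-vec-properties}.

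For \cref{item:properties-s-orthogonal-permutation-matrix}, I will check that $O$ maps characteristic vectors of $\lat$ bijectively onto characteristic vectors of $O\lat$ and preserves norms. If $\vec{w}$ is characteristic for $\lat$ and $\vec{v}' = O\vec{v} \in O\lat$, then
\[
\iprod{O\vec{v}, O\vec{w}} = \iprod{\vec{v}, \vec{w}} \equiv \iprod{\vec{v},\vec{v}} = \iprod{O\vec{v},O\vec{v}} \Mod 2 .
\]
The converse follows by applying the same argument to $O^{-1} = O^T$. Hence the two minima in \eqref{eq:char-vec-squared-length} are taken over isometric sets and coincide. For codes the argument is identical: a permutation $P$ preserves both inner products over $\F_2$ and Hamming weight.

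For \cref{item:properties-s-congruence}, the minimum in the definition is attained by some characteristic vector, since $2\lat+\vec{w}$ is a discrete nonempty set and $\C+\vec{w}'$ is finite. By \cref{fact:char-vec-properties}, \cref{item:char-vec-congruences}, every characteristic vector has $\norm{\vec{w}}^2 \equiv n \Mod 8$, and $n = \rank(\lat)$ since lattices are full rank, so the congruence holds for $s(\lat)$ in particular. For codes the same item gives $\wt \equiv n/2 \Mod 4$, and $\dim(\C) = n/2$ for a self-dual code, which gives the claim.

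For \cref{item:properties-s-direct-sum}, I will use \cref{fact:char-vec-properties}, \cref{item:char-vec-concatenation}. The characteristic vectors of $\lat\oplus\lat'$ are exactly the concatenations $(\vec{w},\vec{w}')$ with $\vec{w}$ characteristic for $\lat$ and $\vec{w}'$ characteristic for $\lat'$. The squared norm of such a concatenation is $\norm{\vec{w}}^2+\norm{\vec{w}'}^2$, and the two components range independently. So the minimum of the sum equals the sum of the minima. Hamming weight is also additive under concatenation, so the same argument handles codes.

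For \cref{item:properties-s-Zr}, by \cref{item:properties-s-direct-sum} it suffices to treat $n=1$. For $\Z$, the characteristic vectors are the integers $w$ with $vw \equiv v^2 \equiv v \Mod 2$ for all $v \in \Z$, i.e.\ the odd integers, so $s(\Z)=1$. For $\cZ$, a vector $\vec{w}$ is characteristic iff $\iprod{(1,1),\vec{w}} \equiv \tfrac12\cdot 2 = 1 \Mod 2$, i.e.\ iff $\vec{w}$ has odd weight; the minimum weight is therefore $1$, so $s(\cZ)=1$.

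None of the steps is a real obstacle. The only point needing care is that the minima defining $s$ are attained, which is what lets the congruence in \cref{item:properties-s-congruence} transfer from every characteristic vector to the minimum itself.
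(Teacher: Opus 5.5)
Your proposal is correct, and for the first three items it is the paper's argument: invariance of the characteristic condition under $O$ (resp.\ $P$), the congruence read off from \cref{fact:char-vec-properties}, \cref{item:char-vec-congruences}, and additivity from the concatenation property \cref{item:char-vec-concatenation}. The only difference is in \cref{item:properties-s-Zr}. There you reduce to $n=1$ via \cref{item:properties-s-direct-sum} and compute $s(\Z)=s(\cZ)=1$ by hand. The paper instead exhibits $\vec{1}^n$ (resp.\ $\vec{1}^n\otimes(1,0)^T$) and observes that any vector of squared norm (resp.\ weight) less than $n$ is orthogonal to some $\vec{e}_i$ (resp.\ $\vec{e}_i\otimes(1,1)^T$), so it is not characteristic. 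Both routes work; yours is slightly shorter, while the paper's does not rely on the additivity item.
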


\begin{proof}
\cref{item:properties-s-orthogonal-permutation-matrix} follows by noting that $\vec{w}$ (resp., $\vec{w}'$) is a characteristic vector of $\lat$ (resp., $\C$) if and only if $O \vec{w}$ (resp., $P \vec{w}'$) is a characteristic vector of $O \lat$ (resp., $P \C$).
\cref{item:properties-s-congruence} follows from \cref{fact:char-vec-properties}, \cref{item:char-vec-congruences}. 
\cref{item:properties-s-direct-sum} follows from \cref{fact:char-vec-properties}, \cref{item:char-vec-concatenation}. For \cref{item:properties-s-Zr}, first note that $\vec{1}^n$ is a characteristic vector of $\Z^n$ and $\vec{1}^n \otimes (1,0)^T$ is a characteristic vector of $\cZ^n$, where we use $\otimes$ to denote the Kronecker product. To see that there are no shorter characteristic vectors, observe that any vector $\vec{w} \in \Z^n$ (resp., $\vec{w}' \in \F_2^{2n}$) with $\norm{\vec{w}}^2 < n$ (resp., $\wt(\vec{w}') < n$) must have inner product $0$ with $\vec{e}_i \in \Z^n$ for some $i \in [n]$ (resp., $\vec{e}_i \otimes (1, 1)^T \in \F_2^{2n}$ for some $i \in [n]$).
\end{proof}

By \cref{fct:properties-s}, \cref{item:properties-s-direct-sum} and \cref{item:properties-s-Zr}, if $\lat = \lat_0 \oplus \Z^r$ for a reduced lattice $\lat_0$ (resp., $\C = \C_0 \oplus \cZ^r$ for a reduced code $\C_0$) then $s(\lat) = s(\lat_0) + r$ (resp., $s(\C) = s(\C_0) + r$).

We next note that given a code $\C \subset \F_2^n$, $\C$ is self-dual if and only if the scaled Construction A lattice $(1/\sqrt{2}) \cdot \lat_\C$ is self-dual. As discussed in \cite{Elkies-long-shadows}, there is a simple relationship between $s(\C)$ and $s(\lat)$ in this case.

\begin{lemma}
\label{lem:consA_s(L)}
 Given a code $\C \subseteq \F_2^n$, $\C$ is a self-dual code if and only if $\frac{1}{\sqrt{2}} \cdot \lat_\C \subset \R^n$ is a self-dual lattice.
 Furthermore, if $s(\C) = (n - 8K)/2$ for $K \in \Z_{\geq 0}$ then $s(\lat) = n - 8K$.
\end{lemma}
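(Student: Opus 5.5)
Write $\lat := \frac{1}{\sqrt 2}\lat_\C$. The proof has two parts: first the self-duality equivalence, then an explicit description of the shadow of $\lat$ in terms of characteristic vectors of $\C$, from which $s(\lat) = 2s(\C)$ follows by comparing norms.

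\emph{Self-duality.} For $\vec x,\vec y\in\lat_\C$ we have $\iprod{\vec x/\sqrt2,\vec y/\sqrt2}=\frac12\iprod{\vec x,\vec y}$. Reducing mod $2$, $\iprod{\vec x,\vec y}\equiv \iprod{\vec x\bmod 2,\vec y\bmod 2}$. So $\lat$ is integral iff every pair of codewords has even inner product, i.e., iff $\C\subseteq\C^\perp$; here we use that $\C+2\Z^n$ covers all residues in $\C$, and that inner products involving $2\Z^n$ are automatically even. Next, $[\Z^n:\lat_\C]=2^{n-\dim\C}$, so $\det(\lat_\C)=2^{n-\dim\C}$ and $\det(\lat)=2^{-n/2}\cdot 2^{n-\dim\C}$. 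This equals $1$ iff $\dim\C=n/2$. Using the preliminaries' criterion (self-dual $\iff$ integral with determinant $1$), $\lat$ is self-dual iff $\C$ is self-orthogonal of dimension $n/2$, i.e., iff $\C$ is self-dual.

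\emph{Characteristic vectors.} Assume $\C$ is self-dual. I claim that for any characteristic vector $\vec w$ of $\C$, the vector $\vec u:=\sqrt2\,\vec w$ (with $\vec w$ lifted to $\{0,1\}^n$) is characteristic for $\lat$. First, $\vec u\in\lat$: since $\C$ is self-dual, $\vec 1$ lies in $\C^\perp=\C$, so $2\vec w\equiv 0\in\C$ and hence $2\vec w\in\lat_\C$. Now take $\vec v=\vec x/\sqrt2$ with $\vec x=\vec c+2\vec z$. Then $\iprod{\vec v,\vec u}=\iprod{\vec x,\vec w}$ and $\iprod{\vec v,\vec v}=\frac12\norm{\vec x}^2$. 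Expanding,
\[
\tfrac12\norm{\vec x}^2=\tfrac12\wt(\vec c)+2\iprod{\vec c,\vec z}+2\norm{\vec z}^2 \equiv \tfrac12\wt(\vec c) \pmod 2,
\]
using that $\vec c$ is a $0/1$ vector, so $\norm{\vec c}^2=\wt(\vec c)$. Similarly $\iprod{\vec x,\vec w}\equiv\iprod{\vec c,\vec w}\pmod 2$. These agree by the code characteristic condition, so $\vec u$ is characteristic for $\lat$.

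By \cref{fact:char-vec-properties}, \cref{item:char-vec-exists-shadow}, the characteristic vectors of $\lat$ are then exactly $\vec u+2\lat=\sqrt2(\vec w+\lat_\C)$. Since $\vec w+\C$ is the full set of code characteristic vectors, this set equals $\sqrt2\,(\tilde{\vec w}+2\Z^n)$, where $\tilde{\vec w}$ ranges over $\{0,1\}$-lifts of code characteristic vectors.

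\emph{Minimization.} The main point is the norm comparison. For $\tilde{\vec w}\in\{0,1\}^n$ and $\vec z\in\Z^n$, each coordinate satisfies $|\tilde w_i+2z_i|\ge \tilde w_i$, so $\norm{\tilde{\vec w}+2\vec z}^2\ge\wt(\tilde{\vec w})$, with equality at $\vec z=0$. Hence
\[
s(\lat)=2\min\wt(\tilde{\vec w})=2s(\C),
\]
and $s(\C)=(n-8K)/2$ gives $s(\lat)=n-8K$. The only step needing care is the identification of the lattice shadow with the scaled code shadow plus $2\Z^n$; the coordinatewise minimization is then immediate.
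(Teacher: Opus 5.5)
Your proof is correct and takes essentially the same route as the paper. Both arguments identify the characteristic vectors of $\frac{1}{\sqrt2}\lat_\C$ with $\sqrt2\,(\C'+2\Z^n)$, where $\C'$ is the set of characteristic vectors of $\C$, via the same mod-$2$ computation, and then minimize norms coordinatewise; the differences are minor: you justify the minimization more explicitly, you prove the self-duality equivalence directly via integrality and $[\Z^n:\lat_\C]=2^{n-\dim\C}$ rather than citing Conway--Sloane, and your appeal to $\vec 1\in\C$ is unnecessary since $2\vec w\in 2\Z^n\subseteq\lat_\C$ holds trivially.
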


\begin{proof}
    The first part of the lemma follows from \cite[p. 183, Theorem 2]{Conway-Sloane-SPLAG}.
    For the ``furthermore'' part of the lemma, it is sufficient to show that $\lat' = (1/\sqrt{2}) \cdot (\C' + 2 \Z^n)$ is the shadow of $\lat$, where $\C'$ is the shadow of $\C$. In particular, if $(n - 8K)/2$ is the minimum Hamming weight of $C'$, the definition of Construction A guarantees that $(n - 8K)/4$ is the minimum squared Euclidean norm of a vector in $\lat'$.
    Indeed, assuming $\lat'$ is the shadow of $\lat$, $2 \lat'$ is the set of characteristic vectors of $\lat$. It follows that the minimum squared norm of a characteristic vector of $\lat$ is $n - 8K$.

    To prove the claim, consider $\vec{w} \in \lat'$ and $\vec{v} \in \lat$. By Construction A, $\vec{w} = (1/\sqrt{2}) \cdot (\vec{c}'+2\vec{z}_1)$ and $\vec{v} = (1/\sqrt{2}) \cdot (\vec{c}+2\vec{z}_2)$ for some $\vec{c}' \in \C'$, $\vec{c} \in \C$, and $\vec{z}_1, \vec{z}_2 \in \Z^n$.
    We then have that
    \[
    \iprod{\vec{v}, 2 \vec{w}}
 = \iprod*{\frac{1}{\sqrt{2}} \cdot (\vec{c} + 2 \vec{z}_2), \sqrt{2} \cdot (\vec{c}' + 2 \vec{z}_1)}
    \equiv \iprod{\vec{c}, \vec{c}'} \Mod{2} \ \text{,}
    \]
    and
    \[
    \iprod{\vec{v}, \vec{v}} 
   = \iprod*{\frac{1}{\sqrt{2}} \cdot (\vec{c} + 2 \vec{z}_2), \frac{1}{\sqrt{2}} \cdot (\vec{c} + 2 \vec{z}_2)}
    \equiv \half \cdot \iprod{\vec{c}, \vec{c}}
    \equiv \half \cdot \wt(\vec{c}) \Mod{2} \ \text{.}
    \]
   Combining these expressions, we get that $\iprod{\vec{c}, \vec{c}'} \equiv \half \cdot \wt(\vec{c}) \Mod{2}$ for all $\vec{c} \in \C$ if and only if $\iprod{\vec{v}, 2 \vec{w}} \equiv \iprod{\vec{v}, \vec{v}} \Mod{2}$ for all $\vec{v} \in \lat$. In other words, $2 \vec{w}$ is a characteristic vector of $\lat$ if and only if $\vec{c}'$ is a characteristic vector of $\C$.
\end{proof}

\subsection{Reduced Lattices and Codes}
Following~\cite[p.~$414$]{Conway-Sloane-SPLAG}, we define a \emph{reduced lattice} (also called an \emph{initial lattice}) $\lat_0$ to be a self-dual lattice with $\lambda_1(\lat_0)^2 \geq 2$.
Similarly, we define a \emph{reduced code} $\C_0$ to be a self-dual code with $d(\C_0) \geq 4$.

Importantly, every self-dual lattice (resp., code) is isomorphic (resp., permutationally equivalent) to the direct sum of a reduced lattice (resp., reduced code) and $\Z^r$ (resp., $\cZ^r$) for some $r \geq 0$.

\begin{fact}[{\cite[p. 414]{Conway-Sloane-SPLAG}, \cite{Elkies-long-shadows}}]
\label{fct:reduced-direct-sum}
The following hold.
\begin{enumerate}
\item Every self-dual lattice $\lat \subset \R^n$ is isomorphic to $\lat_0 \oplus \Z^r$ for a reduced lattice $\lat_0$ (which is unique up to isomorphism) and some $r \geq 0$.
\item Every self-dual code $\C \subset \F_2^n$ is isomorphic to $\C_0 \oplus \cZ^r$ for a reduced code $\lat_0$ (which is unique up to permutational equivalence) and some $r \geq 0$.
\end{enumerate}
\end{fact}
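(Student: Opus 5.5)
We prove the lattice statement first and then the code statement, which follows the same pattern. Both rest on the same observation: in a self-dual (integral) object, the norm-$1$ vectors (resp., weight-$2$ codewords) split off as an orthogonal direct summand.

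\emph{Lattices.} Let $\lat$ be self-dual and let $S := \set{\vec{v} \in \lat : \norm{\vec{v}}^2 = 1}$.
\begin{enumerate}
\item For $\vec{u}, \vec{v} \in S$, integrality and Cauchy--Schwarz give $\iprod{\vec{u}, \vec{v}} \in \set{-1, 0, 1}$. The value $\pm 1$ forces $\vec{v} = \pm \vec{u}$. So $S$ consists of $\pm$ pairs of pairwise orthogonal unit vectors $\vec{u}_1, \dots, \vec{u}_r$, and $M := \lat(\vec{u}_1, \dots, \vec{u}_r) \cong \Z^r$.
\item For any $\vec{x} \in \lat$, each coefficient $\iprod{\vec{x}, \vec{u}_i}$ is an integer. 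Hence $\vec{x} - \sum_i \iprod{\vec{x}, \vec{u}_i} \vec{u}_i$ lies in $\lat_0 := \lat \cap M^{\perp}$, and $\lat = M \oplus \lat_0$ as an orthogonal sum.
\item Since $\det(\lat) = \det(M) \det(\lat_0)$ and $\lat_0$ is integral, $\lat_0$ is self-dual on its span.
\item Any unit vector of $\lat_0$ would lie in $S$ while being orthogonal to every $\vec{u}_i$, which is impossible. Integrality therefore gives $\lambda_1(\lat_0)^2 \geq 2$, so $\lat_0$ is reduced.
\end{enumerate}

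\emph{Uniqueness for lattices.} Suppose $\lat \cong \lat_0' \oplus \Z^{r'}$ with $\lat_0'$ reduced. Every norm-$1$ vector of $\lat_0' \oplus \Z^{r'}$ has the form $(\vec{a}, \vec{b})$ with $\norm{\vec{a}}^2 + \norm{\vec{b}}^2 = 1$ and both norms integers. Since $\lat_0'$ has no norm-$1$ vectors, $\vec{a} = \vec{0}$, so the norm-$1$ vectors are exactly $\pm \vec{e}_i$ in the $\Z^{r'}$ factor. The isomorphism preserves $S$, so $r' = r$. Taking orthogonal complements of $\lspan(S)$ then gives $\lat_0' \cong \lat_0$.

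\emph{Codes.} Let $\C$ be self-dual. Every codeword has even weight, so the minimal nonzero weight is at least $2$. Let $T$ be the set of weight-$2$ codewords.
\begin{enumerate}
\item Two distinct elements of $T$ cannot share exactly one coordinate. If $\set{i,j}$ and $\set{j,k}$ both lie in $T$, their sum $\set{i,k}$ is also in $T$, but their inner product is $1$, contradicting self-orthogonality. So the supports of elements of $T$ are pairwise disjoint pairs, giving $r$ disjoint pairs $\set{i_t, j_t}$.
\item Every $\vec{c} \in \C$ is orthogonal to $\vec{e}_{i_t} + \vec{e}_{j_t}$, so $c_{i_t} = c_{j_t}$. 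Hence $\vec{c}$ decomposes as its pair part plus a codeword supported off the pairs, and after permuting coordinates $\C = \cZ^r \oplus \C_0$.
\item The code $\C_0$ is self-orthogonal, and counting dimensions gives $\dim \C_0 = (n-2r)/2$, so $\C_0$ is self-dual.
\item $\C_0$ has no weight-$2$ words, since any such word would lie in $T$ yet be supported off the pairs. All its weights are even, so $d(\C_0) \geq 4$.
\end{enumerate}

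\emph{Uniqueness for codes.} As for lattices, the weight-$2$ words of $\C_0' \oplus \cZ^{r'}$ are exactly those coming from the $\cZ^{r'}$ factor. A permutation maps weight-$2$ words to weight-$2$ words, so it matches the pairs and restricts to an equivalence of the complements.

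The only nonroutine step is showing that the weight-$2$ supports are disjoint, and the self-orthogonality argument in the first code step handles it. The lattice case is immediate from integrality.
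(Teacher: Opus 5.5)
Your proof is correct and takes the same approach the paper relies on. The paper does not prove this fact itself; it cites Conway--Sloane and Elkies. Its proofs of \cref{thm:computing-reduced-lattice} and \cref{lem:computing-reduced-code} nonetheless use exactly your mechanism: split off the pairwise orthogonal norm-$1$ vectors and the disjointly supported weight-$2$ codewords, respectively. Your $\lat \cap M^{\perp}$ coincides with the paper's orthogonal projection because the sum is orthogonal.
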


We call a lattice $\lat_0$ (resp., $\C_0$) such that $\lat \cong \lat_0 \oplus \Z^r$ (resp., $\C \cong \C_0 \oplus \cZ^r$) as in \cref{fct:reduced-direct-sum} a \emph{reduction} of $\lat$ (resp., $\C$).
We define $R_0(\lat)$ (resp., $R_0(\C)$) to be the rank of a reduction of $\lat$ (resp., length of a reduction of $\C$). Note that $R_0$ is well-defined.

We will use the following definition, which appears in \cite{Elkies-long-shadows}.
\begin{equation} \label{eq:NLK}
\NL(K) := \sup \; \{R_0(\lat) : \lat = \lat^*, s(\lat) \geq \rank(\lat) - 8K \} \ \text{.}
\end{equation}
Equivalently, $\NL(K)$ is the maximum rank of a full-rank reduced lattice $\lat_0 \subset \R^n$ with $s(\lat_0) \geq n - 8K$, or infinity if there are such lattices $\lat_0$ of arbitrarily high rank $n$.
We also define the analogous quantity to $\NL(K)$ for codes.\footnote{The quantities defined in \cref{eq:NLK,eq:NCK} appear in~\cite{Elkies-long-shadows} as $N_K$ and $n_K$, respectively.}
\begin{equation} \label{eq:NCK}
\NC(K) := \sup \; \{R_0(\C) : \C_0 = \C_0^\perp, s(\C) \geq (\clen(\C) - 8K)/2 \} \ \text{.}
\end{equation}
Equivalently, $\NC(K)$ is the maximum length of a reduced code $\C_0 \subset \F_2^n$ with $s(\C_0) \geq (n - 8K)/2$, or infinity if there are such codes $\C_0$ with arbitrarily high length $n$.

The following fact, noted by Elkies in \cite{Elkies-long-shadows}, is a corollary of \cref{lem:consA_s(L)}.
\begin{fact}
\label{fact:Nc<Nl}
For all $K \in \Z_{\geq 0}$, $\NC(K) \leq \NL(K)$.
\end{fact}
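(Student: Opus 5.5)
Given a self-dual code $\C$ with reduction $\C_0 \subset \F_2^{n_0}$ and $s(\C) \geq (\clen(\C) - 8K)/2$, I will exhibit a self-dual lattice with reduction of rank at least $n_0$ and with $s \geq \rank - 8K$. Taking the supremum over $\C$ then gives $\NC(K) \leq \NL(K)$.

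First I reduce to the case of a reduced code. Write $\C \permequiv \C_0 \oplus \cZ^r$ with $\clen(\C) = n_0 + 2r$. By \cref{fct:properties-s}, $s(\C) = s(\C_0) + r$. The hypothesis $s(\C) \geq (n_0 + 2r - 8K)/2$ therefore gives $s(\C_0) \geq (n_0 - 8K)/2$. So it suffices to treat a reduced code $\C_0$ of length $n_0$ with $s(\C_0) \geq (n_0 - 8K)/2$. By \cref{fact:char-vec-properties}, $s(\C_0) \equiv n_0/2 \pmod 4$, so we may write $s(\C_0) = (n_0 - 8K')/2$ with $K' \leq K$. This $K'$ is a nonnegative integer because $s(\C_0) \leq n_0$ ... strictly, $K'$ could a priori be negative, but the lemma's argument (the shadow correspondence) does not need $K' \geq 0$.

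Next set $\lat := \frac{1}{\sqrt 2}\lat_{\C_0} \subset \R^{n_0}$. By \cref{lem:consA_s(L)}, $\lat$ is self-dual and $s(\lat) = n_0 - 8K' \geq n_0 - 8K$. Hence $R_0(\lat) \leq \NL(K)$.

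The remaining step, which is the main point, is to show that $\lat$ is itself reduced, so that $R_0(\lat) = n_0$. I need $\lambda_1(\lat)^2 \geq 2$, that is, every nonzero $\vec{y} \in \lat_{\C_0}$ has $\norm{\vec{y}}^2 \geq 4$. I split into two cases on $\vec{y} \bmod 2$.

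If $\vec{y} \bmod 2 = \vec{c} \neq 0$, then $\vec{y}$ has at least $\wt(\vec{c}) \geq d(\C_0) \geq 4$ odd coordinates, so $\norm{\vec{y}}^2 \geq 4$.

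If $\vec{y} \in 2\Z^n \setminus \{0\}$, then $\norm{\vec{y}}^2 \geq 4$ immediately.

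Thus $\lambda_1(\lat)^2 \geq 2$, so $\lat$ is reduced and $n_0 = R_0(\lat) \leq \NL(K)$. Since this holds for every such $\C$, we conclude $\NC(K) \leq \NL(K)$. Both infinite cases are covered automatically by the supremum.
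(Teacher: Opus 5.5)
Your proof is correct and is essentially the paper's argument. The paper states the fact as a corollary of \cref{lem:consA_s(L)}: apply Construction A to the reduction $\C_0$. You additionally supply the check the paper leaves implicit, namely that $d(\C_0) \geq 4$ forces $\lambda_1(\frac{1}{\sqrt{2}}\lat_{\C_0})^2 \geq 2$, so the resulting lattice has $R_0$ exactly $n_0$.

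Your hedge about $K' \geq 0$ can be removed. Every codeword of $\C_0$ has even weight, so $\vec{1} \in \C_0^{\perp} = \C_0$, and hence $\vec{w} + \vec{1}$ is characteristic whenever $\vec{w}$ is; this gives $s(\C_0) \leq n_0/2$ and so $K' \geq 0$. In any case, the lemma's proof gives $s(\lat) = 2 s(\C_0)$ directly.
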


\input{nminus8-table}

In a beautiful line of work, Elkies and Gaulter proved the following bounds on $\NL(K)$. See also \cref{tab:n-8}.

\begin{theorem}
\label{thm:Nkbounds}
    The following bounds on $\NL(K)$ are known.
    \begin{enumerate}
        \item $\NL(0) = 0$ by~\cite{Elkies-char-Zn} (i.e., there are no reduced lattices $\lat_0$ of rank $n$ with $s(\lat_0) = n$).
        \item $\NL(1) = 23$ by~\cite{Elkies-long-shadows}.
        \item \label{item:N2-bound} $\NL(2) \leq 89$ by~\cite[Theorem~1.1]{Gaulter-S2}.
        \item $\NL(3) \leq 8388630$ by~\cite[Theorem~4.5]{Gaulter-Lattices-Without-Short}.
    \end{enumerate}
\end{theorem}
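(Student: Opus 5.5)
This theorem collects results from the literature rather than proving anything new, so the plan is to attribute each item to its source and check that the source's normalization matches ours. The main thing to verify is that each cited statement is phrased in terms of the quantity $\NL(K)$ as defined in \cref{eq:NLK}. In \cite{Elkies-long-shadows} this quantity appears as $N_K$. By the footnote to the definition of $\NL(K)$, together with \cref{fct:properties-s}, \cref{item:properties-s-direct-sum,item:properties-s-Zr}, we have $s(\lat_0 \oplus \Z^r) = s(\lat_0) + r$. Hence the condition $s(\lat) \geq \rank(\lat) - 8K$ holds for $\lat$ if and only if it holds for its reduction $\lat_0$. So bounding ranks of reduced lattices with $s(\lat_0) \geq \rank(\lat_0) - 8K$, which is how Elkies and Gaulter phrase their results, is the same as bounding $\NL(K)$.

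For $\NL(0) = 0$, I will invoke Elkies's theorem \cite{Elkies-char-Zn} that $s(\lat) \leq n$ for every self-dual $\lat \subset \R^n$, with equality only for $\lat \cong \Z^n$. For context, the mechanism is the following. Write the theta series of $\lat$ and of its shadow $\lat + \half \vec{w}$ as polynomials in $\theta_3$ and the weight-$8$ form $\Delta_8$ (via Jacobi's transformation). Then $s(\lat) \geq n$ forces all coefficients beyond the leading one to vanish. This pins down the theta series of $\Z^n$, in particular giving exactly $2n$ vectors of norm $1$, and these split off all of $\Z^n$. Consequently a reduced lattice can only be trivial. For $\NL(1) = 23$, I will cite Elkies's classification in \cite{Elkies-long-shadows} of the $14$ reduced lattices with $s = n - 8$ listed in \cref{tab:n-8}. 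The largest of these has rank $23$, namely the shorter Leech lattice $O_{23}$.

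For the last two items, I will cite \cite[Theorem~1.1]{Gaulter-S2} for $\NL(2) \leq 89$ and \cite[Theorem~4.5]{Gaulter-Lattices-Without-Short} for $\NL(3) \leq 8388630$. Both again rest on the theta-series and shadow argument, combined with the requirement that the reduced lattice have no norm-$1$ vectors, which forces certain coefficients of the shadow theta series to be nonnegative. The only real obstacle is bookkeeping. One has to check that Gaulter's statements bound the rank of lattices with no vectors of norm $1$ (equivalently, reduced lattices, since integral lattices have integer squared norms), and not the rank of some related lattice. One also has to check that his $s$ is the squared norm of a characteristic vector and not of a shadow vector, which would introduce a factor of $4$. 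Once these conventions are matched, the theorem follows directly.
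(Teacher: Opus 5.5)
Your proposal is correct and takes the same approach as the paper. The paper gives no argument beyond attributing each item to \cite{Elkies-char-Zn}, \cite{Elkies-long-shadows}, \cite[Theorem~1.1]{Gaulter-S2}, and \cite[Theorem~4.5]{Gaulter-Lattices-Without-Short}, and your reduction via $s(\lat_0 \oplus \Z^r) = s(\lat_0) + r$ is exactly its remark that the supremum defining $\NL(K)$ may be taken over reduced lattices. One small piece of bookkeeping you could make explicit: since $s(\lat) \le n$ and $s(\lat) \equiv n \Mod{8}$, the condition $s \ge n - 8K$ means $s = n - 8K'$ for some $K' \le K$, so a bound on $\NL(K)$ is the maximum of the bounds for the exact cases $K' \le K$; this is harmless here because the listed values increase with $K$.
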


We note in passing that the earlier work of Gaulter~\cite{Gaulter-Lattices-Without-Short} obtained the weaker bound $\NL(2) \leq 2907$, which nevertheless would suffice for our purposes. We similarly note in passing that~\cite{gaulter1998characteristic,Gaborit2007} give improved bounds on the analog of $\NL(K)$ for self-dual lattices $\lat$ with $\lambda(\lat)^2 = K + 1$, which they call extremal.
Furthermore, we note that the bounds in \cref{thm:Nkbounds} for $\NL(K)$ with $K = 0, 1, 2, 3$ also hold for $\NC(K)$ by \cref{fact:Nc<Nl}. 

In~\cite{Elkies-long-shadows}, Elkies asked whether each of the quantities $\NL(K)$ and $\NC(K)$ is finite for all $K$. We state the positive resolution of these questions as the following two hypotheses.
\begin{hypothesis}
\label{hyp:finite_NLK}
For all $K \in \Z_{\geq 0}$, $\NL(K) < \infty$.
\end{hypothesis}

\begin{hypothesis}
\label{hyp:finite_NCK}
For all $K \in \Z_{\geq 0}$, $\NC(K) < \infty$.
\end{hypothesis}

Again using \cref{fact:Nc<Nl}, we have that \cref{hyp:finite_NLK} implies \cref{hyp:finite_NCK} (as noted in~\cite{Elkies-long-shadows}).

\subsection{Computational Problems}
\label{sec:prelims-comp-probs}

We first define a certain decision version of $\LIP$ for self-dual lattices with long shortest characteristic vectors. We discuss the motivation for our particular definition in \cref{sec:intro-our-results}.

\begin{definition}[Decision $\SDLIP_K$]
\label{def:SDLIP}
For $K = K(n) \geq 0$, we define $\SDLIP_K$ as follows. 
The input is a pair of quadratic forms $Q, Q' \in \SPD_n(\Z) \cap \GL_n(\Z)$ corresponding to self-dual lattices $\lat, \lat' \subset \R^n$, respectively, and the goal is to decide between the following two cases.
\begin{itemize}
    \item (YES case.) $[Q]_{\Z} = [Q']_{\Z}$ and $s(\lat) \geq n - 8K$. 
    
    \item (NO case.) $[Q]_{\Z} \neq [Q']_{\Z}$, and either $s(\lat) \geq n - 8K$ or $s(\lat') \geq n - 8K$ (or both).
\end{itemize}
\end{definition}

We will also study versions of search $\LIP$ and search Permutational Code Equivalence ($\PCE$) on self-dual lattices and additional promises on their inputs. Here, we simply recall the definitions of search $\LIP$ and search $\PCE$.

\begin{definition}[Search $\LIP$]
\label{def:SearchLIP}
We define \emph{search $\LIP$} as follows.
The input is a pair of quadratic forms $Q, Q' \in \SPD_n(\Z)$ with $[Q]_{\Z} = [Q']_{\Z}$, and the goal is to output a unimodular matrix $U \in \GL_n(\Z)$ such that $U^T Q U = Q'$.
\end{definition}

\begin{definition}[Search $\PCE$]
\label{def:SearchPCE}
We define \emph{search $\PCE$} as follows.
The input is a pair of generator matrices $G, G' \in \F_2^{n \times k}$ of respective codes $\C, \C'$ with $\C \permequiv \C'$ for some integers $k \leq n$. The goal is to output a permutation matrix $P \in \cP_n$ such that $P(\C) = \C'$.
\end{definition}

\subsection{Lattice and Code Algorithms}
\label{sec:prelims-lattice-alg}

We next give several algorithms for lattice and coding problems.
Recall that the Shortest Vector Problem ($\SVP$) on lattices is, given a basis of a lattice $\lat$ as input, to find a shortest non-zero vector $\vec{v} \in \lat$ (i.e., a non-zero vector $\vec{v} \in \lat$ with $\norm{\vec{v}} = \lambda_1(\lat)$). The Closest Vector Problem ($\CVP$) is, given a basis of a lattice $\lat$ and a target vector $\vec{t}$ as input, to find a closest lattice vector $\vec{v}$ in $\lat$ to $\vec{t}$ (i.e., a vector $\vec{v} \in \lat$ with $\norm{\vec{v} - \vec{t}} = \min_{\vec{v}' \in \lat} \norm{\vec{v}' - \vec{t}}$).

\begin{theorem}[{\cite{journals/siamcomp/MicciancioV13}}] \label{thm:MV-det-svp-cvp}
There are $4^{n + o(n)}$-time algorithms for $\SVP$ and $\CVP$ on lattices of rank $n$.
\end{theorem}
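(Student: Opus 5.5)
The statement is \cref{thm:MV-det-svp-cvp}, a cited result of Micciancio and Voulgaris, so in the paper it is imported rather than reproved. Still, the argument I would reconstruct is the Voronoi-cell approach, which runs in two phases.

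\textbf{Phase 1: computing the Voronoi cell.} The Voronoi cell of a rank-$n$ lattice $\lat$ is determined by its Voronoi-relevant vectors. There are at most $2(2^n-1)$ of them. Each one is a shortest vector in its nonzero coset of $\lat/2\lat$, and the relevant vectors are exactly the vectors $\pm\vec{v}$ that are the \emph{unique} (up to sign) shortest vectors in such a coset.

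I would compute them by rank induction, LLL-reducing the basis first and setting $\lat_k := \lat(\vec{b}_1,\dots,\vec{b}_k)$. Given the Voronoi cell of $\lat_{k-1}$, I would solve one CVP instance per coset of $\lat_k/2\lat_k$. Each instance targets a vector of the form $-\vec{s}$ for a coset representative $\vec{s}$, over the lattice $2\lat_k$, and yields the shortest vector in that coset.

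To solve these CVP instances, I would enumerate over the last coordinate: the candidate slices are $2\lat_{k-1} + c\,\vec{b}_k$. LLL reduction guarantees that only polynomially many values of $c$ need to be checked. Each slice is then handled by CVP with preprocessing using the already known Voronoi cell of $2\lat_{k-1}$, which is just the cell of $\lat_{k-1}$ scaled by $2$.

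\textbf{Phase 2: CVP with the Voronoi cell.} The core subroutine is iterative slicing. If $\vec{t}\in 2V$, where $V$ is the Voronoi cell, then repeatedly subtracting the relevant vector $\vec{v}$ that maximizes $\iprod{\vec{t},\vec{v}}/\norm{\vec{v}}^2$ reaches $V$. Each step decreases $\norm{\vec{t}}$, and the intermediate points stay in $2V$.

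The key obstacle is bounding the number of steps. I would use a volume/packing argument: all iterates lie in $2V$, their norms strictly decrease, and they occupy distinct cosets in a discrete family of shells. This bounds the number of steps by roughly $2^n$, so each slicing costs $2^n\cdot 2^n = 4^n$ up to polynomial factors.

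For a general target, I would scale it down: first solve CVP for $\vec{t}/2^j$, which lies in $2V$ for large $j$, then double back one level at a time with a bounded number of iterations per level. This costs a polynomial number of phases.

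\textbf{Total cost.} Combining both phases, computing the Voronoi cell requires $2^n$ CVP calls across $n$ ranks. Naively that is $8^n$, since each call costs $4^n$. Micciancio and Voulgaris reduce this to $4^n$ by grouping the coset computations and amortizing.

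\textbf{Reducing SVP to CVP.} Finally, SVP follows from CVP by the standard sublattice trick. For each $i$, take the sublattice in which the coefficient of $\vec{b}_i$ is doubled and solve CVP with target $\vec{b}_i$. The shortest of the resulting differences is a shortest nonzero vector. All of these steps fit within $4^{n+o(n)}$ time.
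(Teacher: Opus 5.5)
The paper does not prove this statement; it imports it from Micciancio--Voulgaris, and citing it is all that is needed there. Your outline follows their architecture. The relevant-vector characterization, the slicer on $2V$, the scaling for general targets and the SVP-to-CVP reduction are all stated correctly. As a proof of the $4^{n+o(n)}$ bound, however, it breaks exactly where the exponent is determined.

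\textbf{LLL does not give polynomially many slices.} Rank reduction bounds the range of the last coefficient by about $2\mu(\lat_k)/\norm{\vec{b}_k^*}$, where $\mu$ is the covering radius. LLL only guarantees $\norm{\vec{b}_i^*}^2 \le 2^{k-i}\norm{\vec{b}_k^*}^2$, so this ratio is only $2^{O(k)}$, and it can genuinely be exponential. For example, an orthogonal basis whose lengths shrink by a factor $\sqrt{3}/2$ at each step is LLL-reduced and has $\mu(\lat_k)/\norm{\vec{b}_k^*} \ge \frac{1}{2}(2/\sqrt{3})^{k-1}$. You need a stronger preprocessing that makes this ratio $2^{o(n)}$, e.g.\ block (BKZ-type) reduction with block size $\omega(1)$ but $o(n)$, computable in $2^{o(n)}$ time. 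The $o(n)$ in the exponent can absorb that loss.

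\textbf{A volume/packing argument gives $3^n$ slicer steps, not $2^n$.} The translates $\vec{t}_i + V$ are interior-disjoint and lie in $3V$, which yields only $3^n$ steps and hence $6^n$ per CVPP call. The $2^n$ bound comes from cosets of $2\lat$. Since $\vec{t}_i \in 2V$, each $\vec{t}_i$ is a shortest element of $\vec{t}_i + 2\lat$. All iterates lie in $\vec{t} + \lat$, which meets only $2^n$ such cosets. Two iterates in the same coset would then have equal norm, contradicting strict decrease.

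\textbf{The step from $8^n$ to $4^n$ is missing.} Your cost paragraph concedes that the sketch gives $2^n \cdot 4^n = 8^n$ (times the number of slices). It then attributes the drop to $4^n$ to unspecified ``grouping and amortizing.'' That is exactly the step the claimed exponent depends on, and nothing in the proposal supplies it. As written, you have argued for roughly $8^n$, not $4^{n+o(n)}$. The saving must come from handling the $2^{k-1}$ subproblems of a slice jointly, rather than as independent $\tilde{O}(4^k)$ CVPP calls. Let $\vec{p}_c$ be the projection of $c\vec{b}_k$ onto the span of $\lat_{k-1}$. For the slice with last coefficient $c$, the subproblems ask for a shortest element of every coset of $2\lat_{k-1}$ inside the single coset $\vec{p}_c + \lat_{k-1}$. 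These are exactly the points of $(\vec{p}_c + \lat_{k-1}) \cap 2V(\lat_{k-1})$. You would need an argument that all of them can be found in $\tilde{O}(4^k)$ total time.
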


 We then get the following corollary about computing shortest characteristic vectors of lattices.

\begin{corollary} \label{cor:comp-shortest-char-vec}
There is a $4^{n + o(n)}$-time algorithm for computing a shortest characteristic vector of a self-dual lattice $\lat$ of rank $n$. In particular, there is a $4^{n + o(n)}$-time algorithm to compute $s(\lat)$.
\end{corollary}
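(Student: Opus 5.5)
The plan is to reduce the computation of a shortest characteristic vector to a single $\CVP$ instance and then invoke \cref{thm:MV-det-svp-cvp}. Suppose the input is a basis $\basis$ of $\lat$; if only a quadratic form $Q$ is given, I would first recover a basis, for instance by a Cholesky factorization of $Q$, or work directly with the norm induced by $Q$ on coefficient vectors. The algorithm has three steps.

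First, I compute one characteristic vector explicitly using \cref{fact:char-vec-properties}, \cref{item:char-vec-explicit}. Since $\lat$ is self-dual, the dual basis is $\basis^* = \basis^{-T}$, and
\[
\vec{w} := \sum_{i=1}^n \norm{\vec{b}_i^*}^2 \vec{b}_i
\]
is a characteristic vector. In coordinates, $\norm{\vec{b}_i^*}^2 = (Q^{-1})_{ii}$, which is an integer because $Q$ is unimodular. Thus $\vec{w} = \basis \vec{z}$ with $\vec{z}_i = (Q^{-1})_{ii}$, and this is computable in polynomial time.

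Second, by \cref{fact:char-vec-properties}, \cref{item:char-vec-exists-shadow}, the set of characteristic vectors is exactly the coset $2\lat + \vec{w}$. Finding a shortest characteristic vector therefore means minimizing $\norm{2\vec{v} + \vec{w}}$ over $\vec{v} \in \lat$. This equals $\min_{\vec{u} \in 2\lat} \norm{\vec{u} - (-\vec{w})}$, which is a $\CVP$ instance on the rank-$n$ lattice $2\lat$ (basis $2\basis$) with target $-\vec{w}$. If $\vec{u}$ is a closest vector, the output is $\vec{u} + \vec{w}$, and $s(\lat) = \norm{\vec{u} + \vec{w}}^2$.

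Third, I run the $4^{n+o(n)}$-time $\CVP$ algorithm of \cref{thm:MV-det-svp-cvp}. The preprocessing and postprocessing are polynomial, so the total running time is $4^{n+o(n)}$.

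The only mildly delicate point is making sure everything is polynomial-size and exact, which holds because $Q$ is integral and unimodular. I expect no real obstacle here.
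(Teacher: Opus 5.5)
Your proposal is correct and matches the paper's proof: build one characteristic vector $\vec{w}$ via Gerstein's explicit formula, use the fact that the characteristic vectors are exactly the coset $2\lat + \vec{w}$, and find a shortest vector in that coset as a single $\CVP$ instance on $2\lat$ with the Micciancio--Voulgaris algorithm. Your added details are correct refinements that the paper leaves implicit, namely the identity $\norm{\vec{b}_i^*}^2 = (Q^{-1})_{ii}$ and working in the $Q$-norm on coefficient vectors so the computation stays exact.
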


\begin{proof}
By \cref{fact:char-vec-properties}, \cref{item:char-vec-explicit}, there is an efficient construction of a (not necessarily shortest) characteristic vector $\vec{w}$ of $\lat$. By \cref{fact:char-vec-properties}, \cref{item:char-vec-exists-shadow} the elements of the lattice coset $2\lat + \vec{w}$ are exactly the characteristic vectors of $\lat$.
Finding a shortest vector in a lattice coset is equivalent to $\CVP$, and so the corollary follows from \cref{thm:MV-det-svp-cvp}.
\end{proof}

We next give the algorithm of Haviv and Regev~\cite{havivLatticeIsomorphismProblem2014} for (general) $\LIP$.

\begin{theorem}[{\cite{havivLatticeIsomorphismProblem2014}}] \label{thm:HR-LIP}
There is a $n^{O(n)}$-time algorithm for $\CVP$ on lattices of rank $n$.
\end{theorem}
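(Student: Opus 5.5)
As stated, the theorem asks only for an $n^{O(n)}$-time algorithm for $\CVP$ on rank-$n$ lattices, and this follows directly from an earlier result. The plan is to invoke \cref{thm:MV-det-svp-cvp}, which gives a $4^{n+o(n)}$-time algorithm for $\CVP$. Since $4^{n+o(n)} = 2^{O(n)} \leq n^{O(n)}$ for all $n \geq 2$ (and the case $n = 1$ is trivial), that algorithm already meets the claimed bound. If one prefers not to cite \cref{thm:MV-det-svp-cvp}, Kannan-style enumeration also works: LLL/HKZ-preprocess the basis, then enumerate coefficient vectors inside an ellipsoid of the appropriate radius. The number of enumeration nodes is $n^{O(n)}$ by the standard volume and Gram--Schmidt estimates. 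There is no real obstacle in either route; the only point to check is that the input bit-length enters the running time only polynomially, which both algorithms guarantee.

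Given its attribution to~\cite{havivLatticeIsomorphismProblem2014} and how it is used later, the intended statement is almost certainly an $n^{O(n)}$-time algorithm for (search) $\LIP$ on $Q, Q' \in \SPD_n(\Z)$. For that version I would follow Haviv--Regev in four steps.
\begin{enumerate}
\item Compute, for each of $Q$ and $Q'$, a canonical set of short vectors: vectors $\vec{z}$ with $\vec{z}^T Q \vec{z}$ bounded by an appropriate multiple of the successive minima. These are found by $\SVP$/enumeration subroutines, and by a packing argument there are at most $n^{O(n)}$ of them.
\item Guess a basis $U'$ of $\Z^n$ made of such short vectors for $Q'$; fix one such basis $U$ for $Q$.
\item Test whether $U'^T Q' U' = U^T Q U$. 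Any isomorphism must map a short-vector basis of one form to a short-vector basis of the other, so some guess succeeds, and a successful guess yields the desired transformation.
\item Bound the total work: the number of guesses is at most $(n^{O(n)})^{n}$ naively. This is too large, so the key step is the Haviv--Regev trick of restricting to vectors of length at most $O(\sqrt{n})\,\lambda_i$ that are ordered by a canonical reduction. With this restriction the count of candidate bases becomes $n^{O(n)}$ rather than $n^{O(n^2)}$.
\end{enumerate}

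For that $\LIP$ version, the main obstacle is this counting step: showing that the candidate short vectors at each level of the basis number only $n^{O(1)}$ per coordinate on average, giving $n^{O(n)}$ bases in total. I would prove this via Minkowski's second theorem, bounding the number of lattice points in a ball of radius $r$ by $\prod_i O(1 + r/\lambda_i)$.
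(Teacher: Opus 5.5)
For the statement exactly as written, your first paragraph is a complete and correct proof. It takes a different route from the paper, which gives no argument and simply cites~\cite{havivLatticeIsomorphismProblem2014}. Since $4^{n+o(n)} = 2^{O(n)} \leq n^{O(n)}$, \cref{thm:MV-det-svp-cvp} (or Kannan-style enumeration) already gives an $n^{O(n)}$-time $\CVP$ algorithm, with no need for Haviv--Regev. You are also right that ``$\CVP$'' is a typo for ``$\LIP$''. The paper invokes \cref{thm:HR-LIP} as the Haviv--Regev $\LIP$ algorithm in the proofs of \cref{thm:intro-low-R0-lattice-alg} and \cref{lem:SDLIP_K-coNP}. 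So only the $\LIP$ version matters downstream, and your $\CVP$ argument says nothing about it.

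Your sketch of the $\LIP$ version has a genuine gap at exactly the step you flag as the main obstacle, and the proposed fix cannot work.

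\emph{Step 4.} Restricting to vectors of length $O(\sqrt{n})\lambda_i$ does not bring the number of candidate bases down to $n^{O(n)}$. Take $Q = I_n$. Every lower unitriangular matrix with entries in $\{-1,0,1\}$ is unimodular, and each of its columns $\vec{z}$ has $\vec{z}^T \vec{z} \leq n = n\lambda_i^2$. So there are at least $3^{n(n-1)/2} = 2^{\Omega(n^2)}$ such bases. Minkowski's second theorem only bounds the lattice points in a single ball, i.e., the branching at one level (here up to $n^{O(n)}$). No per-coordinate averaging can beat the product over $n$ levels when the true count is this large. A canonical ordering with tie-branching does not fix this by itself either. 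The branching is then the number of tied shortest candidates, which is already $2^{\Omega(n)}$ at the first level for lattices with exponential kissing number. Nothing in your sketch prevents this from compounding over $\Omega(n)$ levels.

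\emph{Step 1.} This has a related problem. The number of $\vec{z}$ with $\vec{z}^T Q \vec{z} \leq O(n)\lambda_n^2$ is not bounded in terms of $n$ when $\lambda_n/\lambda_1$ is large; take $Q = \mathrm{diag}(1, M, \ldots, M)$ with $M$ huge. So an $n^{O(n)}\cdot\poly(\text{input size})$ bound also needs a canonical way to handle gaps between successive minima.

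The missing idea is the generalized isolation lemma, which Haviv and Regev identify as the crucial component. Their $n^{O(n)}$ comes from counting choices of randomness, not from counting short bases. One draws from an isomorphism-invariant family of only $n^{O(n)}$ possibilities; invariance ensures the corresponding choice for an isomorphic lattice is among those enumerated. This choice determines weights on the short vectors. The lemma guarantees that, with good probability, there is a \emph{unique} minimum-weight set of $n$ linearly independent short vectors, computable greedily since linear independence is a matroid. For the second form one tries all $n^{O(n)}$ corresponding choices, each giving one candidate. One then compares Gram matrices, together with the coordinates of the lattice relative to the full-rank sublattice spanned by the isolated vectors, since these need not form a basis.
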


Finally, we give the algorithm of Babai~\cite{conf/soda/BabaiCGQ11} for (general) $\PCE$.\footnote{Although~\cite{conf/soda/BabaiCGQ11} has four authors, the result that we cite is attributed solely to Babai.}

\begin{theorem}[{\cite{conf/soda/BabaiCGQ11}}] \label{thm:Babai-PCE}
There is a $2^{n + o(n)}$-time algorithm for $\PCE$ on codes of length $n$.
\end{theorem}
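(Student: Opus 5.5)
The proof reduces permutation code equivalence to isomorphism of explicitly listed set systems on the $n$ coordinates. It then solves that problem by dynamic programming over subsets of coordinates, in the spirit of Luks' simply-exponential algorithm for hypergraph isomorphism; Babai's contribution is a sharper analysis that brings the base of the exponent down to $2$. Since this is a cited result, the plan below reconstructs the argument at the level of its main steps rather than deriving it from earlier results in the paper.

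\textbf{Step 1: list the codes.} Given generator matrices $G, G' \in \F_2^{n \times k}$, enumerate all $2^k \leq 2^n$ codewords of $\C$ and of $\C'$. A binary codeword is determined by its support, so $\C$ becomes a set system $H_\C \subseteq 2^{[n]}$ with at most $2^n$ edges, and likewise $H_{\C'}$. A permutation $P \in \cP_n$ satisfies $P(\C) = \C'$ if and only if it maps $H_\C$ onto $H_{\C'}$. So search $\PCE$ is exactly search isomorphism of two hypergraphs on $n$ vertices with $m \leq 2^n$ edges, and the input size is already $2^{n}\poly(n)$.

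\textbf{Step 2: solve hypergraph isomorphism by dynamic programming.} Work with cosets of permutation groups, as in Luks' framework. For each subset $S \subseteq [n]$ of coordinates, compute a canonical form (or, equivalently, the coset of isomorphisms) for the restriction of the hypergraph to $S$, in the form of a projection or shortened code. Process the subsets in order of increasing size, and build the answer for $S$ from the answers for $S \setminus \{i\}$, $i \in S$. Canonical forms of smaller pieces are combined by choosing the lexicographically least candidate. Each of the $2^n$ subsets then costs $\poly(n)$ plus work proportional to the number of edges meeting $S$. To achieve total time $2^{n+o(n)}$ rather than $c^n$ for some $c > 2$, the per-subset costs must be amortized: charge the work at $S$ to the codewords supported near $S$, and use linearity, which makes the restrictions to $S$ form a subspace of dimension at most $|S|$, so that the total over all $S$ is $\sum_S 2^{O(\dim)} \cdot \poly(n)$. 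Babai's analysis shows this sum is $2^{n+o(n)}$.

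\textbf{Step 3: recover a permutation.} If the procedure only decides equivalence or outputs canonical forms, obtain an explicit $P$ by self-reduction. Individualize coordinates one at a time: tag coordinate $i$ of $\C$ and candidate coordinate $j$ of $\C'$, for example by appending distinguishing gadget coordinates. This costs only $\poly(n)$ additional oracle calls.

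\textbf{Main obstacle.} The hardest step is the tight accounting in Step 2. A naive Luks-style dynamic program gives $2^{O(n)}$ without the constant $1$ in the exponent. The first thing to try is to bound the work at each subset $S$ by $2^{\dim(\C|_S)}\poly(n)$. The remaining task is then to show that the sum of these quantities over all subsets $S$ stays within $2^{n+o(n)}$.
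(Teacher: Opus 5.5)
The paper does not prove this statement. It imports it as a black box from Babai's part of the cited work, so your reconstruction has to stand on its own. As written it does not: the step you call the main obstacle fails along the route you propose, it is not merely unfinished.

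\textbf{The dynamic program in Step~2 is unsound.} Suppose each subset $S$ stores one canonical form and the combination step picks the lexicographically least candidate. Many labelings of $S\setminus\{i\}$ give the same canonical form; they differ by automorphisms of the restricted code. Those automorphisms need not extend to $S$, and different choices give different extensions. So the result is not an isomorphism invariant, and a canonical form is not interchangeable with the coset of optimal labelings, despite your ``equivalently''. If you carry the coset instead, the extension step becomes minimizing a string over a coset of an arbitrary permutation group. That is exactly the hard part; it is why Luks needed group-theoretic machinery even to reach $2^{O(n)}$ for hypergraphs. You give no procedure for it.

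\textbf{The accounting cannot give $2^{n+o(n)}$.} For a random $[n,k]$ code, the projection onto a fixed $j$-set with $j\le k$ is full-dimensional with probability bounded below by a constant (about $0.29$). Hence the expected value of $\sum_{S}2^{\dim(\C|_S)}$ is at least a constant times $\binom{n}{j}2^{j}$ for every $j\le k$. This is $2^{1.5n-o(n)}$ for $k=n/2$, and $3^{n-o(n)}$ for $k\ge 2n/3$. Using shortened codes instead gives $\sum_{c\in\C}2^{n-\wt(c)}$, which is about $2^{1.08n}$ for a typical rate-$1/2$ code. Listing all $2^k$ codewords and paying for the edges that meet each $S$ costs up to $2^{n+k}$. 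So the claim that this sum is $2^{n+o(n)}$ is false.

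Step~3 also has a smaller problem. Gadget coordinates lengthen the code, and the running time is exponential in the length, so the total added length would have to be $o(n)$.

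\textbf{The missing idea} is to avoid both codeword enumeration and dynamic programming over subsets.
\begin{itemize}
\item Fix an information set $I$ of $\C$ with $|I|=k$. There is then a unique $A\in\F_2^{I\times([n]\setminus I)}$ with $c_b=\sum_{a\in I}c_aA[a,b]$ for all $c\in\C$ and $b\notin I$.
\item Any permutation $\pi$ with $\pi(\C)=\C'$ maps $I$ onto an information set $J$ of $\C'$. So enumerate the at most $\binom{n}{k}\le 2^n$ candidate sets $J$, and for each compute the analogous matrix $A'_J$ by Gaussian elimination.
\item A direct check shows that a permutation $\pi$ with $\pi(I)=J$ satisfies $\pi(\C)=\C'$ if and only if $A'_J[\pi(a),\pi(b)]=A[a,b]$ for all $a\in I$ and $b\notin I$. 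Equivalently, $\pi$ is an isomorphism, mapping $I$ to $J$, between the $n$-vertex bipartite graphs with biadjacency matrices $A$ and $A'_J$.
\item Colored graph isomorphism on $n$ vertices is solvable in $\exp(O(\sqrt{n\log n}))=2^{o(n)}$ time via Babai--Luks canonical labeling, and this returns an explicit isomorphism.
\end{itemize}
The total time is $\binom{n}{k}\cdot 2^{o(n)}\cdot\poly(n)\le 2^{n+o(n)}$. The permutation is output directly, so no self-reduction is needed.
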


\subsection{Computational Lemmas}

We next present two computational lemmas, the first of which is from~\cite{Hunkenschroder-ZLIP-coNP}. We use $\vec{e}_j \in \R^n$ for $j \in [n]$ to denote the $j$th standard normal basis vector.
\begin{lemma}[{\cite[Lemma 6]{Hunkenschroder-ZLIP-coNP}}]
\label{lem:checkz}
Let $Q \in \SPD_n(\Z) \cap\GL_n(\Z)$.
Then
\begin{equation*}
    \vec{e}_j^T Q \vec{e}_j \equiv \vec{e}_j^T Q \vec{z} \Mod{2}
\end{equation*}
holds for all $j = 1, \ldots, n$
if and only if for every basis $B$ with $B^T B = Q$, the vector $\vec{w} = B \vec{z}$ is a characteristic vector of the lattice $\lat(B)$.
\end{lemma}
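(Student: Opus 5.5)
Both directions come from unfolding the definition of a characteristic vector in coordinates, using that $Q = \basis^T\basis$ has integer entries and is unimodular. Fix any basis $\basis$ with $\basis^T\basis = Q$. Every $\vec{v}\in\lat(\basis)$ is $\vec{v}=\basis\vec{x}$ for a unique $\vec{x}\in\Z^n$, and the coefficient vector $\vec{z}$ is implicitly integral so that $\vec{w}=\basis\vec{z}\in\lat(\basis)$. Then
\[
\iprod{\vec{v},\vec{w}}=\vec{x}^TQ\vec{z} \quad\text{and}\quad \iprod{\vec{v},\vec{v}}=\vec{x}^TQ\vec{x}.
\]
Hence $\vec{w}$ is characteristic if and only if $\vec{x}^TQ\vec{z}\equiv\vec{x}^TQ\vec{x}\pmod 2$ for all $\vec{x}\in\Z^n$. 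This condition depends only on $Q$ and $\vec{z}$, not on the choice of $\basis$. So "for every basis $\basis$" is equivalent to "for some basis $\basis$", and such bases exist by Cholesky decomposition of $Q\in\SPD_n$.

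($\Leftarrow$) Specialize the condition to $\vec{x}=\vec{e}_j$. This gives $\vec{e}_j^TQ\vec{z}\equiv\vec{e}_j^TQ\vec{e}_j\pmod 2$ for each $j$, which is exactly the stated congruence.

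($\Rightarrow$) This is the only step needing an argument. Write $\vec{x}=\sum_j x_j\vec{e}_j$ and expand using symmetry of $Q$:
\[
\vec{x}^TQ\vec{x}=\sum_j x_j^2Q_{jj}+2\sum_{i<j}x_ix_jQ_{ij}\equiv\sum_j x_j^2Q_{jj}\equiv\sum_j x_jQ_{jj}\pmod 2.
\]
The first congruence uses that the $Q_{ij}$ are integers, so the cross terms are even. The second uses $x_j^2\equiv x_j\pmod 2$. On the other side, $\vec{x}^TQ\vec{z}=\sum_j x_j\,\vec{e}_j^TQ\vec{z}$, which by hypothesis is $\equiv\sum_j x_jQ_{jj}\pmod 2$. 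The two sides agree, so $\vec{w}=\basis\vec{z}$ is characteristic.

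Unimodularity of $Q$ is not needed beyond ensuring that $\lat(\basis)$ is self-dual, so that the notion of characteristic vector applies as defined.
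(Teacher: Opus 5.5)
Your proof is correct. The paper gives no proof of its own and simply cites Hunkenschr\"{o}der's Lemma 6; your argument, which uses that $\vec{x} \mapsto \vec{x}^T Q \vec{x} \bmod 2$ is linear for integral $Q$ and therefore checks the condition only on the $\vec{e}_j$, is the natural self-contained proof. As a minor aside, you need not assume $\vec{z} \in \Z^n$: the hypothesized congruences force $Q\vec{z} \in \Z^n$, and unimodularity then gives $\vec{z} = Q^{-1}(Q\vec{z}) \in \Z^n$, so unimodularity also guarantees $\vec{w} \in \lat(\basis)$.
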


We next prove norm bounds that we will use to bound the bit length of the certificates that we will use in our protocol for $\SDLIP$.
\cref{item:norm-bound-vec} is similar to \cite[Lemma 7]{Hunkenschroder-ZLIP-coNP}.
In the following, we use $\norm{A}$ to denote the operator norm of a matrix $A$ and $\norm{A}_{\infty}$ to denote the maximum magnitude of an entry in $A$.
\begin{lemma} \label{lem:norm-bounds}
Let $Q, Q' \in \SPD_n(\Z)$. Let $\len(\cdot)$ denote the bit length of a vector or matrix.
\begin{enumerate}
\item \label{item:norm-bound-vec} Let $\beta > 0$, let $\vec{z} \in \R^n$, and assume that $\vec{z}^T Q \vec{z} \leq \beta$. Then $\norm{\vec{z}} \leq (\beta \norm{Q^{-1}})^{1/2}$. In particular, if $\vec{z} \in \Z^n$ then $\len(\vec{z}) \leq \poly(n, \len(Q), \log \beta)$.\footnote{Recall that $\norm{\cdot}$ denotes the 2-norm.}
\item \label{item:norm-bound-mat} Suppose that $U \in \GL_n(\R)$ is such that $U^T Q U = Q'$. Then $\norm{U} \leq (\norm{Q'} \norm{Q^{-1}})^{1/2}$. In particular, if $U \in \GL_n(\Z)$ then $\len(U) \leq \poly(n, \len(Q), \len(Q'))$.
\end{enumerate}
\end{lemma}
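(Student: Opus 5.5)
Both items follow from the positive definiteness of $Q$ via a Rayleigh-quotient bound, followed by a conversion of norm bounds into bit-length bounds using the integrality of $Q$ and $Q'$.

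\textbf{Item 1.} The smallest eigenvalue of $Q$ is $1/\norm{Q^{-1}}$, so $\vec{z}^T Q \vec{z} \geq \norm{\vec{z}}^2/\norm{Q^{-1}}$. Combined with $\vec{z}^T Q \vec{z} \leq \beta$, this gives $\norm{\vec{z}}^2 \leq \beta \norm{Q^{-1}}$.

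For the bit-length claim, bound $\norm{Q^{-1}}$. By Cramer's rule, $Q^{-1} = \adj(Q)/\det(Q)$. Since $Q$ is an integer matrix, $\det(Q) \geq 1$. By Hadamard's inequality, each cofactor has magnitude at most $(\sqrt{n}\norm{Q}_\infty)^{n}$. Hence
\[
\norm{Q^{-1}} \leq n \norm{\adj(Q)}_\infty \leq 2^{\poly(n, \len(Q))}.
\]
It follows that each integer entry of $\vec{z}$ has magnitude at most $2^{\poly(n,\len(Q),\log\beta)}$, and therefore $\len(\vec{z}) \leq \poly(n, \len(Q), \log \beta)$.

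\textbf{Item 2.} Apply the same Rayleigh bound columnwise, or more cleanly to the whole operator. For any unit vector $\vec{x}$, set $\vec{y} := U\vec{x}$. Then
\[
\vec{y}^T Q \vec{y} = \vec{x}^T Q' \vec{x} \leq \norm{Q'} .
\]
Item 1 with $\beta = \norm{Q'}$ gives $\norm{U\vec{x}}^2 \leq \norm{Q'}\norm{Q^{-1}}$. Taking the supremum over unit $\vec{x}$ yields $\norm{U} \leq (\norm{Q'}\norm{Q^{-1}})^{1/2}$.

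For the bit-length claim, use $\norm{Q'} \leq n\norm{Q'}_\infty$ together with the bound on $\norm{Q^{-1}}$ from Item 1. This shows every entry of $U$, which is at most $\norm{U}$ in magnitude, has bit length $\poly(n,\len(Q),\len(Q'))$. With $n^2$ entries, $\len(U)$ is polynomial.

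\textbf{Main obstacle.} The only nontrivial step is bounding $\norm{Q^{-1}}$ by $2^{\poly}$. This goes through Cramer's rule, $\det(Q) \geq 1$, and Hadamard's inequality as above.
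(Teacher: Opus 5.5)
Your proof is correct and follows essentially the same route as the paper: the lower bound $\vec{z}^T Q \vec{z} \geq \norm{\vec{z}}^2/\norm{Q^{-1}}$, which the paper phrases as $\norm{\vec{z}} \leq \norm{A^{-1}}\norm{A\vec{z}}$ for $Q = A^T A$, and then Cramer's rule, Hadamard's inequality, and $\det(Q) \geq 1$ to bound $\norm{Q^{-1}}$ by $2^{\poly(n, \len(Q))}$. The only cosmetic difference is in Item 2: the paper writes $U = A^{-1} O^{-1} B$ for an orthogonal $O$ (with $Q' = B^T B$), whereas you get the same bound by applying Item 1 to $U\vec{x}$ for unit vectors $\vec{x}$.
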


\begin{proof}
Let $A, B \in \GL_n(\R)$ be such that $Q = A^T A$ and $Q' = B^T B$.
For \cref{item:norm-bound-vec}, we have that
\[
\norm{\vec{z}} \leq \norm{B^{-1}} \norm{B \vec{z}} = \norm{Q^{-1}}^{1/2} \norm{B \vec{z}} \leq (\beta \norm{Q^{-1}})^{1/2} \ \text{.}
\]
For \cref{item:norm-bound-mat}, we have that $U^T A^T A U = U^T Q U = B^T B$, and therefore that $O := B U^{-1} A^{-1}$ is an orthogonal matrix. It follows that
\[
\norm{U} = \norm{A^{-1} O^{-1} B} \leq \norm{A^{-1}} \norm{O^{-1}} \norm{B} = \norm{A^{-1}} \norm{B} = (\norm{Q^{-1}} \norm{Q'})^{1/2} \ \text{.}
\] 

Let $P = (\vec{p}_1, \ldots, \vec{p}_n) \in \SPD_n(\Z)$. Then $\norm{P} \leq n \cdot \norm{P}_{\infty}$.
Furthermore, $P^{-1} = \adj(P)/\det(P)$ by Cramer's rule, where $\adj(P)$ is the adjugate matrix of $P$, and $\norm{\adj(P)}_{\infty} \leq \max_{i \in [n]} \norm{\vec{p}_i}^{n - 1} \leq \norm{P}^n$, where the first inequality uses Hadamard's inequality. So,
\[
\norm{P^{-1}} \leq \norm{P}^n/\abs{\det(P)} \leq \norm{P}^n \ \text{,}
\]
where the second inequality uses the fact that $\det(P)$ is integer-valued and non-zero.
Bounds on the magnitudes of the (integer) entries and therefore bit lengths of $\vec{z}$ and $U$ in the ``in particular'' statements follow.
\end{proof}

%% file: nminus8-table.tex
\begin{table}[t]
\centering
\renewcommand{\arraystretch}{1.35}
\small
\begin{tabular}{c|c|c|c|c|c|c|c|c|c|c|c|c|c|c}
$n$ & 8 & 12 & 14 & 15 & 16 & 17 & 18 & 18 & 19 & 20 & 20 & 21 & 22 & 23 \\
\hline
\textbf{Lattice} $\lat$ & $E_8$ & $D_{12}$ & $E_7^2$ & $A_{15}$ & $D_8^2$ & $A_{11}E_6$ & $D_6^3$ & $A_9^2$ & $A_7^2D_5$ & $D_4^5$ & $A_5^4$ & $A_3^7$ & $A_1^{22}$ & $O_{23}$ \\
\hline
\textbf{Code} $\C$ & $A_8$ & $B_{12}$ & $D_{14}$ & --- & $F_{16}$ & --- & $H_{18}$ & --- & --- & $M_{20}$ & --- & --- & $G_{22}$ & --- \\
\end{tabular}
\caption{\small A complete list (up to isomorphism) of the $14$ full-rank reduced lattices $\lat \subset \R^n$ with $s(\lat) = n - 8$ given in~\cite{Elkies-long-shadows}. When applicable, we also include the corresponding reduced binary code $\C \subset \F_2^n$ for which $\lat \cong (1/\sqrt{2}) \lat_{\C}$ (i.e., $\lat$ can be constructed from $\C$ via Construction A).
We follow the notation of~\cite{Elkies-long-shadows} and \cite[Table 16.7]{Conway-Sloane-SPLAG} by specifying self-dual lattices $\lat$ in terms of their constituent root lattices (which are adjoined by so-called glue vectors to make them self-dual).
The listed codes $\C$, not all of which appear in~\cite{Elkies-long-shadows}, are (up to permutational equivalence) the 
unique reduced codes of a given length $n$ with the minimum possible number of weight-$4$ vectors (according to \cite[Theorem 1A]{Elkies-long-shadows}).
Bases for the codes $A_8, B_{12}, D_{14}, F_{16}, H_{18}$, and $M_{20}$ appear in~\cite[Table 1]{pless-selforthcodes-1972}.
A basis for $G_{22}$ is given in \cite{An2025ShortestSOEmbeddings}.}
\label{tab:n-8}
\end{table}

%% file: SDLIP.tex
\section{On (Non-)Isomorphism of Self-Dual Lattices}
\label{sec:SDLIP}

In this section, we study $\LIP$ on self-dual lattices.

\subsection{An Algorithm for \texorpdfstring{$\LIP$}{LIP} on Self-Dual Lattices}

We first argue that there is a roughly $2^{n/2}$-time algorithm for computing the decomposition of a self-dual lattice $\lat$ into the direct sum of $\lat_0$ and a rotation of $\Z^r$.
The algorithm works by noting that the problem reduces to finding unit-length vectors in $\lat$, and so we start with the following theorem. Recall that an \emph{integral lattice} $\lat$ is one that satisfies $\lat \subset \lat^*$ (and therefore that every self-dual lattice is integral).

\begin{lemma} \label{lem:lip-integral-lat-alg}
There is a $2^{n/2 + o(n)}$-time randomized algorithm for $\SVP$ on integral lattices $\lat \subset \R^n$ with $\lambda_1(\lat) \leq o(\sqrt{n}/\log n)$.
\end{lemma}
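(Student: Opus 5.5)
The plan is to adapt the $2^{n/2+o(n)}$-time algorithm of~\cite{conf/eurocrypt/BennettGPS23} for finding unit vectors in rotations of $\Z^n$. That algorithm is a discrete Gaussian sampling / pairing procedure (in the spirit of~\cite{conf/stoc/AggarwalDRS15}). The key observation is that its analysis used only two properties of $\Z^n$: it is integral, and it has a short vector of norm $o(\sqrt{n}/\log n)$. We will verify that both properties are available for any integral $\lat$ in the statement.

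\textbf{Step 1: short vectors in $2\lat$-cosets.} Integrality means $\iprod{\vec{v},\vec{v}} \in \Z$ for every $\vec{v} \in \lat$. Consequently, squared norms of lattice vectors are integers. Moreover, for $\vec{x},\vec{y}$ in the same coset of $2\lat$, the midpoint $(\vec{x}+\vec{y})/2$ lies in $\lat$. We then sample many vectors from a discrete Gaussian $D_{\lat,\sigma}$ with $\sigma$ large enough that sampling is efficient (e.g., after LLL, $\sigma = 2^{O(n)}\lambda_1$ suffices, or one starts from known samplers). Next we run the standard "square-sampler" iterations: we pair vectors lying in the same coset of $\lat/2\lat$ and take their averages. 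Each iteration yields samples from $D_{\lat,\sigma/\sqrt{2}}$, up to the usual rejection/coset-balancing. We repeat until $\sigma$ drops to about $\lambda_1(\lat)\cdot\mathrm{polylog}$.

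\textbf{Step 2: a shortest vector appears with noticeable probability.} At a parameter $\sigma \approx \lambda_1 \cdot \sqrt{\log n}$ we must show that a shortest nonzero vector appears with probability at least $2^{-o(n)}$. The plan is to bound the Gaussian mass $\rho_\sigma(\lat)$ from above using integrality. Norms are square roots of integers, so the number of lattice vectors of squared norm $k$ is at most the number of representations counted by theta-series-type bounds. Here the hypothesis $\lambda_1 = o(\sqrt{n}/\log n)$ enters: all vectors in a ball of radius $O(\lambda_1\sqrt{\log n})$ have squared norm $o(n/\log n)$. A Kabatiansky–Levenshtein-type or simple integrality packing argument then bounds their count by $2^{o(n)}$. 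I expect this to follow~\cite{conf/eurocrypt/BennettGPS23}, where the analogous count for $\Z^n$ is $\binom{n}{k}2^k$. The general integral-lattice count should come from the observation that vectors of squared norm $\le k$ in an integral lattice have pairwise inner products in $\Z \cap [-k,k]$, so they form a spherical code with few distinct angles. Hence the count is $n^{O(k)}$, which is $2^{o(n)}$ when $k = o(n/\log n)$.

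\textbf{Step 3: running time.} The $2^{n/2+o(n)}$ running time comes, as in~\cite{conf/eurocrypt/BennettGPS23,conf/stoc/AggarwalDRS15}, from needing about $2^{n/2}$ samples so that collisions in $\lat/2\lat$ (which has $2^n$ classes) occur via the birthday bound. Each step costs $\poly(n)$ per sample.

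The main obstacle is Step 2: proving the $2^{o(n)}$ bound on the number of short vectors and checking that the sampler's output distribution is close enough to $D_{\lat,\sigma}$ at the final parameter. I would first try the few-distances (Delsarte–Goethals–Seidel) bound $\binom{n+k}{k}$ for codes with at most $k$ inner products, applied to normalized vectors of each fixed norm.
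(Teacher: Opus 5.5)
There is a genuine gap. You have misidentified the hypothesis of the result you want to use, and the re-derivation you substitute for it does not give the claimed running time. \cite[Corollary~5.5]{conf/eurocrypt/BennettGPS23} is stated for \emph{semi-stable} lattices, i.e., lattices in which every sublattice $\lat'$ satisfies $\det(\lat') \geq 1$, with $\lambda_1(\lat) = o(\sqrt{n}/\log n)$. Integrality per se is not its hypothesis. The missing idea, which is the paper's entire proof, is that integral lattices are semi-stable. If $B'$ is a basis of a sublattice $\lat'$ of an integral lattice, then $\det(\lat')^2 = \det((B')^T B')$ is the determinant of a positive definite integer matrix. It is therefore a positive integer, so $\det(\lat') \geq 1$. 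With this observation the corollary applies as a black box, and nothing about the sampler needs to be reopened.

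Your attempt to reopen the sampler breaks at Step~3. At the large starting parameter you propose, $D_{\lat,\sigma}$ is nearly uniform over the $2^n$ cosets of $2\lat$. So $M = 2^{n/2}$ samples yield only about $M^2/2^{n+1} = O(1)$ same-coset pairs, and the pool of samples collapses after the first averaging round. The birthday bound is exactly why the plain pairing sampler of \cite{conf/stoc/AggarwalDRS15} needs $2^{n+o(n)}$ time; it does not justify $2^{n/2}$. Beating that bound is the nontrivial content of \cite{conf/eurocrypt/BennettGPS23}.

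Your Step~2 counting is correct, and even slightly stronger than what reverse Minkowski gives for semi-stable lattices. Applying the Delsarte--Goethals--Seidel bound to each shell of squared norm $m$, where normalized inner products take at most $2m$ values, gives $n^{O(k)}$ vectors of squared norm at most $k$, versus $2^{O(k\log^2 n)}$. But it only controls the final parameter. It leaves unresolved whether the sampler's output is close to $D_{\lat,\sigma}$, which you yourself flag as the main obstacle.

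As written, the proposal establishes neither correctness nor the $2^{n/2+o(n)}$ bound. The fix is to drop the re-derivation and prove semi-stability.
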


\begin{proof}
A lattice $\lat \subset \R^n$ is called \emph{semi-stable} if for every sublattice $\lat' \subseteq \lat$, $\det(\lat') \geq 1$. And, \cite[Corollary 5.5]{conf/eurocrypt/BennettGPS23} gives a $2^{n/2 + o(n)}$-time randomized algorithm for $\SVP$ on semi-stable lattices $\lat \subset \R^n$ with $\lambda_1(\lat) = o(\sqrt{n}/\log n)$. So, to prove the lemma it suffices to show that an arbitrary integral lattice $\lat$ is semi-stable.

For this, consider any arbitrary sublattice $\lat' \subseteq \lat$ of $\lat$, and let $\basis'$ be a basis of $\lat'$. We have that $\det(\lat')^2 = \det((\basis')^T \basis')$ by definition, and, because $\lat'$ is integral, $(\basis')^T \basis'$ is integer-valued. Furthermore, the Gram matrix of a basis is positive definite and so $\det((\basis')^T \basis') > 0$. It follows that $\det(\lat')^2 = \det((\basis')^T \basis')$ is an integer greater than $0$, which implies that $\det(\lat') \geq 1$, as needed.
\end{proof}

We then get the following theorem.

\begin{theorem} \label{thm:computing-reduced-lattice}
There is a randomized algorithm that, on input a quadratic form $Q \in \SPD_n(\Z)$ corresponding to a self-dual lattice $\lat \subset \R^n$ with $\lat \cong \lat_0 \oplus \Z^r$ for a reduced lattice $\lat_0$, outputs a unimodular matrix $U \in \GL_n(\Z)$ such that $U^T Q U = Q_0 \oplus I_r$, where $Q_0$ is a quadratic form corresponding to $\lat_0$. The algorithm runs in $2^{n/2 + o(n)}$ time.
\end{theorem}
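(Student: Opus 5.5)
The plan is to peel off unit vectors one at a time, using \cref{lem:lip-integral-lat-alg} to find each one and lattice algebra to split it off. The key structural point is this. Let $\lat$ be self-dual and let $\vec{u} \in \lat$ satisfy $\norm{\vec{u}} = 1$. Then $\lat = \Z\vec{u} \oplus \lat_{\vec{u}}$, where $\lat_{\vec{u}} := \set{\vec{v} \in \lat : \iprod{\vec{v},\vec{u}} = 0}$.

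To see this, take any $\vec{v} \in \lat$. Since $\lat$ is integral, $\iprod{\vec{v},\vec{u}} \in \Z$, so $\vec{v} - \iprod{\vec{v},\vec{u}}\vec{u}$ lies in $\lat \cap \vec{u}^\perp$. The summand $\lat_{\vec{u}}$ is again self-dual, because $\det(\lat_{\vec{u}}) = \det(\lat)/\norm{\vec{u}} = 1$ and it is integral.

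In terms of quadratic forms, suppose $Q\vec{z}$ is given with $\vec{z}^TQ\vec{z} = 1$. The integer linear map $\vec{x} \mapsto \vec{x} - (\vec{z}^TQ\vec{x})\vec{z}$ sends $\Z^n$ onto the orthogonal complement sublattice. I can then compute a unimodular $U_1$ whose first column is $\vec{z}$ and whose remaining columns form a basis of $\set{\vec{x} \in \Z^n : \vec{z}^TQ\vec{x} = 0}$. This uses Hermite normal form or extended gcd, and $\vec{z}$ is primitive since $\vec{z}^TQ\vec{z} = 1$. The result is $U_1^TQU_1 = 1 \oplus Q_1$.

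The algorithm is then iterative. Maintain a unimodular $U$ with $U^TQU = Q_{\mathrm{cur}} \oplus I_j$. At each step, run the $\SVP$ algorithm on the lattice of $Q_{\mathrm{cur}}$, of rank $n-j$. If it returns a vector of squared norm $1$, split it off as above and increment $j$. Otherwise $\lambda_1^2 \geq 2$, since the lattice is integral and hence all squared norms are positive integers; in that case we stop and output $Q_0 := Q_{\mathrm{cur}}$.

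At termination $\lat \cong \lat_0' \oplus \Z^j$ with $\lat_0'$ reduced. By the uniqueness in \cref{fct:reduced-direct-sum}, $\lat_0' \cong \lat_0$ and $j = r$. There are at most $n$ iterations, so the total running time is $n \cdot 2^{n/2+o(n)}$. For the bit lengths of intermediate matrices, the forms stay unimodular with entries bounded via \cref{lem:norm-bounds}; alternatively, one can LLL-reduce $Q_{\mathrm{cur}}$ between rounds.

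The main obstacle is the hypothesis of \cref{lem:lip-integral-lat-alg}, which requires $\lambda_1 = o(\sqrt{n}/\log n)$. We do not need the exact $\lambda_1$; we only need to detect a vector of norm $1$. So at each step I would run the semi-stable $\SVP$ procedure of \cite{conf/eurocrypt/BennettGPS23} with the length bound $1$; equivalently, use its guarantee for finding vectors of norm at most $1$. If it returns nothing of norm $1$, we conclude that none exists, and the failure probability is made small by repetition.

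A second point to check is that the algorithm applies to rank $n-j$ rather than to $n$. When $n-j$ is small this is harmless: the running time $2^{(n-j)/2+o(n)}$ is fine, and for very small rank one can simply use \cref{thm:MV-det-svp-cvp}, since a constant $1 = o(\sqrt{m}/\log m)$ holds for large $m$ and the remaining small cases cost $O(1)$ time.
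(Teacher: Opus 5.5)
Your proposal is correct and follows essentially the same route as the paper: repeatedly run the integral-lattice $\SVP$ algorithm of \cref{lem:lip-integral-lat-alg}, split off each unit vector it finds, and stop once none is returned, handling the $o(\sqrt{n}/\log n)$ promise exactly as the paper does. Your orthogonal-complement sublattice $\lat \cap \vec{u}^{\perp}$ coincides with the paper's projection $\pi_{\vec{u}^{\perp}}(\lat)$ because $\vec{v} - \iprod{\vec{v}, \vec{u}}\vec{u} \in \lat$.

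The differences are cosmetic. You get $j = r$ from uniqueness of the reduction rather than by identifying the unit vectors of $\lat_0 \oplus \Z^r$ as exactly $\pm \vec{v}_i$, and you make explicit the rank and bit-length bookkeeping the paper leaves implicit. Of your two bit-length justifications, LLL-reducing between rounds is the one that works, since \cref{lem:norm-bounds} bounds transformations in terms of the forms rather than bounding the forms themselves.
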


\begin{proof} For ease of presentation, we give the algorithm in terms of lattices rather than quadratic forms. Let $\basis$ be a basis of $\lat$.

The algorithm works as follows. It first initializes a set of vectors $V := \emptyset$. It then calls the algorithm from \cref{lem:lip-integral-lat-alg} on $\basis$, receiving as output $\vec{v} \in \lat$. If $\norm{\vec{v}} = 1$, the algorithm sets $V := V \cup \set{\vec{v}}$ and then recurses on (a basis of) $\pi_{\lspan(V)^{\perp}}(\lat)$. Otherwise, it does nothing. (We assume without loss of generality that the $\SVP$ algorithm in \cref{lem:lip-integral-lat-alg} always outputs \emph{some} lattice vector, even if the promise that $\lambda_1(\lat) = o(\sqrt{n}/\log n)$ is not met.)
Let $W$ be a basis of $\pi_{\lspan(V)^{\perp}}(\lat)$ for the final set $V$ of unit vectors. The algorithm then computes and outputs $U \in \GL_n(\Z)$ such that $BU = (W | V)$.

The running time is clear since each of the at most $n$ calls made to \cref{lem:lip-integral-lat-alg} takes $2^{n/2 + o(n)}$ time, and all of the linear algebraic operations performed are efficient.
We now prove correctness.
By definition of a reduced lattice, $\lambda_1(\lat_0) > 1$. 
When $r = 0$ there are no unit vectors in $\lat$, and when $r \geq 1$, the shortest non-zero vectors in $\lat \cong \lat_0 \oplus \Z^r$ are exactly the $2r$ unit vectors $\pm \vec{v}_1, \ldots, \pm \vec{v}_r$, with $\iprod{\vec{v}_i, \vec{v}_j} = 0$ for $i \neq j$ such that $\lat(\vec{v}_1, \ldots, \vec{v}_r) \cong \Z^r$.
So, when $r \geq 1$, $\lambda_1(\lat) = 1$ and the premise of the $\SVP$ algorithm in \cref{lem:lip-integral-lat-alg} is met. It therefore necessarily outputs $\pm \vec{v}_i$ for some $i$.
It follows that at recursion depth $r'$ for $0 \leq r' \leq r$, $\lat(V) \cong \Z^{r'}$ and $\pi_{\lspan(V)^{\perp}}(\lat) \cong \lat_0 \oplus \Z^{r - r'}$. In particular, at recursion depth $r$, $\lat(V) \cong \Z^r$ and $\pi_{\lspan(V)^{\perp}}(\lat) \cong \lat_0$, as needed.

\end{proof}

We now restate and prove our main algorithm for search $\SDLIP$ on self-dual lattices $\lat, \lat'$, which is relatively efficient if $R_0(\lat), R_0(\lat')$ are small.

\lowRlatalg*

\begin{proof}
The algorithm works as follows. First, it uses the algorithm from \cref{thm:computing-reduced-lattice} to compute unimodular matrices $U, U' \in \GL_n(\Z)$ such that $U^T Q U = Q_0 \oplus I_r$ and $(U')^T Q' U' = Q_0' \oplus I_r$ for quadratic forms $Q_0, Q_0' \in \SPD_{n_0}(\Z)$ corresponding to reductions of $\lat, \lat'$, respectively, and $r := n - n_0$. It then uses the Haviv-Regev $\LIP$ algorithm given in \cref{thm:HR-LIP} to solve search $\LIP$ on $Q_0, Q_0'$, obtaining $U_0 \in \GL_n(\Z)$ such that $U_0^T Q_0 U_0 = Q_0'$. 
Finally, it outputs $U_f := U (U_0 \oplus I_r) (U')^{-1} \in \GL_n(\Z)$.

By the assumption that $\lat \cong \lat'$ and the guarantees of \cref{thm:computing-reduced-lattice,thm:HR-LIP}, $U_f^T Q U_f = Q'$, showing correctness. Furthermore, \cref{thm:computing-reduced-lattice} runs in $2^{n/2 + o(n)}$ time on each of $Q$ and $Q'$, and the search $\LIP$ algorithm in \cref{thm:HR-LIP} runs in $n_0^{O(n_0)}$ time on $Q_0, Q_0'$. The running time bound follows.
\end{proof}

\subsection{Certifying Non-Isomorphism of Self-Dual Lattices}

In this section, we show that $\SDLIP_K$ is in $\coNP$ for appropriate $K$. We also show that $\SDLIP_K$ is in $\NP$ for appropriate $K$, which is non-trivial because YES instances of $\SDLIP_K$ on lattices $\lat, \lat'$ have the property that $s(\lat) = s(\lat')$ must be large.

We first show how to certify that a lattice $\lat_0$ is the reduction of a self-dual lattice $\lat$.
\begin{lemma} \label{lem:certifying-reduction}
Let $Q \in \SPD_n(\Z)$, $Q_0 \in \SPD_{n_0}(\Z)$ with $n_0 \leq n$ be quadratic forms corresponding to self-dual lattices $\lat$, $\lat_0$, respectively.
There is an algorithm for deciding whether $\lat_0$ is a reduction of $\lat$ that runs in $4^{n_0 + o(n_0)} + \poly(n)$ time.
In particular, if $n_0 = O(\log n)$ then the algorithm runs in $\poly(n)$ time.
\end{lemma}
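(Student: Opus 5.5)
The plan is to split the decision into two tests: one that only involves $\lat_0$, and one that relates $\lat_0$ to $\lat$. By definition, $\lat_0$ is a reduction of $\lat$ if and only if (i) $\lat_0$ is reduced, i.e.\ $\lambda_1(\lat_0)^2 \geq 2$, and (ii) $\lat \cong \lat_0 \oplus \Z^{r}$ with $r := n - n_0$. Uniqueness of the reduction in \cref{fct:reduced-direct-sum} shows that no other choice of $r$ can work.

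Test (i) is immediate. Run the $\SVP$ algorithm of \cref{thm:MV-det-svp-cvp} on a basis obtained from $Q_0$ (for instance via a Cholesky factorization, or by running it directly on the form). This takes $4^{n_0 + o(n_0)}$ time, and we accept iff the shortest nonzero value of $\vec{z}^T Q_0 \vec{z}$ is at least $2$. Because $Q_0$ is integral, a bit-length argument as in \cref{lem:norm-bounds} keeps the arithmetic polynomial. In the same time we can compute $s(\lat_0)$ by \cref{cor:comp-shortest-char-vec}. This gives the necessary condition $s(\lat) = s(\lat_0) + r$ from \cref{fct:properties-s}. We also check the cheap genus condition: $\lat_0$ and $\lat$ must be consistent in parity, so if $\lat_0$ is odd or $r > 0$ then $\lat$ must be odd.

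Test (ii) is the real content. I would try to extract the $\Z^r$ part of $\lat$ in polynomial time, exploiting that $r = n - n_0$ is large. The idea is to find an orthonormal set of unit vectors $\vec{v}_1, \ldots, \vec{v}_r \in \lat$. Then project $\lat$ orthogonally to them, as in the proof of \cref{thm:computing-reduced-lattice}, to get a self-dual lattice of rank $n_0$ with quadratic form $Q_1$. Finally, decide $[Q_1] = [Q_0]$ with the isomorphism algorithm of \cref{thm:HR-LIP} in $n_0^{O(n_0)}$ time, or, to stay within the claimed $4^{n_0 + o(n_0)}$, by enumerating short vectors of $Q_1$ and $Q_0$ with \cref{thm:MV-det-svp-cvp} and matching bases. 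For finding the unit vectors, I would work modulo $2$. In $\lat \cong \lat_0 \oplus \Z^r$, every unit vector is congruent modulo $2\lat$ to a vector of the $\Z^r$ summand. The characteristic vectors, computable in polynomial time by \cref{fact:char-vec-properties}, \cref{item:char-vec-explicit}, have a shortest representative of the form $(\vec{w}_0, \pm 1, \ldots, \pm 1)$. So I would try to locate this short characteristic vector in $2\lat + \vec{w}$ with a lattice-reduction (LLL/size-reduction) procedure, and read off the $\pm \vec{v}_i$ from its $\Z^r$ coordinates, recursing as in \cref{thm:computing-reduced-lattice}.

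The main obstacle is this last step: producing the unit vectors of $\lat$ in $\poly(n)$ time rather than the $2^{n/2 + o(n)}$ of \cref{lem:lip-integral-lat-alg}. General $\SVP$/$\CVP$ tools only give exponential approximation factors, so the argument must use the special structure that all but $n_0 = O(\log n)$ dimensions are a copy of $\Z^r$. If the reduction-based approach fails, my fallback would be to guess the $n_0$-dimensional ``non-$\Z$'' part. One would enumerate candidate rank-$n_0$ primitive sublattices by brute force over $2^{O(n_0)}$ choices modulo $2$. Each candidate would be checked by verifying that its orthogonal complement in $\lat$ has an integral, unimodular Gram matrix with all diagonal entries equal to $1$ after the polynomial-time unimodular reduction of \cref{lem:checkz}-style parity checks. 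This remains polynomial when $n_0 = O(\log n)$.
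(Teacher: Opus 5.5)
\textbf{There is a genuine gap, and it comes from how you read the statement.} The paper's algorithm is a \emph{verifier}: it is used inside the $\NP$ and $\coNP$ protocols of \cref{lem:SDLIP_K-NP,lem:SDLIP_K-coNP}, and it is handed a unimodular $U \in \GL_n(\Z)$ as a witness. It checks by matrix multiplication that $U^T Q U = Q_0 \oplus I_r$ for some $r \geq 0$; bit lengths are polynomial by \cref{lem:norm-bounds}. It then checks that $Q_0$ does not represent $1$, i.e.\ $\lambda_1(\lat_0) > 1$, using the $\SVP$ algorithm of \cref{thm:MV-det-svp-cvp} in $4^{n_0 + o(n_0)}$ time. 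Completeness holds because $\lat \cong \lat_0 \oplus \Z^r$ means exactly that such a $U$ exists, and soundness is immediate. Your test (i) is the paper's second check. Your entire test (ii) is replaced by the single check on the supplied $U$, so no unit vectors have to be found and no isomorphism test between rank-$n_0$ forms is needed.

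Read as a certificate-free decision procedure, your step (ii) cannot be done with known techniques, because it would put $\ZLIP$ in polynomial time. By uniqueness of reductions (\cref{fct:reduced-direct-sum}), a self-dual $\lat$ of rank $n$ satisfies $\lat \cong \Z^n$ if and only if $E_8$ is a reduction of $\lat \oplus E_8$, which is an instance with $n_0 = 8$. The fastest known $\ZLIP$ algorithms take $2^{n/2 + o(n)}$ time.

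\begin{itemize}
\item Your LLL idea does not close this. LLL/size reduction only finds a close vector in the coset $2\lat + \vec{w}$ up to a $2^{O(n)}$ factor, so it will not expose the $(\vec{w}_0, \pm 1, \ldots, \pm 1)$ representative.
\item The fallback also fails. There are roughly $2^{n_0(n - n_0)}$ rank-$n_0$ subspaces modulo $2$, not $2^{O(n_0)}$. Checking that an orthogonal complement is $\cong \Z^r$ is again $\ZLIP$, and \cref{lem:checkz} only verifies characteristic vectors; it does not produce a unit-diagonal Gram matrix.
\item Your final comparison of $Q_1$ with $Q_0$ via \cref{thm:HR-LIP} costs $n_0^{O(n_0)}$. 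This exceeds the claimed bound and is superpolynomial for $n_0 = \Theta(\log n)$. The proposed shortcut of matching bases built from short vectors to reach $4^{n_0 + o(n_0)}$ is not justified.
\end{itemize}
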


\begin{proof}
The certificate is a unimodular matrix $U \in \GL_n(\Z)$. The verifier first checks that $U^T Q U = Q_0 \oplus I_r$ for some $r \in \Z_{\geq 0}$. It then checks that $Q_0$ does not represent $1$ (i.e., that $\lambda_1(\lat_0) > 1$) using the deterministic $\SVP$ algorithm in \cref{thm:MV-det-svp-cvp}. The verifier accepts if and only if both of these checks pass.

The $\SVP$ algorithm in \cref{thm:MV-det-svp-cvp} runs in $4^{n_0 + o(n_0)}$ time on $Q_0$, and so it is clear that the verifier runs in the claimed time bound. It remains to argue correctness. The first check passes if and only if $\lat \cong \lat_0 \oplus \Z^r$ for some lattice $\lat_0$, and the second check passes if and only if $\lat_0$ is a reduced lattice. 
It follows that $\lat_0$ is a reduction of $\lat$ if and only if both checks pass, as needed.
\end{proof}

We get the following lemmas showing that $\SDLIP_K$ is in $\NP$ and $\coNP$, respectively, for sufficiently small $K = K(n)$.

\begin{lemma} \label{lem:SDLIP_K-NP}
For any $K = K(n)$ such that $\NL(K) \leq O(\log n)$, $\SDLIP_K$ is in $\NP$.
\end{lemma}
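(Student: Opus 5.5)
The plan is to give an $\NP$ certificate that simultaneously shows $[Q]_\Z = [Q']_\Z$ and $s(\lat) \geq n - 8K$. The certificate consists of a unimodular matrix $U_f \in \GL_n(\Z)$ claimed to satisfy $U_f^T Q U_f = Q'$, together with a quadratic form $Q_0 \in \SPD_{n_0}(\Z)$ and a unimodular $U \in \GL_n(\Z)$ claimed to satisfy $U^T Q U = Q_0 \oplus I_r$ with $r = n - n_0$ and $n_0 \leq \NL(K)$. By \cref{lem:norm-bounds}, \cref{item:norm-bound-mat}, any such $U_f$ and $U$ have bit length $\poly(n, \len(Q), \len(Q'))$. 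The same holds for $Q_0$, since it is a block of $U^T Q U$. So the certificate has polynomial size.

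The verifier performs three checks.
\begin{enumerate}
\item It checks $U_f^T Q U_f = Q'$ directly, which shows $\lat \cong \lat'$.
\item It checks $U^T Q U = Q_0 \oplus I_r$ with $n_0 \leq \NL(K) = O(\log n)$, and runs the verifier of \cref{lem:certifying-reduction} to confirm that $Q_0$ corresponds to a reduced lattice $\lat_0$. This takes $4^{O(\log n)} + \poly(n) = \poly(n)$ time. For $\NL(K)$ to be usable as a threshold the verifier should know it; if it does not, the verifier can instead accept any $n_0 \leq C\log n$ for the constant $C$ implicit in the hypothesis.
\item It computes $s(\lat_0)$ with \cref{cor:comp-shortest-char-vec} in $4^{n_0+o(n_0)} = \poly(n)$ time, then computes $s(\lat) = s(\lat_0) + r$ using \cref{fct:properties-s}, \cref{item:properties-s-orthogonal-permutation-matrix,item:properties-s-direct-sum,item:properties-s-Zr}. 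It accepts if and only if $s(\lat) \geq n - 8K$.
\end{enumerate}

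For completeness, suppose we have a YES instance. Then $U_f$ exists. By \cref{fct:reduced-direct-sum}, $\lat \cong \lat_0 \oplus \Z^r$ for a reduced $\lat_0$, and the form-equivalence discussion in \cref{sec:prelims-equiv-lip-qfe} converts this isomorphism into a unimodular $U$ with $U^T Q U = Q_0 \oplus I_r$. Since $s(\lat) \geq n - 8K$, the definition of $\NL(K)$ gives $n_0 = R_0(\lat) \leq \NL(K)$. All checks therefore pass.

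For soundness, suppose all checks pass. The first check gives $[Q]_\Z = [Q']_\Z$, and the third check certifies $s(\lat) \geq n - 8K$ exactly, because $s$ is an isomorphism invariant. So the instance is not a NO instance.

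The main subtlety I expect is in the third check. The certificate must be verifiable in polynomial time even though computing $s$ for a rank-$n$ lattice takes exponential time. The reduction step handles this, since it shrinks the exponential computation to rank $O(\log n)$. Everything else is routine bit-length bookkeeping.
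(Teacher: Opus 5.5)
Your proof is correct, but it is organized differently from the paper's.

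\textbf{The paper's route.} The certificate is a pair $U, U' \in \GL_n(\Z)$ that brings \emph{both} forms to the shapes $Q_0 \oplus I_r$ and $Q_0' \oplus I_r$, together with $U_0$ satisfying $U_0^T Q_0 U_0 = Q_0'$. The verifier checks both reductions via \cref{lem:certifying-reduction} and deduces $\lat \cong \lat'$ from $\lat_0 \cong \lat_0'$, mirroring the algorithm of \cref{thm:intro-low-R0-lattice-alg}. It never checks $s(\lat) \geq n - 8K$.

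\textbf{Your route.} You supply the isomorphism $U_f$ directly. You use the low-rank reduction of $\lat$ alone, and only to certify the $s$-condition by computing $s(\lat_0) + r$.

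\textbf{What each buys.} As your own soundness paragraph shows, check 1 already excludes every NO instance, since NO instances are non-isomorphic by definition. So for membership of the promise problem in $\NP$, your checks 2 and 3 are not needed, just as the paper's reduction checks are not. In fact the lemma holds for every $K$ with $U_f$ alone as the witness.

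What your extra checks buy is a verifier that accepts exactly the YES instances, i.e., one that genuinely certifies $s(\lat) \geq n - 8K$. This is the ``non-trivial'' aspect the introduction alludes to, and it is the only place where the hypothesis $\NL(K) = O(\log n)$ is really used. Your explicit rejection of $n_0 > C \log n$ also makes the polynomial running time hold against dishonest certificates, which the paper's proof leaves implicit.

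\textbf{One small repair.} Your claim that \emph{any} $U$ with $U^T Q U = Q_0 \oplus I_r$ has polynomial bit length is circular. \cref{lem:norm-bounds} bounds $\norm{U}$ in terms of $\norm{Q_0}$, and $Q_0$ can be arbitrarily large for a bad basis of $\lat_0$. For completeness it suffices to exhibit one short certificate:
\begin{itemize}
\item Take $Q_0$ to be the Gram matrix of an LLL-reduced basis of $\lat_0$.
\item Since $\lat_0$ is integral with determinant $1$, every basis vector has norm at least $1$, and the product of the norms is at most $2^{n_0(n_0-1)/4}$.
\item Hence each basis vector has norm at most $2^{n_0(n_0-1)/4}$, so the entries of $Q_0$ have $O(n_0^2)$ bits.
\item Now \cref{lem:norm-bounds} bounds $U$.
\end{itemize}
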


\begin{proof}
Let $Q, Q' \in \SPD_n(\Z)$ be the input quadratic forms corresponding to self-dual lattices $\lat, \lat'$. The certificate consists of unimodular matrices $U, U' \in \GL_n(\Z)$, and $U_0 \in \GL_{n_0}(\Z)$, where $n_0 := R_0(\lat) = R_0(\lat')$.
The verifier first computes $U^T Q U = Q_0 \oplus I_r$ and $(U')^T Q' U' = Q_0' \oplus I_r$, where $r \in \Z_{\geq 0}$.
It then checks that $Q_0$ and $Q_0'$ are quadratic forms corresponding to reductions $\lat_0$ and $\lat_0'$ of $\lat$ and $\lat'$, respectively, using the algorithm in \cref{lem:certifying-reduction}. Finally, it checks that $U_0^T Q_0 U_0 = Q_0'$.

From the checks using \cref{lem:certifying-reduction}, we know that $\lat_0$ and $\lat_0'$ are reductions of $\lat$ and $\lat'$, respectively. Furthermore, the fact that $U_0^T Q_0 U_0 = Q_0'$ implies that $\lat_0 \cong \lat_0'$, and therefore that $\lat \cong \lat'$. Correctness follows.
Moreover, because $n_0 \leq \NL(K) = O(\log n)$ by assumption, the algorithm runs in $4^{n_0 + o(n_0)} + \poly(n) \leq \poly(n)$ time by \cref{lem:certifying-reduction}, as needed.
\end{proof}

\begin{lemma} \label{lem:SDLIP_K-coNP}
For any $K = K(n)$ such that $\NL(K) \leq O(\log n/\log \log n)$, $\SDLIP_K$ is in $\coNP$.
\end{lemma}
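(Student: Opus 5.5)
We need a polynomial-size certificate, checkable in polynomial time, that every NO instance has and no YES instance has. Write $N := \NL(K) = O(\log n/\log\log n)$. By the NO-case promise, at least one of $s(\lat)$, $s(\lat')$ is at least $n - 8K$. The prover indicates which one; by symmetry assume it is $\lat'$, so $R_0(\lat') \le N$. The certificate has two parts.

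The first part certifies the reduction of $\lat'$. It is a unimodular $U' \in \GL_n(\Z)$ such that $(U')^T Q' U' = Q_0' \oplus I_{r'}$ with $n_0' := n - r' \le N$. The verifier checks that $Q_0'$ is unimodular, hence self-dual. It then runs the algorithm of \cref{lem:certifying-reduction} to confirm $\lambda_1(\lat_0')>1$; this takes $4^{n_0'+o(n_0')} = \poly(n)$ time. Finally it computes $s(\lat_0')$ by \cref{cor:comp-shortest-char-vec} in $4^{N+o(N)} = \poly(n)$ time and sets $s(\lat') = s(\lat_0') + r'$ using \cref{fct:properties-s}. The bit length of $U'$ is polynomial by \cref{lem:norm-bounds}, \cref{item:norm-bound-mat}.

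The second part certifies that $\lat$ differs from $\lat'$. There are two cases.

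\emph{Case (a): $s(\lat) < s(\lat')$.} The prover sends a coefficient vector $\vec{z}$ with $\vec{z}^T Q \vec{z} < s(\lat')$. The verifier checks via \cref{lem:checkz} that $B\vec{z}$ is characteristic, and that the bound holds. The bit length of $\vec{z}$ is polynomial by \cref{lem:norm-bounds}, \cref{item:norm-bound-vec}, taking $\beta = n$ (note that $s(\lat') \le n$ since $s(\lat_0') \le n_0'$). Since $s$ is an isomorphism invariant, this certifies $\lat \not\cong \lat'$.

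\emph{Case (b): $s(\lat) \ge s(\lat') \ge n-8K$.} Then $R_0(\lat) \le N$ as well. The prover sends $U$ with $U^T Q U = Q_0 \oplus I_r$, and the verifier confirms that $\lat_0$ is reduced exactly as in the first part. By the uniqueness in \cref{fct:reduced-direct-sum}, $\lat \cong \lat'$ if and only if $r = r'$ and $\lat_0 \cong \lat_0'$. So the verifier accepts if $r \neq r'$. Otherwise it runs the Haviv--Regev algorithm (\cref{thm:HR-LIP}) on $Q_0, Q_0'$ and accepts if they are not isomorphic. That algorithm takes $N^{O(N)} = 2^{O(\log n)} = \poly(n)$ time, and this is exactly where the $\log n/\log\log n$ bound on $\NL(K)$ is used.

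Completeness: in a NO instance, if $s(\lat) < s(\lat')$ a short characteristic vector exists, giving case (a). Otherwise case (b) applies, and non-isomorphism shows up either as $r \neq r'$ or as $\lat_0 \not\cong \lat_0'$.

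Soundness: in a YES instance, $\lat \cong \lat'$. Any verified certificate yields a true invariant discrepancy, namely either $s(\lat) < s(\lat')$ or a differing pair $(r, [\lat_0])$, which is impossible for isomorphic lattices. The only subtlety is that the verifier must itself confirm the reducedness and rank bound of whatever reductions the prover supplies, since the promise is not trusted; that is precisely what \cref{lem:certifying-reduction} provides.

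I expect the main obstacle to be keeping every step polynomial. This needs two things. First, the reduced ranks must be bounded by $N$, guaranteed on the relevant side(s) by the definition of $\NL(K)$. Second, $N^{O(N)}$ must be polynomial, which holds because $N = O(\log n/\log\log n)$. The bit-length bounds for all certificates come from \cref{lem:norm-bounds}.
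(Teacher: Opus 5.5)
Your proposal is correct and follows essentially the same protocol as the paper. Reductions are certified by unimodular transformations checked with \cref{lem:certifying-reduction}; a short characteristic vector (checked via \cref{lem:checkz}) is compared against $s(\lat_0') + r'$; otherwise the verifier compares $r, r'$ and runs Haviv--Regev on the reductions. The only differences are cosmetic: you split cases on whether $s(\lat) < s(\lat')$ rather than on whether both lattices meet the $n-8K$ threshold, and the verifier should reject when $n - r'$ exceeds an explicit bound such as $C\log n/\log\log n$ rather than $\NL(K)$ itself, whose exact value need not be known.
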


\begin{proof}
Let $C > 0$ be a constant such that $\NL(K) < C \log n/\log \log n$ for all $n$.
Let $Q, Q' \in \SPD_n(\Z)$ be the input quadratic forms corresponding to self-dual lattices $\lat, \lat'$.
If $Q, Q'$ is a NO instance of $\SDLIP_K$, then by definition we have that $\lat \not\cong \lat'$ and that one of the the following two cases holds.

\textbf{Case 1:} $s(\lat) \geq n - 8K$ and $s(\lat') \geq n - 8K$. In this case, the certificate consists of unimodular matrices $U, U' \in \GL_n(\Z)$. The verifier first computes $U^T Q U = Q_0 \oplus I_r$ and $(U')^T Q' U' = Q_0' \oplus I_{r'}$, where $r, r' \in \Z_{\geq 0}$.
It rejects if $n - r > C \log n/\log \log n$ or $n - r' > C \log n/\log \log n$. It then checks that $Q_0$ and $Q_0'$ are quadratic forms corresponding to reductions $\lat_0$ and $\lat_0'$ of $\lat$ and $\lat'$, respectively, using the algorithm in \cref{lem:certifying-reduction}.
If $r \neq r'$ the verifier accepts. Otherwise, the verifier runs the $\LIP$ algorithm in \cref{thm:HR-LIP} on $Q_0$ and $Q_0'$, and it accepts if and only if the $\LIP$ algorithm asserts that $\lat_0 \not\cong \lat_0'$.

\textbf{Case 2:} $s(\lat) \geq n - 8K$ and $s(\lat') < n - 8K$, or $s(\lat) < n - 8K$ and $s(\lat') \geq n - 8K$.
Assume without loss of generality that $s(\lat) \geq n - 8K$ and $s(\lat') < n - 8K$.
In this case, the certificate consists of $U \in \GL_n(\Z)$ and $\vec{z} \in \Z^n$.
The verifier first computes $U^T Q U = Q_0 \oplus I_r$, where $r \in \Z_{\geq 0}$. It rejects if $n - r > C \log n/\log \log n$.
Second, it checks that $Q_0$ is a quadratic form corresponding to a reduction $\lat_0$ of $\lat$ using the algorithm in \cref{lem:certifying-reduction}.
Third, it computes $s(\lat_0)$ using the algorithm in \cref{cor:comp-shortest-char-vec}. Fourth, the verifier checks that $\vec{z}$ is the coefficient vector of a characteristic vector of $\lat'$ using \cref{lem:checkz}. Finally, it accepts if and only if $\vec{z}^T Q' \vec{z} < s(\lat_0) + r$.

Correctness follows in Case 1 because $\lat \not \cong \lat'$ if either $\rank(\lat_0) \neq \rank(\lat_0')$ or $\rank(\lat_0) = \rank(\lat_0')$ but $\lat_0 \not \cong \lat_0'$.
Correctness follows in Case 2 because
\[
s(\lat') \leq \vec{z}^T Q' \vec{z} < s(\lat_0) + r = s(\lat) \ \text{,}
\]
where the equality uses \cref{fct:properties-s}, \cref{item:properties-s-direct-sum}.
Therefore, $\lat \not \cong \lat'$ since $s$ is preserved under isomorphism (\cref{fct:properties-s}, \cref{item:properties-s-orthogonal-permutation-matrix}).

We next analyze the verifier's running time. 
First, we note that the certificate in either case has polynomial bit length by \cref{lem:norm-bounds}.
In each case, checking that $Q_0$ corresponds to a reduced lattice $\lat_0$ using \cref{lem:certifying-reduction} takes at most $2^{\NL(K)} \leq 2^{C \log n/\log \log n} = \poly(n)$ time. Similarly, checking that $Q_0'$ is reduced in Case 1 takes $\poly(n)$ time.
Let $n_0 := n - r$, and note that $n_0 \leq \NL(K) \leq C \log n/\log \log n$ by assumption.
In Case 1, checking that $\lat \not\cong \lat'$ using the $\LIP$ algorithm in \cref{thm:HR-LIP} takes
\[
n_0^{O(n_0)} 
\leq (C \log n/\log \log n)^{O(C \log n/\log \log n)} \leq \poly(n)
\]
time (this is the bottleneck of the verifier).
In Case 2, computing $s(\lat_0)$ using \cref{cor:comp-shortest-char-vec} takes $4^{n_0 + o(n_0)} = \poly(n)$ time, and the final two checks using the coefficient vector $\vec{z}$ take $\poly(n)$ time.
Therefore, the verifier takes $\poly(n)$ time overall, as needed.
\end{proof}

Put together, \cref{lem:SDLIP_K-NP,lem:SDLIP_K-coNP} immediately imply our main theorem about $\SDLIP_K$ being in $\NP \cap \coNP$, which we restate.

\latNPcoNP*

We also get the following corollary from the bound on $\NL(3)$ in \cref{thm:Nkbounds} and from \cref{hyp:finite_NLK}.

\begin{corollary}
$\SDLIP_3 \in \NP \cap \coNP$, and, if $\NL(K) < \infty$ for all $K \in \Z_{\geq 0}$ (i.e., if \cref{hyp:finite_NLK} holds), then $\SDLIP_K \in \NP \cap \coNP$ for any constant $K$.
\end{corollary}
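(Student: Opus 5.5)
The corollary follows by combining \cref{thm:intro-coNP-proof} (equivalently, \cref{lem:SDLIP_K-NP,lem:SDLIP_K-coNP}) with the known bounds on $\NL(K)$. The only thing to check is that the hypothesis $\NL(K) \leq O(\log n/\log\log n)$ of \cref{thm:intro-coNP-proof} holds in each case. This suffices for both parts, since $O(\log n/\log\log n) \subseteq O(\log n)$ also covers the hypothesis of \cref{lem:SDLIP_K-NP}.

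For the unconditional part, take $K = 3$. By \cref{thm:Nkbounds}, $\NL(3) \leq 8388630$, which is an absolute constant and hence $O(\log n/\log\log n)$, with the implied constant depending only on this number. For $n$ so small that $8388630 > C\log n/\log\log n$ for the chosen $C$, only finitely many input dimensions are affected. Those can be absorbed either by enlarging $C$ or by handling bounded $n$ by brute force in constant time. So \cref{thm:intro-coNP-proof} gives $\SDLIP_3 \in \NP \cap \coNP$.

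For the conditional part, fix a constant $K$ and assume \cref{hyp:finite_NLK}. Then $\NL(K)$ is a finite number that does not depend on $n$, so it is again $O(1) \subseteq O(\log n/\log\log n)$, and \cref{thm:intro-coNP-proof} applies verbatim.

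The only point needing care is the quantifier order. The constant in ``$\NL(K) \leq O(\log n/\log\log n)$'' must be independent of $n$, which holds because $K$ is fixed and $\NL(K)$ depends only on $K$. The verifier may therefore hard-code the bound $N := \NL(K)$, or any upper bound on it such as $8388630$ for $K = 3$. It does not need to know the exact value, since \cref{lem:SDLIP_K-coNP} only uses an upper bound to reject overly large reductions. No other obstacle arises.
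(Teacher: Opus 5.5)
Your proposal is correct and takes the same approach as the paper, which derives the corollary in one line: $\NL(3) \le 8388630$ by \cref{thm:Nkbounds}, and under \cref{hyp:finite_NLK} the value $\NL(K)$ for each constant $K$ is a constant, so both are $O(\log n/\log\log n)$ and \cref{thm:intro-coNP-proof} applies directly. Your remarks on quantifier order and small $n$ are fine, with one small correction: instances of bounded dimension are decided in time polynomial in the input length, not in constant time, since the entries of $Q, Q'$ can still be large.
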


%% file: SDPCE.tex
\section{On Deciding Equivalence of Self-Dual Codes}
\label{sec:SDPCE}

We now turn to studying Permutation Code Equivalence ($\PCE$) on self-dual codes.
The following lemma says that there is an efficient algorithm to compute a decomposition of a self-dual code $\C$ into a permutation of $\C_0 \oplus \cZ^r$.\footnote{We note in passing that one could also use Gaussian elimination to compute a generator matrix $G$ where $G^T$ is in reduced row echelon form to compute a decomposition of a direct-sum code $\C \cong \C_1 \oplus \cdots \oplus \C_m$ for some $m$ when the codes $\C_i$ for $i \geq 2$ each have constant length.}
The proof is analogous to the proof of \cref{thm:computing-reduced-lattice}. It uses the observation that finding the decomposition reduces to finding the weight-$2$ codewords in $\C$.

\begin{lemma}[{Implicit in~\cite{Elkies-long-shadows}}] \label{lem:computing-reduced-code}
There is a polynomial-time algorithm that, on input a generator matrix $G \in \F_2^{n \times n/2}$ of a self-dual code $\C \subset \F_2^n$, outputs a permutation matrix $P \in \cP_n$ such that $P(\C) = \C_0 \oplus \cZ^r$ for some $r \in \Z_{\geq 0}$, where $\C_0$ is a reduction of $\C$. The algorithm runs in $\poly(n)$ time.
\end{lemma}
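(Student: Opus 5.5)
The plan is to find all weight-$2$ codewords of $\C$, group their supports into pairs of coordinates, and move those pairs to the end. Everything else should remain as the reduction $\C_0$.

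\textbf{Step 1: finding the weight-$2$ codewords.} For each pair $\set{i,j} \subseteq [n]$, test whether $\vec{e}_i + \vec{e}_j \in \C$. Since $\C = \C^{\perp}$, this is the same as checking that the vector has zero inner product with every column of $G$. There are $O(n^2)$ pairs and each test costs $\poly(n)$ time.

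\textbf{Step 2: the supports are disjoint.} Suppose two distinct weight-$2$ codewords $\vec{e}_i+\vec{e}_j$ and $\vec{e}_i+\vec{e}_k$ share a coordinate $i$. Their inner product is $1$, which contradicts self-orthogonality of $\C$. So the supports are pairwise disjoint pairs $\set{i_1,j_1}, \ldots, \set{i_r,j_r}$. Let $S$ be their union and $T := [n]\setminus S$.

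\textbf{Step 3: the splitting $\C = \C|_T \oplus \cZ^r$.} Let $\vec{c}\in\C$. Its inner product with $\vec{e}_{i_t}+\vec{e}_{j_t}$ is $0$, so $c_{i_t}=c_{j_t}$ for every $t$. Adding the codewords $\vec{e}_{i_t}+\vec{e}_{j_t}$ for those $t$ with $c_{i_t}=1$ gives a codeword supported on $T$. Hence $\C = \C_T \oplus \langle \vec{e}_{i_t}+\vec{e}_{j_t}\rangle_t$, where $\C_T$ is the subcode of codewords supported on $T$. Choose the permutation $P$ that lists $T$ first and then the pairs consecutively. Then $P(\C) = \C_0 \oplus \cZ^r$, where $\C_0$ is $\C_T$ restricted to $T$. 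Computing $P$ is trivial.

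\textbf{Step 4: $\C_0$ is a reduced code.} Self-duality follows from dimension counting. $\C_0$ is self-orthogonal, and $\dim \C_0 = n/2 - r = \abs{T}/2$. Every codeword of $\C_0$ has even weight, and a weight-$2$ codeword of $\C_0$ would give a weight-$2$ codeword of $\C$ supported in $T$, a contradiction. Therefore $d(\C_0)\ge 4$; if $T=\emptyset$, then $\C_0$ is trivial. By the uniqueness in \cref{fct:reduced-direct-sum}, $\C_0$ is a reduction of $\C$.

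The only step needing real care is Step 3, specifically the direct-sum splitting (the disjointness argument of Step 2 is the key input). Everything else is routine linear algebra over $\F_2$, so the algorithm runs in $\poly(n)$ time.
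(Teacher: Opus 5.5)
Your proposal is correct and takes essentially the same approach as the paper: enumerate the $O(n^2)$ weight-$2$ vectors, keep those that lie in $\C$, and permute their pairwise-disjoint supports to the last $2r$ coordinates. Your Steps 2--4 (disjointness via self-orthogonality, splitting $\C$ into $\C_T$ plus the span of the weight-$2$ codewords using $c_i = c_j$, and showing $\C_0$ is self-dual with $d(\C_0) \geq 4$ by dimension counting) spell out the correctness argument more carefully than the paper's sketch. In particular, the paper states its key support dichotomy for arbitrary pairs of codewords, but it is only valid when one of the two codewords has weight $2$, which is exactly the case you use.
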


\begin{proof}
The algorithm first computes the set $S := \set{\vec{c} \in \C : \wt(\vec{c}) = 2}$ of all weight-$2$ vectors in $\C$ by enumerating all $\binom{n}{2}$ vectors $\vec{c} \in \F_2^n$ with $\wt(\vec{c}) = 2$ and checking if $\vec{c} \in \C$. Let $r := \card{S}$, and let $\vec{c}_1, \ldots, \vec{c}_r$ be the elements of $S$ enumerated in some order. The algorithm outputs a permutation matrix $P$ corresponding to a permutation that maps $[n] \setminus (\union_{i = 1}^r \supp(\vec{c}_i))$ to $[n - 2r]$, and maps $\supp(\vec{c}_i)$ to $\set{n-2(r-i)-1, n-2(r-i)}$ for $i \in [r]$.

We first analyze the algorithm's running time. There are $\binom{n}{2} = O(n^2)$ weight-$2$ vectors $\vec{c} \in \F_2^n$, and checking whether a given vector $\vec{c}$ is in $\C = \C(G)$ takes $\poly(n)$ time.\footnote{If $G$ is in systematic form, checking whether $\vec{c} \in \C$ takes $O(n^2)$ time.}
Furthermore, computing $P$ given $S$ takes $O(n^2)$ time. So, the algorithm runs in $\poly(n)$ time.
Correctness follows by noting, as in~\cite{Elkies-long-shadows}, that any pair of codewords in $\C$ must either (1) have disjoint supports, or (2) be such that the support of one codeword is contained in the support of the other. In particular, the supports of the vectors in $S$ must be disjoint, and the support of $P\vec{c}_i$ for any vector $\vec{c}_i \in S$ must be disjoint from the support of $\C_0$.
\end{proof}

We then get our main theorem about search $\PCE$ on self-dual codes.

\lowRcodealg*

\begin{proof}
The algorithm works as follows. First, it uses the algorithm from \cref{lem:computing-reduced-code} to compute permutation matrices $P, P' \in \cP_n$ such that $P(\C) = \C_0 \oplus \cZ^r$, $P'(\C') = \C_0' \oplus \cZ^r$ for reduced codes $\C_0, \C_0'$.
It then uses either deterministic algorithm in~\cite{conf/soda/BabaiCGQ11} to solve search $\PCE$ on $\C_0, \C_0'$, obtaining $P_0 \in \cP_{n_0}$ such that $P_0(\C_0) = \C_0'$. Finally, it outputs $P_f := (P')^{-1} (P_0 \oplus I_{2r}) P \in \cP_n$.

By the assumption that $\C \permequiv \C'$ and the guarantees of \cref{lem:computing-reduced-code},~\cite{conf/soda/BabaiCGQ11}, $P_f(\C) = \C'$.
Furthermore, \cref{lem:computing-reduced-code} runs in $\poly(n)$ time on (generator matrices of) each of $\C$ and $\C'$, and the $\PCE$ algorithm in~\cite{conf/soda/BabaiCGQ11} runs in deterministic $2^{n_0 + o(n_0)}$ time on generator matrices of $\C_0, \C_0'$. The running time bound follows.
\end{proof}

We then obtain the following corollary from the fact that $\NC(K) \leq \NL(K)$ (\cref{fact:Nc<Nl}), the bounds on $\NL(K)$ in \cref{thm:Nkbounds}, and \cref{hyp:finite_NCK}. %
\begin{corollary}
For any $K = K(n)$ such that $\NC(K) = O(\log n)$, there is a polynomial-time algorithm for $\SearchSDPCE_K$.
In particular, there is a polynomial-time algorithm for $\SearchSDPCE_3$, and, if $\NC(K) < \infty$ for all $K \in \Z_{\geq 0}$ (i.e., if \cref{hyp:finite_NCK} holds), then there is a polynomial-time algorithm for $\SearchSDPCE_K$ for any constant $K$.
\end{corollary}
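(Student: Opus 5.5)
The plan is to combine \cref{thm:intro-low-R0-code-alg} with the definition of $\NC(K)$ and the known bounds on $\NL(K)$. Here $\SearchSDPCE_K$ is search $\PCE$ on self-dual codes $\C, \C' \subset \F_2^n$ with $\C \permequiv \C'$ and $s(\C) \geq (n - 8K)/2$, which is the natural analog of the lattice promise.

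The first step is to bound $R_0$. If $s(\C) \geq (n-8K)/2$, then by definition of $\NC(K)$ in \cref{eq:NCK} we have $R_0(\C) \leq \NC(K)$. Since $\C \permequiv \C'$, the reductions are permutationally equivalent by uniqueness in \cref{fct:reduced-direct-sum}, so $R_0(\C') = R_0(\C) =: n_0$. The same conclusion also follows from the invariance of $s$ under permutations in \cref{fct:properties-s}, since that gives $s(\C') = s(\C)$.

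The second step is to run the algorithm of \cref{thm:intro-low-R0-code-alg} on $\C, \C'$. Its running time is $2^{n_0 + o(n_0)} + \poly(n)$. If $\NC(K) = O(\log n)$, then $n_0 = O(\log n)$, so $2^{n_0+o(n_0)} = \poly(n)$ and the algorithm runs in polynomial time. This proves the general claim.

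For the ``in particular'' parts, \cref{fact:Nc<Nl} gives $\NC(3) \leq \NL(3) \leq 8388630 = O(1)$ by \cref{thm:Nkbounds}, so $K=3$ is covered. Under \cref{hyp:finite_NCK}, each constant $K$ has $\NC(K)$ equal to a finite constant, which is again $O(\log n)$. The argument is essentially immediate. The only point needing care is that the promise is stated for just one of the two codes, so the $R_0$ bound must be transferred to the other code via the permutational equivalence, as in the first step.
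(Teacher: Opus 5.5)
Your proposal is correct and follows the paper's route: the paper also obtains the corollary by using $R_0(\C) = R_0(\C') \leq \NC(K) = O(\log n)$ in \cref{thm:intro-low-R0-code-alg}, and handles the special cases via $\NC(3) \leq \NL(3) \leq 8388630$ (\cref{fact:Nc<Nl}, \cref{thm:Nkbounds}) and \cref{hyp:finite_NCK}. (One small imprecision: invariance of $s$ alone gives only $R_0(\C') \leq \NC(K)$, not $R_0(\C') = R_0(\C)$; that bound is all the running-time argument needs, and the equality follows from uniqueness of reductions, as you note first.)
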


%% file: main.bbl
\newcommand{\etalchar}[1]{$^{#1}$}
\begin{thebibliography}{DPPvW22}

\bibitem[ADRS15]{conf/stoc/AggarwalDRS15}
Divesh Aggarwal, Daniel Dadush, Oded Regev, and Noah Stephens{-}Davidowitz.
\newblock Solving the shortest vector problem in $2^n$ time using discrete gaussian sampling.
\newblock In {\em STOC}, 2015.

\bibitem[ADS15]{conf/focs/AggarwalDS15}
Divesh Aggarwal, Daniel Dadush, and Noah Stephens{-}Davidowitz.
\newblock Solving the closest vector problem in $2^n$ time---the discrete gaussian strikes again!
\newblock In {\em FOCS}, 2015.

\bibitem[AKK{\etalchar{+}}25]{An2025ShortestSOEmbeddings}
Junmin An, Nathan Kaplan, Jon-Lark Kim, Jinquan Luo, and Guodong Wang.
\newblock Shortest self-orthogonal embeddings of binary linear codes, 2025.
\newblock Available at \url{https://arxiv.org/abs/2511.05440}.

\bibitem[APvW25]{conf/eurocrypt/AllombertPW25}
Bill Allombert, Alice Pellet{-}Mary, and Wessel P.~J. van Woerden.
\newblock Cryptanalysis of rank-2 module-lip: {A} single real embedding is all it takes.
\newblock In {\em {EUROCRYPT}}, 2025.

\bibitem[BBB{\etalchar{+}}26]{journals/tit/BennettBBDLLW26}
Huck Bennett, Drisana Bhatia, Jean{-}Fran{\c{c}}ois Biasse, Medha Durisheti, Lucas LaBuff, Vincenzo~Pallozzi Lavorante, and Phillip Waitkevich.
\newblock Asymptotic improvements to provable algorithms for the code equivalence problem.
\newblock {\em {IEEE} Transactions on Information Theory}, 72(2):1093--1108, 2026.

\bibitem[BBPS21]{conf/pqcrypto/BarenghiBPS21}
Alessandro Barenghi, Jean{-}Fran{\c{c}}ois Biasse, Edoardo Persichetti, and Paolo Santini.
\newblock {LESS-FM:} fine-tuning signatures from the code equivalence problem.
\newblock In {\em PQCrypto}, 2021.

\bibitem[BCGQ11]{conf/soda/BabaiCGQ11}
L{\'{a}}szl{\'{o}} Babai, Paolo Codenotti, Joshua~A. Grochow, and Youming Qiao.
\newblock Code equivalence and group isomorphism.
\newblock In {\em SODA}, 2011.

\bibitem[BGPS23]{conf/eurocrypt/BennettGPS23}
Huck Bennett, Atul Ganju, Pura Peetathawatchai, and Noah Stephens{-}Davidowitz.
\newblock Just how hard are rotations of $\mathbb{Z}^n$? {A}lgorithms and cryptography with the simplest lattice.
\newblock In {\em EUROCRYPT}, 2023.

\bibitem[BLS26]{BLS26-IBE-FHE-crypto}
Huck Bennett, Zhengnan Lai, and Noah Stephens{-}Davidowitz.
\newblock Advanced cryptography from lattice isomorphism—new constructions of {IBE} and {FHE}.
\newblock In {\em CRYPTO}, 2026.

\bibitem[BMM25]{branco25FHELIP}
Pedro Branco, Giulio Malavolta, and Zayd Maradni.
\newblock Fully-homomorphic encryption from lattice isomorphism.
\newblock In {\em TCC}, 2025.

\bibitem[BMPS20]{conf/africacrypt/BiasseMPS20}
Jean{-}Fran{\c{c}}ois Biasse, Giacomo Micheli, Edoardo Persichetti, and Paolo Santini.
\newblock {LESS} is more: Code-based signatures without syndromes.
\newblock In {\em AFRICACRYPT}, 2020.

\bibitem[BOS19]{conf/isit/BardetOS19}
Magali Bardet, Ayoub Otmani, and Mohamed Saeed{-}Taha.
\newblock Permutation code equivalence is not harder than graph isomorphism when hulls are trivial.
\newblock In {\em ISIT}, 2019.

\bibitem[CGG17]{chandrasekaranDecidingOrthogonalityConstructionA2017}
Karthekeyan Chandrasekaran, Venkata Gandikota, and Elena Grigorescu.
\newblock Deciding orthogonality in {{Construction}}-{{A}} lattices.
\newblock {\em SIAM Journal on Discrete Mathematics}, 31(2):1244--1262, January 2017.

\bibitem[CKM{\etalchar{+}}17]{Cohn2017-SpherePacking-Lambda24}
Henry Cohn, Abhinav Kumar, Stephen Miller, Danylo Radchenko, and Maryna Viazovska.
\newblock The sphere packing problem in dimension $24$.
\newblock {\em Annals of Mathematics}, 185(3), May 2017.

\bibitem[CS98]{Conway-Sloane-SPLAG}
John~H. Conway and Neil J.~A. Sloane.
\newblock {\em Sphere Packings, Lattices and Groups}.
\newblock Springer-Verlag, New York, 3 edition, 1998.

\bibitem[DG23]{conf/pkc/DucasG23}
L{\'{e}}o Ducas and Shane Gibbons.
\newblock Hull attacks on the lattice isomorphism problem.
\newblock In {\em PKC}, 2023.

\bibitem[Don83]{Donaldson1983}
S.~K. Donaldson.
\newblock An application of gauge theory to the topology of 4-manifolds.
\newblock {\em Journal of Differential Geometry}, 18:279--315, 1983.

\bibitem[DPPvW22]{conf/asiacrypt/DucasPPW22}
L{\'{e}}o Ducas, Eamonn~W. Postlethwaite, Ludo~N. Pulles, and Wessel P.~J. van Woerden.
\newblock Hawk: Module {LIP} makes lattice signatures fast, compact and simple.
\newblock In {\em ASIACRYPT}, 2022.

\bibitem[Duc24]{journals/dcc/Ducas24}
L{\'{e}}o Ducas.
\newblock Provable lattice reduction of $\mathbb{Z}^n$ with blocksize $n/2$.
\newblock {\em Designs, Codes and Cryptography}, 92(4):909--916, 2024.

\bibitem[DvW22]{conf/eurocrypt/DucasW22}
L{\'{e}}o Ducas and Wessel P.~J. van Woerden.
\newblock On the lattice isomorphism problem, quadratic forms, remarkable lattices, and cryptography.
\newblock In {\em EUROCRYPT}, 2022.

\bibitem[Elk95a]{Elkies-char-Zn}
Noam~D. Elkies.
\newblock A characterization of the $\mathbb{Z}^n$ lattice.
\newblock {\em Mathematical Research Letters}, 2:321–326, 1995.

\bibitem[Elk95b]{Elkies-long-shadows}
Noam~D. Elkies.
\newblock Lattices and codes with long shadows.
\newblock {\em Mathematical Research Letters}, 2:643--651, 1995.

\bibitem[Fre82]{Freedman-four-manifolds-82}
Michael~Hartley Freedman.
\newblock {The topology of four-dimensional manifolds}.
\newblock {\em Journal of Differential Geometry}, 17(3):357 -- 453, 1982.

\bibitem[Gab07]{Gaborit2007}
Philippe Gaborit.
\newblock A bound for certain $s$-extremal lattices and codes.
\newblock {\em Archiv der Mathematik}, 89(2):143–151, Apr 2007.

\bibitem[Gau98a]{Gaulter-Lattices-Without-Short}
Mark Gaulter.
\newblock Lattices without short characteristic vectors.
\newblock {\em Mathematical Research Letters}, 5(3):353--362, 1998.

\bibitem[Gau98b]{gaulter1998characteristic}
Mark~James Gaulter.
\newblock {\em Characteristic Vectors of Unimodular Lattices over the Integers}.
\newblock {Ph.D.} thesis, University of California, Santa Barbara, 1998.

\bibitem[Gau07]{Gaulter-S2}
Mark Gaulter.
\newblock Characteristic vectors of unimodular lattices which represent two.
\newblock {\em Journal de Théorie des Nombres de Bordeaux}, 19(2):405--414, 2007.

\bibitem[Ger04]{Gerstein04}
Larry~J. Gerstein.
\newblock Characteristic elements of unimodular $\mathbb{Z}$-lattices.
\newblock {\em Linear and Multilinear Algebra}, 52(5):381--383, 2004.

\bibitem[HR14]{havivLatticeIsomorphismProblem2014}
Ishay Haviv and Oded Regev.
\newblock On the {{Lattice Isomorphism Problem}}.
\newblock In {\em {{SODA}}}, 2014.

\bibitem[Hun24]{Hunkenschroder-ZLIP-coNP}
Christoph Hunkenschr{\"o}der.
\newblock Deciding whether a lattice has an orthonormal basis is in {co-NP}.
\newblock {\em Mathematical Programming}, 208(1--2):763--775, 2024.

\bibitem[LRvW25]{leporati2025beyond}
Alberto Leporati, Lorenzo Rovida, and Wessel van Woerden.
\newblock Beyond {LWE}: a lattice framework for homomorphic encryption.
\newblock {\em Cryptology ePrint Archive}, 2025.

\bibitem[LS17]{lenstraLatticesSymmetry2017}
Hendrik Lenstra and Alice Silverberg.
\newblock Lattices with symmetry.
\newblock {\em Journal of Cryptology}, 30(3):760--804, 2017.

\bibitem[MPPW24]{conf/eurocrypt/MureauPPW24}
Guilhem Mureau, Alice Pellet{-}Mary, Georgii Pliatsok, and Alexandre Wallet.
\newblock Cryptanalysis of rank-2 module-lip in totally real number fields.
\newblock In {\em EUROCRYPT}, 2024.

\bibitem[MV13]{journals/siamcomp/MicciancioV13}
Daniele Micciancio and Panagiotis Voulgaris.
\newblock A deterministic single exponential time algorithm for most lattice problems based on voronoi cell computations.
\newblock {\em {SIAM} Journal on Computing}, 42(3):1364--1391, 2013.

\bibitem[{Nat}25]{nist_pqc_round2_additional_signatures_2025}
{National Institute of Standards and Technology (NIST)}.
\newblock {Post-Quantum Cryptography: Additional Digital Signature Schemes --- Round 2 Additional Signatures}.
\newblock \url{https://csrc.nist.gov/projects/pqc-dig-sig/round-2-additional-signatures}, 2025.

\bibitem[NS07]{NebeSchindelar2007}
Gabriele Nebe and Kristina Schindelar.
\newblock S-extremal strongly modular lattices.
\newblock {\em Journal de Théorie des Nombres de Bordeaux}, 19(3):683--701, 2007.

\bibitem[NV03]{NebeVenkov2003}
Gabriele Nebe and Boris Venkov.
\newblock Unimodular lattices with long shadow.
\newblock {\em Journal of Number Theory}, 99(2):307--317, 2003.

\bibitem[Ple72]{pless-selforthcodes-1972}
Vera Pless.
\newblock A classification of self-orthogonal codes over {GF(2)}.
\newblock {\em Discrete Mathematics}, 3(1):209--246, 1972.

\bibitem[Sen00]{journals/tit/Sendrier00}
Nicolas Sendrier.
\newblock Finding the permutation between equivalent linear codes: The support splitting algorithm.
\newblock {\em {IEEE} Transactions on Information Theory}, 46(4):1193--1203, 2000.

\bibitem[SSV09]{journals/moc/SikiricSV09}
Mathieu~Dutour Sikiric, Achill Sch{\"{u}}rmann, and Frank Vallentin.
\newblock Complexity and algorithms for computing {Voronoi} cells of lattices.
\newblock {\em Mathematics of Computation}, 78(267):1713--1731, 2009.

\bibitem[Ste26]{PersComm-SDlat-StephensD26}
Noah Stephens{-}Davidowitz.
\newblock Personal communication, 2026.

\bibitem[Szy03]{szydloHypercubicLatticeReduction2003}
Michael Szydlo.
\newblock Hypercubic lattice reduction and analysis of {{GGH}} and {{NTRU}} signatures.
\newblock In {\em {{EUROCRYPT}}}, 2003.

\bibitem[Via17]{Viazovska2017-SpherePacking-E8}
Maryna Viazovska.
\newblock The sphere packing problem in dimension $8$.
\newblock {\em Annals of Mathematics}, 185(3), May 2017.

\bibitem[vW24]{vanWoerden-genus-asisacrypt}
Wessel van Woerden.
\newblock Dense and smooth lattices in any genus.
\newblock In {\em Asiacrypt}, 2024.

\end{thebibliography}
